\newtheorem{theorem}{Theorem}[section]
\newtheorem{lemma}[theorem]{Lemma}
\newtheorem{corollary}[theorem]{Corollary}
\newtheorem{algorithm}{Algorithm}{\itshape}{\rmfamily}
\newtheorem{observation}{Observation}
 \theoremstyle{remark}
\newtheorem{Observation}{Observation}{\itshape}{\rmfamily}
\def\section{\@startsection {section}{1}{\z@}{-3.5ex plus -1ex minus
 -.2ex}{2.3ex plus .2ex}{\large\bf}}
\def\bfm#1{\mbox{\boldmath$#1$}}
\def\0{\bfm 0}
\newcommand{\bluecomment}[1]{\textcolor{black}{\textrm{#1}}}
\DeclareMathAlphabet{\mathpzc}{OT1}{pzc}{m}{it}
\def\bfm#1{\mbox{\boldmath$#1$}}
\begin{document}


\title{\bf Approximability of the Minimum Weighted Doubly Resolving Set Problem \thanks{Research supported in part by NNSF of China under Grant No. 11222109, 11021161 and 10928102,
by 973 Project of China under Grant No. 2011CB80800,   and by CAS Program for Cross \& Cooperative Team of Science  \& Technology Innovation.}}


 \author{Xujin Chen \and Xiaodong Hu \and  Changjun Wang}

\date{Academy of Mathematics and Systems Science \\ Chinese Academy
of Sciences, Beijing 100190, China\\
${}$\\
\mailsa}

\maketitle

\begin{abstract} {Locating source of diffusion in networks is crucial for controlling and preventing epidemic   risks. It has been studied under various probabilistic models. In this paper, we study source location  from a deterministic point of view by modeling it as the minimum weighted doubly resolving set (DRS) problem, which is a strengthening of the well-known metric dimension problem.}

 \hspace{4mm}Let $G$ be a vertex weighted undirected graph on $n$ vertices. A vertex subset $S$ of $G$ is  DRS  of $G$ if for every pair of vertices $u,v$ in $G$, there exist $x,y\in S$ such that the difference of distances (in terms of number of edges) between $u$ and $x,y$ is {not equal to} the difference of distances  between $v$ and $x,y$. The minimum weighted DRS problem consists of finding a DRS in $G$  with  minimum total weight. 
 We establish $\Theta(\ln n)$ approximability of the minimum DRS problem on general graphs for both weighted and unweighted versions. This is the first work providing explicit approximation lower and upper bounds for minimum (weighted) DRS problem, which are nearly tight. Moreover, we design first known strongly polynomial time algorithms for the minimum weighted DRS problem on   general wheels and trees with additional constant $k\ge0$ edges.
\end{abstract}

\noindent{\bf Keywords:} {Source location, Doubly resolving set, Approximation  algorithms, Polynomial-time solvability, Metric dimension

\newcounter{my}
\newenvironment{mylabel}
{
\begin{list}{(\roman{my})}{
\setlength{\parsep}{-0mm}
\setlength{\labelwidth}{8mm}
\setlength{\leftmargin}{8mm}
\usecounter{my}}
}{\end{list}}

\newcounter{my2}
\newenvironment{mylabel2}
{
\begin{list}{(\alph{my2})}{
\setlength{\parsep}{-1mm} \setlength{\labelwidth}{12mm}
\setlength{\leftmargin}{14mm}
\usecounter{my2}}
}{\end{list}}

\newcounter{my3}
\newenvironment{mylabel3}
{
\begin{list}{(\alph{my3})}{
\setlength{\parsep}{-1mm}
\setlength{\labelwidth}{8mm}
\setlength{\leftmargin}{10mm}
\usecounter{my3}}
}{\end{list}}

\section{Introduction}

{Locating the source of a diffusion in  complex networks is an intriguing challenge, and finds diverse applications in controlling and preventing network epidemic risks \cite{sz11}. In particular, it is often financially and technically
impossible to observe the state of all vertices in a large-scale network, and, on the other hand, it 
  is desirable to find the location of the source (who initiates the diffusion) from measurements collected by sparsely placed {observers} \cite{ptv12}. Placing an observer at vertex $v$ incurs a cost, and the observer with a clock   can record the time at which the state of $v$ is changed (e.g., 
knowing a rumor, being infected or contaminated). 
 Typically, the time when the {single} source originates an information is unknown \cite{ptv12}. The observers  can only report  the times they receive the information, but the senders of the information (i.e., we do not know who infects whom or who influences whom) \cite{glk10}. The information is diffused from the source to any vertex through shortest paths in the network, i.e., as soon as a vertex receives the information, it sends the information to all its neighbors simultaneously, which takes one time unit. 
Our goal is to select a subset $S$ of vertices with minimum total cost {such that} 
the source can be uniquely located by the ``infected" times of vertices in $S$. This 
problem is equivalent to finding a minimum weighted {\em doubly resolving set} (DRS) in networks defined as follows.}  

 \paragraph{DRS model.}  {Networks are modeled as undirected connected graphs without parallel edges nor loops}.  Let $G=(V,E)$ be a     graph on $n\ge2$ vertices, and each vertex $v\in V$ has a nonnegative {\em weight} $w(v)$, {representing its cost}. For any $S\subseteq V$, the {\em weight} of $S$ is defined to be $w(S):=\sum_{v\in S}w(v)$. For any $u,v\in V$, we use $d_G(u,v)$ to denote the {\em distance} between $u$ and $v$ {in $G$}, i.e., the number of edges in a  shortest path between $u$ and $v$. Let $u,v,x,y$ be four distinct vertices of $G$. Following C\'aceres et al. \cite{cjmppsw07}, we say that $\{u,v\}$ {\em doubly resolves} $\{x,y\}$, or $\{u,v\}$ {\em doubly resolves} $x$ and $y$, if \[d_G(u,x)-d_G(u,y)\ne d_G(v,x)-d_G(v,y).\] Clearly $\{u,v\}$ doubly resolves $\{x,y\}$ if and only if $\{x,y\}$ doubly resolves $\{u,v\}$. {For any subsets $S,T $ of vertices,  $S$} {\em doubly resolves} $T$ if every pair of vertices in $T$ is doubly resolved by some pair of vertices in $S$. {In particular, $S$ is called} a {\em doubly resolving set} (DRS) of $G$ if $S$ doubly \mbox{resolves $V$.} Trivially, $V$ is a \mbox{DRS of $G$.} The {\em minimum weighted doubly resolving set} (MWDRS) problem is to find a DRS of $G$ that has a minimum weight {(i.e. a minimum weighted DRS of $G$)}. In the special case where all vertex weights are equal to $1$, the problem is referred to as the {\em minimum doubly resolving set} (MDRS) problem {\cite{mkkm12}},  and it concerns with the minimum cardinality ${\tt dr}(G)$ of    DRS of $G$.

 {Consider arbitrary $S\subseteq V$. It is easy to see that   $S$ fails to locate the diffusion source in $G$} at some case if and only if there exist distinct vertices $u,v\in V$ such that $S$ cannot distinguish between the case of $u$ being the source and that of $v$ being the source, i.e., $d_G(u,x)-d_G(u,y)=d_G(v,x)-d_G(v,y)$ for any $x,y\in S$; equivalently, $S$ is not a DRS of $G$. \bluecomment{(See Appendix \ref{apx:appl}.)} Hence, the MWDRS problem models exactly the problem of finding cost-effective observer {placements} for locating source, as mentioned in our opening paragraph.      

\paragraph{Related work.}
 {Epidemic diffusion and information cascade in networks has been extensively studied for decades in efforts to understand  the diffusion dynamics and its dependence on various factors, such as    network structures and infection rates. However, the inverse problem of inferring the source of diffusion based on limited observations is far less studied, and was first tackled by Shah and Zaman \cite{sz11} for identifying the source of rumor, where the rumor flows on edges according to independent exponentially distributed random times. A maximum likelihood (ML) estimator was proposed for maximizing the correct
localizing probability, and the notion of rumor-centrality was developed for approximately tracing back the source from the configuration of infected vertices at a given moment. The accuracy of estimations heavily depended on the structural properties of the networks. Shah and Zaman's model and their results on trees were   extended by Karamchandani and Franceschetti \cite{kf03}  to the case in which nodes reveal whether they have heard the rumor with independent probabilities. Along a different line, Pinto et al. \cite{ptv12} proposed  other ML estimators that perform  source detection via sparsely distributed observers who measure from which neighbors and at what time they received the information. The ML estimators were shown to be optimal for trees, and suboptimal for general networks under  the assumption that the propagation delays associated with edges are i.i.d. random variables with known Gaussian distribution. 
 In contrast to previous probabilistic model for estimating the location of the source,    we study the problem from a combinatorial optimization's point of view; our goal is to find   an  observer set of minimum cost that guarantees   deterministic  determination of the accurate location of the source, i.e., to find a minimum weighted DRS.}

 The double resolvability is a strengthening of the  well-studied resolvability, where a vertex {$x$} resolves two vertices {$u,v$} if and only if {$d_G(u,x)\ne d_G(v,x)$.} A subset $S$ of $V$ is a {\em resolving set} (RS) of $G$ if every pair of vertices is resolved by some vertex of $S$. The minimum cardinality of a RS of $G$ is known as the {\em metric dimension} ${\tt md}(G)$ of $G$, which has been extensively  studied due to its theoretical importance and diverse applications (see e.g., \cite{cjmppsw07,
cz03,elw12,hsv12} and references therein). Most literature on finding minimum resolving sets, {known as the {\em metric dimension problem},} considered the unweighted case. 
The {unweighted} problem is $NP$-hard even for planar graphs,  split graphs, bipartite graphs and bounded degree graphs \cite{dpsv12,elw12,hsv12}. On general graphs, Hauptmann et al. \cite{hsv12} showed that the unweighted problem is not approximable within $(1-\varepsilon)\ln n$ for any $\varepsilon > 0$, unless $NP\subset
 DTIME(n^{\log \log n})$; moreover, the authors \cite{hsv12} gave a $(1+o(1))\ln n $-approximation algorithm based on approximability results of the test set  problem in bioinformatics \cite{bdk05}. A lot of research efforts have been devoted to {obtaining} the exact values or upper bounds of the metric {dimensions} of special graphic classes~
 {\cite{cjmppsw07}}. Recently, Epstein et al \cite{elw12} studied the weighted version of the problem, and developed polynomial time exact algorithms for finding a minimum weighted  RS, when the underlying graph $G$ is  a cograph, a {\em $k$-edge-augmented tree} (a tree with additional $k$ edges) for constant $k\ge0$, or a (un)complete wheel.

Compared with nearly four decade research and vast {literatures} on   resolving sets (metric dimension), the study on DRS has a {relatively} short history and its results   have been very limited. The concept of DRS was introduced in 2007  by C\'aceres et al. \cite{cjmppsw07}, who proved that the minimum {RS} of the Cartesian product of graphs is tied in a strong sense to minimum DRS of the graphs:  the metric dimension of the Cartesian product of graphs $G_1$ and $G_2$ is upper bounded by   ${\tt md}(G_1)+{\tt dr}(G_2)$.
When restricted to the same graph, it is easy to see that a DRS must be a {RS}, but the reverse is not necessarily true
. Thus ${\tt md}(G)\le{\tt dr}(G)$. The ratio ${\tt dr}(G)/{\tt md}(G)$ can be arbitrarily large. This can be seen from the tree graph $G$ depicted in Fig. \ref{fg:example}. On the one hand, it is easily checked that  $\{r_1,r_2\}$ is   a {RS} of $G$, giving  ${\tt dr}(G)\le 2$. On the other hand, ${\tt md}(G)=n/2$ since $\{s_1,s_2,\ldots,s_h\}$ is the unique minimum DRS of $G$, as proved later in Lemma \ref{tree} of this paper. In view of the large gap, algorithmic study on DRS deserves good 
efforts, and it is interesting to explore the algorithmic relation between the minimum (weighted) DRS problem and \mbox{its resolving set counterpart.}

\begin{figure}[h]
\centerline{
  \includegraphics[width=6cm]{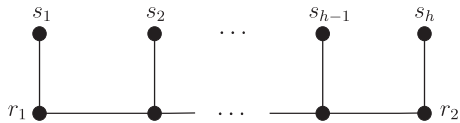}}
  \caption{The graph tree $G$ with ${\tt dr}(G)=n/2$ and ${\tt md}(G)=2$.}\label{fg:example}
\end{figure}

Previous research on DRS considered only the unweighted case. As far as general graphs are concerned, the MDRS problem has been proved to be $NP$-hard \cite{kck09}, and solved experimentally by metaheuristic approaches that {use} binary encoding and standard genetic operators \cite{kck09} and that use variable neighborhood search \cite{mkkm12}.
To date, no efficient general-purpose algorithms with theoretically provable performance guarantees have been developed for the MDRS problem, let alone the MWDRS problem. Despite the NP-hardness, the approximability status of either problem has been unknown in literature. For special graphs, it is known that every RS of {Hamming graph} 
is also a DRS \cite{kkcs12}. Recently, \v{C}angalovi\'{c} et al. showed that 
${\tt dr}(G)\in\{3,4\}$ when $G$ {is a prism graph \cite{ckks13} or} belongs to one of two classes of convex polytopes \cite{kkcs2012}.

 \paragraph{Our contributions.} {As far as we know, our opening example of cost-effective source location is the first  real-world application of  {DRS} explicitly addressed. Motivated by the application, we study and provide a thorough treatment of the MWDRS problem in terms of algorithmic approximability. Broadly speaking, we}   show that the 
 {MDRS and MWDRS  problems have similar approximability to their resolving set counterparts}.

 Based on the construction of Hauptmann et al.~\cite{hsv12}, we prove that there is an approximation preserving reduction from the minimum dominating set problem to the {MDRS} problem, showing that the MDRS problem does not admit $(1-\varepsilon)\ln n$-approximation algorithm for any $\varepsilon>0$ unless $NP\subset DTIME(n^{\log\log n})$. The strong inapproximability improves the $NP$-completeness established in \cite{kck09}. {Besides, we develop a $(\ln n+\ln\log_2n+1)$-approximation algorithm for solving the MWDRS problem in $O(n^4)$ time, based on a modified version of the approximation algorithms used in \cite{bdk05,hsv12}.} 
 To the best of our knowledge, this paper is the first work providing explicit approximation lower and upper bounds for {the MDRS and MWDRS problems}, which are nearly tight {(for large $n$)}.  {A byproduct of our algorithm gives the first logarithmic approximation for the weighted metric dimension problem on general graphs.}

 {Despite many significant technical differences between handling DRS  and RS}, we establish the polynomial time solvability of the MWDRS problem for  all \bluecomment{these} graph classes, with one exception of cographs, where the {weighted metric dimension problem} is known to admit efficient exact algorithms \cite{elw12}. Our results are first known strong polynomial time algorithms for the MDRS problem  on  $k$-edge-augmented trees  and general wheels, including paths, trees and cycles. Using the fact that every minimum weighted DRS  is {\em minimal} (with respect to the inclusion relation), our algorithms make use of the graphic properties to cleverly ``enumerate'' minimal doubly resolving sets that are potentially minimum weighted, and select the best one among them. 

 \medskip The paper is organized as follows: The inapproximability is proved in Section~\ref{sec:inapx}, The approximation algorithm for general graphs and exact algorithms for special graphs are presented in Sections \ref{sec:apx} and   \ref{sec:exact}, respectively. Future research directions are discussed in Section \ref{sec:conclude}. The omitted details are given in Appendix.

\section{Approximation lower bound}\label{sec:inapx}
 {In this section, we establish a logarithmic lower bound for approximation the MDRS problem under the assumption that $NP\not\subset DTIME(n^{\log\log n})$. {Hauptmann et al. \cite{hsv12}} constructed a reduction from the {\em minimum dominating set} (MDS) problem to the metric dimension   problem. {Although their proof does not work  for DRS,}   we show that their construction actually provides an approximation preserving reduction from the MDS problem to the MDRS problem.}

 A vertex subset $S$ of graph $G$ is a {\em dominating set} of $G$ if every vertex outside $S$ has a neighbor in $S$.   {The MDS problem is to find a dominating set of $G$ that has the minimum cardinality ${\tt ds}(G)$. Unless $NP\subset DTIME(n^{\log\log n})$, the MDS problem 
 cannot be approximated within $(1-\varepsilon)\ln n$ for any $\varepsilon>0$ \cite{cc08
 }.}

\begin{lemma}\label{lem:np1}
There exists a polynomial time  transformation  that {transfers graph $G=(V,E)$ 
to graph $G'=(V',E')$ 
such that}   ${\tt dr}(G')\le {\tt ds}(G)+\lceil\log_2 n\rceil+3$.\qed
\end{lemma}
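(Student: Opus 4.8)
\noindent
The plan is to keep the graph $G'$ constructed by Hauptmann et al.~\cite{hsv12} exactly as it is and merely re-analyse it for \emph{double} resolvability. Recall that $G'$ is obtained from $G=(V,E)$ by adjoining a set $I=\{t_1,\dots,t_k\}$ of $k=\lceil\log_2 n\rceil$ \emph{identifier} vertices, where $t_j$ is joined to precisely those $v_i\in V$ whose index $i$ has a $1$ in bit position $j$, together with a constant-size \emph{anchor} gadget $A_0$ on three vertices (one of them adjacent to all of $V$) that keeps $\mathrm{diam}(G')$ bounded and provides a fixed reference point. This graph has $|V'|=n+O(\log n)$ vertices and is clearly polynomial-time computable, so it suffices to exhibit a DRS of $G'$ of size at most ${\tt ds}(G)+\lceil\log_2 n\rceil+3$.

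Put $A:=I\cup A_0$, a set of $\lceil\log_2 n\rceil+3$ vertices disjoint from $V$, let $D$ be a minimum dominating set of $G$, and set $S:=D\cup A$, so that $|S|={\tt ds}(G)+\lceil\log_2 n\rceil+3$. The core of the argument is the claim that $S$ is a DRS of $G'$, which I would prove by a case analysis on where a pair $\{p,q\}$ of distinct vertices to be doubly resolved can lie. If $p$ or $q$ lies in $A\subseteq S$, double resolution is immediate from the engineered adjacencies of $I$ and $A_0$: e.g.\ if $p=t_j$ then $d_{G'}(p,t_j)=0<d_{G'}(q,t_j)$, so $\{t_j,a\}$ works for a suitable anchor $a$. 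If $p,q\in V$, their binary indices differ, so some $t_j\in I$ is adjacent to exactly one of them, say to $p$ (relabelling if needed); then, with $a_0\in A_0$ the anchor adjacent to all of $V$, one gets $d_{G'}(p,t_j)-d_{G'}(p,a_0)=1-1=0$ whereas $d_{G'}(q,t_j)-d_{G'}(q,a_0)=d_{G'}(q,t_j)-1\ge 1$, so $\{t_j,a_0\}$ doubly resolves $\{p,q\}$. The only pairs left are those of the form $\{v_i,g\}$ with $g$ a designated gadget vertex for which $A$ by itself fails, i.e.\ $d_{G'}(v_i,a)-d_{G'}(g,a)$ is the same for every $a\in A$; here the domination hypothesis enters: $D$ contains some $d$ in $\{v_i\}\cup N_G(v_i)$, so $d_{G'}(v_i,d)\le 1$ while $d_{G'}(g,d)\ge 2$ by the way $g$ is attached to the gadget, and since $g\in A\subseteq S$ we obtain $d_{G'}(v_i,d)-d_{G'}(g,d)<0<d_{G'}(v_i,g)-d_{G'}(g,g)$, so $\{d,g\}$ doubly resolves $\{v_i,g\}$. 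Exhausting the cases gives that $S$ is a DRS of $G'$, whence ${\tt dr}(G')\le|S|={\tt ds}(G)+\lceil\log_2 n\rceil+3$.

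The step I expect to be the main obstacle is isolating and handling this last family of hard pairs $\{v_i,g\}$. Hauptmann et al.'s original analysis only needs distinct vertices of $G'$ to receive distinct distance vectors with respect to their set, which is strictly weaker than double resolution; consequently the case analysis genuinely has to be redone rather than quoted, and one must pin down precisely which already-resolved pairs are \emph{not} doubly resolved by $A$ alone and check that exactly a closed neighbour of $v_i$ in $D$ — not, say, a vertex at distance $2$ — repairs them. Because all distances in $G'$ are $O(1)$, the slack in the difference-of-differences inequality is minimal, so this verification, together with confirming that three anchor vertices really suffice for all pairs internal to the gadget (which is what fixes the additive constant at $3$), is where the care lies; once it is done, the size bound and the polynomiality of the transformation are immediate.
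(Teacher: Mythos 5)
Your proposal does not use the construction it claims to use, and this is where it breaks down. The graph $G'$ of Hauptmann et al.\ that the paper relies on is not ``$G$ plus $\lceil\log_2 n\rceil$ identifier vertices plus a three-vertex anchor gadget'': it \emph{duplicates} every vertex $v_i$ into a pair $v_i^0,v_i^1$ (only the $v_i^1$'s carry the edges of $G$), attaches both copies identically to $d+1=\lceil\log_2 n\rceil+1$ identifier vertices $u_1^1,\dots,u_{d+1}^1$ (each with a pendant twin $u_k^0$), adds the pairs $u_a^0,u_a^1$ adjacent to all $2n$ copies and $u_b^0,u_b^1$, and a universal vertex $c$. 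This doubling is the whole point of the reduction: the pairs that the identifier/anchor vertices cannot doubly resolve are exactly the twin pairs $\{v_i^0,v_i^1\}$, since both copies are at equal distance from every gadget vertex; the dominating set enters precisely there, because $\{v_i^0,v_i^1\}$ is doubly resolved by $\{v_h^1,u_a^1\}$ only when $v_h$ dominates $v_i$ in $G$ (then $d_{G'}(v_i^0,v_h^1)-d_{G'}(v_i^1,v_h^1)=1$ while the distances to $u_a^1$ agree). In the graph you describe there is no doubling, so \emph{all} pairs inside $V$ are already handled by the identifiers and the dominating set plays no role; your ``hard pairs'' $\{v_i,g\}$ with $g$ a gadget vertex are not where domination is needed (and your own case analysis is inconsistent about them, since such $g$ lies in $A\subseteq S$, which your first case claimed was immediate). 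Note also that the lemma is only meaningful for the specific Hauptmann et al.\ construction, because the paper immediately combines it with the companion fact from that construction that any RS/DRS of $G'$ yields a dominating set of $G$ of no larger size; a different $G'$ (in particular one where $I\cup A_0$ alone is likely already a small DRS) would make the statement vacuous for the inapproximability argument.

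Beyond the misidentified construction, the part of the work you flag as routine is exactly what the paper has to do carefully and you do not: verifying double resolution for pairs internal to the gadget, e.g.\ $\{c,u_k^0\}$, $\{u_k^0,u_\ell^0\}$, $\{c,u_a^0\}$, and the pairs $\{u_k^0,v_i^j\}$, none of which is ``immediate'' because the superscript-$0$ vertices are outside $S$ and all distances are $1$ or $2$, so each pair needs a specific witness such as $\{u_a^1,u_b^1\}$ or $\{u_b^1,u_k^1\}$. The correct skeleton is: take $S=\{u_1^1,\dots,u_{d+1}^1,u_a^1,u_b^1\}\cup\{v_1^1,\dots,v_{{\tt ds}(G)}^1\}$ (size ${\tt ds}(G)+\lceil\log_2 n\rceil+3$), and run the case analysis over (a) pairs meeting $S$, (b) pairs containing $c$, (c) pairs containing some $u_k^0$, (d) pairs $\{v_i^j,v_{i'}^{j'}\}$ with $i\ne i'$ (binary positions differ, use $\{u_k^1,u_a^1\}$), and (e) the twin pairs $\{v_i^0,v_i^1\}$ (use domination). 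Your proposal supplies neither the right graph nor cases (c) and (e) in their actual form, so it has a genuine gap rather than being an alternative correct route.
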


 {Let graphs $G$ and $G'$ be as in Lemma \ref{lem:np1}. It has been shown that, given any  RS (in particular DRS) $S$ of $G'$,  a dominating set of $G$ with cardinality at most $|S|$ can be found in polynomial time \cite{hsv12}.   
{This, in combination with Lemma \ref{lem:np1}  and the logarithmic inapproximability} of the MDS problem \cite{cc08
}, gives the following lower bound for approximating minimum DRS.}
\begin{theorem}\label{th:lower}
Unless $NP\subset DTIME(n^{\log(\log n)})$, the {MDRS} problem cannot be approximated in polynomial time   within a factor of $(1-\epsilon)\ln n$, for any 
 $\epsilon>0$.\qed
\end{theorem}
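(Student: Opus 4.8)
The plan is to package the two ingredients already in place --- the transformation $G\mapsto G'$ of Lemma~\ref{lem:np1} and the polynomial-time procedure of Hauptmann et al.~\cite{hsv12} extracting from any resolving set of $G'$ a dominating set of $G$ of no larger size --- into one approximation-preserving reduction from the MDS problem, and then invoke the $(1-\epsilon)\ln n$ inapproximability of MDS \cite{cc08}. Suppose, for contradiction, that for some fixed $\epsilon>0$ there is a polynomial-time algorithm $\mathcal A$ returning, on every $N$-vertex graph $H$, a DRS of $H$ of size at most $(1-\epsilon)(\ln N)\,{\tt dr}(H)$. Given an MDS instance $G=(V,E)$ with $|V|=n$, apply Lemma~\ref{lem:np1} to obtain $G'=(V',E')$; inspecting the construction one sees $|V'|=n^{1+o(1)}$, so $\ln|V'|=(1+o(1))\ln n$. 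Run $\mathcal A$ on $G'$ to get a DRS $S$ with $|S|\le(1-\epsilon)(\ln|V'|)\,{\tt dr}(G')$, then feed $S$ (which is a resolving set of $G'$) to the procedure of \cite{hsv12} to obtain, in polynomial time, a dominating set $D$ of $G$ with
\[
|D|\ \le\ |S|\ \le\ (1-\epsilon)(\ln|V'|)\,{\tt dr}(G')\ \le\ (1-\epsilon)(1+o(1))(\ln n)\big({\tt ds}(G)+\lceil\log_2 n\rceil+3\big),
\]
the last inequality using Lemma~\ref{lem:np1}.

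The one genuine obstacle is the additive $\lceil\log_2 n\rceil+3$ term, which is harmless only when ${\tt ds}(G)$ is superlogarithmic, whereas MDS hardness holds for all $n$. I would dispose of it by padding the instance before reducing: replace $G$ with the disjoint union $G^{(t)}$ of $t:=\lceil\log_2 n\rceil^2$ copies of $G$, so that ${\tt ds}(G^{(t)})=t\,{\tt ds}(G)\ge t$ and $|V(G^{(t)})|=tn=n^{1+o(1)}$. Carrying $G^{(t)}$ through the pipeline above, the additive term is $O(\log(tn))=o(t)=o({\tt ds}(G^{(t)}))$ while the relevant logarithm is still $(1+o(1))\ln n$; since a dominating set of $G^{(t)}$ of size $m$ restricts, on its cheapest copy, to a dominating set of $G$ of size at most $m/t$, we recover a dominating set of $G$ of size at most $(1-\epsilon)(1+o(1))(\ln n)\,{\tt ds}(G)$. (The padding can be skipped if one instead verifies that the hard MDS instances behind \cite{cc08} already have ${\tt ds}(G)=\omega(\log n)$.)

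Taking $n$ large enough that $(1-\epsilon)(1+o(1))\le 1-\epsilon/2$, the pipeline is a polynomial-time $(1-\epsilon/2)(\ln n)$-approximation algorithm for MDS, contradicting \cite{cc08} unless $NP\subset DTIME(n^{\log\log n})$; this proves the theorem. The remaining points are routine bookkeeping: that $|V'|=n^{1+o(1)}$ for the construction underlying Lemma~\ref{lem:np1} (so that $\ln|V'|$ and $\ln n$ agree up to a $1+o(1)$ factor), and that both that construction and the procedure of \cite{hsv12} run in polynomial time on the still-polynomial-size padded instance.
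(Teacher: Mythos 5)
Your proposal is correct and follows essentially the same route as the paper: combine the transformation of Lemma~\ref{lem:np1} with the procedure of \cite{hsv12} that turns any resolving (hence doubly resolving) set of $G'$ into a dominating set of $G$ of no larger size, and then invoke the $(1-\epsilon)\ln n$ inapproximability of MDS from \cite{cc08}. The only addition is your padding step to absorb the additive $\lceil\log_2 n\rceil+3$ term, which is sound (dominating sets split across disjoint copies, and the construction of Lemma~\ref{lem:np1} does not need $G$ connected) and merely makes explicit bookkeeping that the paper leaves implicit.
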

\section{Approximation algorithm}\label{sec:apx}
In this section, we {present} an $O(n^4)$ time approximation algorithm for the MWDRS problem in general graphs {that achieves} approximation ratio $(1+o(1))\ln n$, {nearly matching the lower bound $\ln n$ established in Theorem \ref{th:lower}.

Our algorithm {uses similar idea to that of Hauptmann et al.} \cite{hsv12} for approximating minimum resolving sets {in the metric dimension (MD) problem. The MD problem is a direct ``projection'' of the {\em unweighted} test set problem studied by Berman et al. \cite{bdk05} in the sense that a vertex in the MD problem can be seen as a ``test" in the test set problem, which allows Hauptmann et al. to apply Berman-DasGupta-Kao algorithm \cite{bdk05} directly. However, in the DRS problem, one cannot   simply} view two vertices as a ``test'', {because such a ``test'' would fail the algorithm in some situation.} {Besides,} the algorithm deals with only unweighted cases. Thus we {need  conduct certain transformation that transforms the DRS problem to a series of weighted test set problems. Furthermore, we need modify Berman-DasGupta-Kao algorithm to solve these {\em weighted} problems within logarithmic approximation ratios.}

\paragraph{Transformation.} For any $x\in V$, let $U_x=\{\{x,v\}:v\in V\setminus \{x\}\}$. {As seen later, each element of $U_x$ can be viewed as a {\em test} or a certain combination of {\em tests} in the test set  problem studied in \cite{bdk05}. From this point of view, we call each element of $U_x$ a {\em super test}, and consider the  {\em minimum weighted  super test set} (MWSTS) problem on $(V,U_x)$} as follows: For {each super} test $T=\{x,v\}\in U_x$, let its weight  be $w(T)=w(v)$, The problem is to find
  a set of super tests $\mathcal T\subseteq U_x$  such that each pair of vertices in $G$ is doubly resolved by some super test in $\mathcal T$ and the weight $w(\mathcal T)=\sum_{T\in\mathcal T}w(T)$ of $\mathcal T$ is minimized. {The following lemma establishes the relation between the MWDRS problem and the MWSTS problem}.

 \begin{lemma}\label{lem:star}
Let $S$ be a {DRS} of $G$ and $s\in S$. Then every pair of vertices in $G$ is {doubly} resolved by at least one element of $\{\{s,v\}:v\in S \setminus\{s\}\}$.
\end{lemma}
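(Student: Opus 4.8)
The plan is to reduce the lemma to the elementary observation that the ``difference-of-distances'' map associated to a fixed pair of vertices cannot be constant on a doubly resolving set. For distinct $x,y\in V$ and any $u\in V$, introduce the shorthand $\delta_{x,y}(u):=d_G(u,x)-d_G(u,y)$. By the definition of double resolution, a pair $\{u,v\}$ doubly resolves $\{x,y\}$ exactly when $\delta_{x,y}(u)\ne\delta_{x,y}(v)$. Under this reformulation, proving the lemma amounts to showing that for every pair of distinct vertices $x,y\in V$ there exists some $v\in S\setminus\{s\}$ with $\delta_{x,y}(v)\ne\delta_{x,y}(s)$.

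First I would fix an arbitrary pair of distinct vertices $x,y\in V$. Since $S$ is a DRS of $G$, some pair $\{u_0,v_0\}\subseteq S$ doubly resolves $\{x,y\}$, i.e. $\delta_{x,y}(u_0)\ne\delta_{x,y}(v_0)$; in particular the restriction of $\delta_{x,y}$ to $S$ is not constant (and $|S|\ge2$, so $S\setminus\{s\}\ne\varnothing$). I would then note that a map which is not constant on $S$ cannot agree with its value at $s$ everywhere on $S$: if $\delta_{x,y}(v)=\delta_{x,y}(s)$ held for all $v\in S$, then $\delta_{x,y}$ would be constant on $S$, contradicting the previous sentence. Hence there is some $v\in S$ with $\delta_{x,y}(v)\ne\delta_{x,y}(s)$, and such a $v$ is necessarily different from $s$. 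Then $\{s,v\}$ belongs to $\{\{s,v\}:v\in S\setminus\{s\}\}$ and doubly resolves $\{x,y\}$; since $\{x,y\}$ was arbitrary, the lemma follows.

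I do not anticipate a serious obstacle. The only point that needs a word of care is the degenerate configuration in which $\{x,y\}$ meets $\{s,v\}$, so that $s,v,x,y$ are not four distinct vertices; this is handled by reading ``$\{s,v\}$ doubly resolves $\{x,y\}$'' through the defining inequality $\delta_{x,y}(s)\ne\delta_{x,y}(v)$, which is precisely the condition that distinguishes $x$ and $y$ as candidate sources and is the form in which double resolution is actually applied. With that reading, the argument above applies verbatim and, as a byproduct, confirms that $|S|\ge 2$ whenever $S$ is a DRS of $G$ (using $n\ge 2$), so the set $\{\{s,v\}:v\in S\setminus\{s\}\}$ in the statement is indeed nonempty.
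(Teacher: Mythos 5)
Your argument is correct and is essentially the paper's own proof: the paper fixes $s_1,s_2\in S$ with $d_G(u,s_1)-d_G(v,s_1)\ne d_G(u,s_2)-d_G(v,s_2)$ and observes that $d_G(u,s)-d_G(v,s)$ cannot equal both, which is exactly your ``non-constancy of $\delta_{x,y}$ on $S$'' phrasing. Your extra remark about pairs $\{s,v\}$ meeting $\{x,y\}$ is a harmless clarification, not a different route.
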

\begin{proof} Let $u,v$ be any two distinct vertices of $G$. There {exist} $s_1,s_2\in S $ {such that}   $d_G(u,s_1)-d_G(v,s_1)\ne d_G(u,s_2)-d_G(v,s_2)$. It follows that either $d_G(u,s_1)-d_G(v,s_1)\ne d_G(u,s)-d_G(v,s)$ or $ d_G(u,s)-d_G(v,s)\ne d_G(u,s_2)-d_G(v,s_2)$, saying that $u$ and $v$ are doubly resolved by either $\{s,s_1\}$ or $\{s,s_2\}$.
\end{proof}
Since $V$ is a  {DRS} of $G$, Lemma \ref{lem:star} implies that  {$U_x$ doubly resolves $V$}. More importantly, Lemma \ref{lem:star} provides the following immediate corollary {that is crucial to our algorithm design.}
\begin{corollary}\label{cor:bound}
Let $S^*$ be a {minimum weighted} DRS of $G$ and {$\alpha\in S^*$.  Then the} minimum weight of a solution to the MWSTS problem on $(V,U_{\alpha})$ is at most $w(S^*)-w(\alpha)$.\qed
\end{corollary}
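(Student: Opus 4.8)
The plan is to produce an explicit feasible solution to the MWSTS problem on $(V,U_\alpha)$ whose weight is exactly $w(S^*)-w(\alpha)$; optimality of the MWSTS objective then gives the claimed upper bound. The starting point is to apply Lemma \ref{lem:star} with $S:=S^*$ (a DRS of $G$, since it is in particular a DRS) and $s:=\alpha\in S^*$. That lemma tells us that every pair of vertices of $G$ is doubly resolved by at least one element of the collection
\[
\mathcal T:=\{\{\alpha,v\}:v\in S^*\setminus\{\alpha\}\}.
\]

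Next I would check that $\mathcal T$ is admissible for the MWSTS instance. By definition $U_\alpha=\{\{\alpha,v\}:v\in V\setminus\{\alpha\}\}$, and since $S^*\setminus\{\alpha\}\subseteq V\setminus\{\alpha\}$ we have $\mathcal T\subseteq U_\alpha$. Combined with the doubly-resolving property just obtained, this shows $\mathcal T$ is a feasible solution of the MWSTS problem on $(V,U_\alpha)$.

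It then remains to compute the weight of $\mathcal T$. Recall that each super test $\{\alpha,v\}$ is assigned weight $w(\{\alpha,v\})=w(v)$, so
\[
w(\mathcal T)=\sum_{T\in\mathcal T}w(T)=\sum_{v\in S^*\setminus\{\alpha\}}w(v)=w(S^*)-w(\alpha),
\]
where the last equality uses $w(S^*)=\sum_{v\in S^*}w(v)$ and $\alpha\in S^*$. Since the minimum weight of a solution to the MWSTS problem on $(V,U_\alpha)$ cannot exceed the weight of the particular feasible solution $\mathcal T$, the minimum is at most $w(S^*)-w(\alpha)$, as desired.

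There is essentially no real obstacle here: the corollary is a direct repackaging of Lemma \ref{lem:star} together with routine bookkeeping of the weight function. The only point worth a moment's care is that the pivot vertex $\alpha$ itself is not among the $v$'s used to form the super tests, which is precisely why the bound reads $w(S^*)-w(\alpha)$ rather than $w(S^*)$; this ``$-\,w(\alpha)$'' saving is exactly what will later be recovered by adding $\alpha$ back into the set returned by the algorithm.
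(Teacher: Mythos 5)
Your proposal is correct and is exactly the argument the paper intends (the corollary is stated with only a \qed because it is an immediate consequence of Lemma \ref{lem:star}): apply the lemma with $S=S^*$ and $s=\alpha$, observe that $\{\{\alpha,v\}:v\in S^*\setminus\{\alpha\}\}\subseteq U_\alpha$ is a feasible super test set, and compute its weight as $w(S^*)-w(\alpha)$. Nothing is missing.
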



\paragraph{Approximation.} {In order to solve the {MWSTS} problem, we adapt Berman-DasGupta-Kao algorithm \cite{bdk05} to augment a set $\mathcal T $ ($\subseteq U_x$)  of super tests to be a feasible solution step by step.} We define  {\em equivalence relation} $\equiv^{\mathcal T}$ on $V$   by: two vertices {$u,v\in V$} are equivalent under $\equiv^{\mathcal T}$  if and only if {$\{u,v\}$  is not doubly resolved by  any test of  $\mathcal T$.} Clearly, the number of equivalence classes is non-decreasing with the size of $\mathcal T$. Let
  $E_1,\ldots,E_k$ be the equivalence classes of $\equiv^{\mathcal T}$.  The value
    $H_{\mathcal T}:=\log_2(\prod_{i=1}^{k}|E_i|!)$ is called the {\em entropy} of $\mathcal T$. 
    {Note that
    \begin{eqnarray}\label{eqv}
    \begin{array}{rcl}
    H_{\mathcal T}=0&\Leftrightarrow&\text{every equivalent class of $\equiv^{\mathcal T}$ is a singleton}\\
    &\Leftrightarrow&\cup_{T\in \mathcal T}T\text{ is a DRS of $G$.}
    \end{array}
    \end{eqnarray}
     Hence our task is reduced to finding a set $\mathcal T$ of super tests with zero entropy $H_{\mathcal T}$ and weight $w(\mathcal T)$ as small as possible.}  

    {For any super test   $T\in U_x $}, 
   an equivalence class   of $\equiv^{\mathcal T}$ is either an equivalence class of $\equiv^{\mathcal T\cup T}$ or it is partitioned into several {(possibly more than two)} equivalence classes of $\equiv^{\mathcal T\cup T}$. {(If $T$ partitions each equivalent class into at most two equivalent classes, then $T$  works as a test in the test set problem.)} Therefore $H_{\mathcal T}\ge H_{\mathcal T\cup T}$, and $IC(T,\mathcal T):=H_{\mathcal T}-H_{\mathcal T\cup T}\ge0$ 
     equals  the decreasing amount of the entropy when adding $T$ to $\mathcal T$. It is clear that
     \begin{gather}\label{bound}
     IC(T,\emptyset)\le \log_2n!-log_21< n\log_2n.\end{gather}

We  {now give a $(1+o(1))\ln n$-approximation algorithm  for the {MWSTS} problem  on $(V,U_x)$}.  The algorithm adopts the   greedy heuristic  to decrease the entropy of the current set of super tests  \mbox{at a minimum cost (weight).}

\medskip
\hrule
\vspace{-1mm}\begin{algorithm}\label{alg1}
Finding minimum weighted set $\mathcal T$ of super sets.
\vspace{1mm}\hrule
{\small
\begin{enumerate}
  \vspace{-0mm}\item \hspace{1mm}  {$\mathcal T\leftarrow\emptyset$}
  \item \hspace{1mm} \textbf{while} $H_{\mathcal T}\neq 0$ \textbf{do}
  \item \hspace{5mm} Select a {super test} $T\in U_x-\mathcal T$ that \emph{maximizes} $\frac{IC(T,\mathcal T)}{w(T)}$
  \item \hspace{5mm} {$\mathcal T\leftarrow\mathcal T\cup T$}
  \item \hspace{1mm} \textbf{end-while}
\end{enumerate}
}
\vspace{-0mm}\hrule
\end{algorithm}

The major difference between Algorithm \ref{alg1} and the algorithms in \cite{bdk05,hsv12} is the criterion used in Step 3 for selecting $T$. It generalizes the previous   unweighted setting.
The following lemma
extends the result 
on test set \cite{bdk05} to super test set.
\begin{lemma}\label{ich}
  $IC(T,\mathcal T_0)\ge IC(T,\mathcal T_1)$ for any sets $\mathcal T_0$ and $\mathcal T_1$ of super tests with \mbox{$\mathcal T_0\subseteq\mathcal T_1$.}\qed
\end{lemma}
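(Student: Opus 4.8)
The plan is to recast the statement in terms of the partitions of $V$ induced by the relations $\equiv^{\mathcal S}$ and then reduce it to an elementary inequality between multinomial coefficients. For a set $\mathcal S$ of super tests, write $\mathcal P(\mathcal S)$ for the partition of $V$ into the equivalence classes of $\equiv^{\mathcal S}$, so that $H_{\mathcal S}=\sum_{P\in\mathcal P(\mathcal S)}\log_2|P|!$. I would first record two facts, both essentially already noted before the lemma. (i) If $\mathcal S\subseteq\mathcal S'$, then $\mathcal P(\mathcal S')$ refines $\mathcal P(\mathcal S)$: writing $f_x(u,v):=d_G(x,u)-d_G(x,v)$, a super test $\{x,a\}$ fails to doubly resolve $\{u,v\}$ precisely when $f_x(u,v)=f_a(u,v)$, and the identity $f_x(u,v)+f_x(v,z)=f_x(u,z)$ shows that each $\equiv^{\{x,a\}}$, hence $\equiv^{\mathcal S}$, is an equivalence relation whose classes are unions of classes of $\equiv^{\mathcal S'}$. (ii) Since $u\equiv^{\mathcal S\cup T}v$ iff $u\equiv^{\mathcal S}v$ and $u\equiv^{T}v$, the partition $\mathcal P(\mathcal S\cup T)$ is the common refinement of $\mathcal P(\mathcal S)$ and $\pi:=\mathcal P(\{T\})$; in particular its blocks are the nonempty intersections of blocks of $\mathcal P(\mathcal S)$ with blocks of $\pi$.

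Granting this, I would prove $IC(T,\mathcal T_0)\ge IC(T,\mathcal T_1)$ block by block over $\mathcal P(\mathcal T_0)$; no induction on $|\mathcal T_1\setminus\mathcal T_0|$ is needed. Fix a block $E$ of $\mathcal P(\mathcal T_0)$, let $C_1,\dots,C_r$ be the blocks into which $\pi$ cuts $E$, and let $D_1,\dots,D_s$ be the blocks of $\mathcal P(\mathcal T_1)$ lying inside $E$ (they lie inside $E$ by (i)). By (ii), the blocks of $\mathcal P(\mathcal T_0\cup T)$ inside $E$ are the $C_i$, and the blocks of $\mathcal P(\mathcal T_1\cup T)$ inside $E$ are the nonempty sets $C_i\cap D_j$. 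Hence the contribution of $E$ to $IC(T,\mathcal T_0)=H_{\mathcal T_0}-H_{\mathcal T_0\cup T}$ is $\log_2|E|!-\sum_i\log_2|C_i|!$, and its contribution to $IC(T,\mathcal T_1)$ is $\sum_j\log_2|D_j|!-\sum_{i,j}\log_2|C_i\cap D_j|!$; summing over all $E\in\mathcal P(\mathcal T_0)$ recovers $IC(T,\mathcal T_0)$ and $IC(T,\mathcal T_1)$ respectively. So it suffices to show, for every such $E$,
\[
\binom{|E|}{|C_1|,\dots,|C_r|}\ \ge\ \prod_{j=1}^{s}\binom{|D_j|}{\,|C_1\cap D_j|,\dots,|C_r\cap D_j|\,}.
\]

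The last inequality is just the nonnegativity of the terms in the multivariate Vandermonde identity: a coloring of $E$ by colors $1,\dots,r$ with color-class sizes $|C_1|,\dots,|C_r|$ is the same as an independent choice of such a coloring on each $D_j$ followed by gluing, so $\binom{|E|}{|C_1|,\dots,|C_r|}$ equals the sum, over all integer profiles $(n_{ij})$ with $\sum_i n_{ij}=|D_j|$ and $\sum_j n_{ij}=|C_i|$, of $\prod_{j}\binom{|D_j|}{n_{1j},\dots,n_{rj}}$, and the right-hand side above is precisely the summand for the actual profile $n_{ij}=|C_i\cap D_j|$. I expect the only slightly delicate point to be the bookkeeping in the block-by-block decomposition, in particular the appeal to (i) ensuring that every block of $\mathcal P(\mathcal T_1)$ sits inside a single block of $\mathcal P(\mathcal T_0)$; once that is in place the combinatorial core is this one-line Vandermonde estimate, which in particular subsumes the Berman--DasGupta--Kao case in which each test splits every class into at most two parts.
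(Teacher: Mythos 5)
Your proof is correct, but it follows a genuinely different route from the paper's. The paper treats the result as an extension of Berman--DasGupta--Kao: it takes their monotonicity lemma for tests that split each equivalence class into at most two parts as a black box, simulates the super test $T$ by a sequence of binary splitting operations $T_1,\dots,T_k$, and telescopes $H_{\mathcal T_0}-H_{\mathcal T_0\cup T}=\sum_j\bigl(H_{\mathcal T_0\cup T_1\cup\cdots\cup T_{j-1}}-H_{\mathcal T_0\cup T_1\cup\cdots\cup T_j}\bigr)\ge\cdots=IC(T,\mathcal T_1)$, applying the cited lemma term by term. You instead argue directly on partitions: after checking that $\equiv^{\mathcal S}$ is an equivalence relation whose partition is the common refinement of the partitions induced by the individual super tests (your facts (i) and (ii), which do hold here since $\equiv^{\{x,a\}}$ is ``same value of $d_G(x,\cdot)-d_G(a,\cdot)$''), you localize the inequality to a single block $E$ of $\mathcal P(\mathcal T_0)$ and reduce it to $\binom{|E|}{|C_1|,\dots,|C_r|}\ge\prod_j\binom{|D_j|}{|C_1\cap D_j|,\dots,|C_r\cap D_j|}$, which is one nonnegative summand of the multivariate Vandermonde expansion. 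This is sound: the block-by-block bookkeeping is justified exactly by the refinement property you flag, and empty intersections contribute $\log_2 0!=0$. What your approach buys is a self-contained, non-inductive proof that does not invoke the external lemma at all and in fact subsumes the binary-split case of \cite{bdk05}; what the paper's approach buys is brevity, since it leans on the already-published result and only has to describe the simulation of a super test by successive binary tests (a step that is somewhat informally written there, as the $T_j$ are not actual super tests in $U_x$ but abstract refinement operations). Either argument establishes the lemma as needed for Theorem \ref{th:performance}.
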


Using (\ref{bound}) and Lemma \ref{ich}, the proof of performance ratio goes almost verbatim as the argument of Berman et al. \cite{bdk05}. We include a proof in Appendix \ref{apx1} for completeness.

\begin{theorem}\label{th:performance}
Algorithm \ref{alg1} is an $O(n^3)$ time   algorithm for {the MWSTS   problem} on $(V,U_x)$ with {approximation} ratio {\mbox{$\displaystyle\ln\left( \max_{T\in U_x}IC(T,\emptyset)\right)\!+\!1\!\le\!\ln n\!+\!\ln\log_2 n\!+\!1$.$\Box$}}
\end{theorem}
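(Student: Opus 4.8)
The plan is to follow the standard greedy–set-cover analysis, adapted from Berman–DasGupta–Kao \cite{bdk05}, with the entropy function $H_{\mathcal T}$ playing the role of the ``uncovered mass'' that the greedy step drives down to zero. First I would verify correctness and termination: by \eqref{eqv}, the \textbf{while} loop halts exactly when $\cup_{T\in\mathcal T}T$ is a DRS of $G$; since $\mathcal T=U_x$ gives $\cup T=V$, a DRS (Lemma \ref{lem:star}), and each iteration strictly increases $|\mathcal T|$ whenever $H_{\mathcal T}>0$ (the selected $T$ must have $IC(T,\mathcal T)>0$, for otherwise no super test decreases the entropy, contradicting that $U_x$ doubly resolves $V$), the algorithm terminates after at most $|U_x|=n-1$ iterations, so each super test weight $w(T)=w(v)$ is paid at most once and the output is feasible.

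Next, the running time. Computing $\equiv^{\mathcal T}$, hence all $H_{\mathcal T}$ and the increments $IC(T,\mathcal T)$ over all $T\in U_x$, costs $O(n^2)$ per iteration once all pairwise distances $d_G(\cdot,\cdot)$ are precomputed (an $O(n^3)$ preprocessing, e.g.\ BFS from each vertex); with $O(n)$ iterations this is $O(n^3)$ overall, matching the claim.

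For the approximation ratio I would run the classical LP-free charging argument. Let $\mathrm{OPT}$ be the optimum weight for the MWSTS instance on $(V,U_x)$ and let $\mathcal T^{\ast}$ be an optimal solution. The key submodularity-type inequality is Lemma \ref{ich}: $IC(T,\mathcal T_0)\ge IC(T,\mathcal T_1)$ whenever $\mathcal T_0\subseteq\mathcal T_1$. Consequently, at any stage with current set $\mathcal T$, the super tests of $\mathcal T^{\ast}$ collectively reduce the entropy of $\mathcal T$ by at least $H_{\mathcal T}-H_{\mathcal T\cup\mathcal T^{\ast}}=H_{\mathcal T}$ (the latter is $0$ since $\mathcal T^{\ast}$ alone is already feasible), so by subadditivity of the increments there is some $T\in\mathcal T^{\ast}$ with $IC(T,\mathcal T)/w(T)\ge H_{\mathcal T}/\mathrm{OPT}$; the greedy choice does at least this well. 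Writing $h_0=H_{\emptyset},h_1,h_2,\ldots$ for the successive entropies and $w_i$ for the weight paid at step $i$, this yields $h_{i}-h_{i+1}\ge (h_i/\mathrm{OPT})\,w_i$, i.e.\ $h_{i+1}\le h_i(1-w_i/\mathrm{OPT})$, and a standard telescoping/averaging over the (integer-valued, since $H_{\mathcal T}=\log_2\prod|E_i|!$ need not be integral—so here I would instead charge per unit of entropy decrease, exactly as in \cite{bdk05}) shows the total weight is at most $(\ln h_0+1)\,\mathrm{OPT}$. Finally $h_0=H_{\emptyset}=\log_2 n!$ but the sharper bound comes from $\max_{T\in U_x}IC(T,\emptyset)$: the first greedy pick already drops the entropy to at most $H_{\emptyset}-\max_T IC(T,\emptyset)$, and re-running the estimate from there gives ratio $\ln(\max_{T\in U_x}IC(T,\emptyset))+1$; combined with \eqref{bound}, i.e.\ $IC(T,\emptyset)<n\log_2 n$, hence $\le \ln(n\log_2 n)+1=\ln n+\ln\log_2 n+1$ after using $\log_2 n!\le n\log_2 n$ more carefully.

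The main obstacle is the entropy-decrease accounting: unlike ordinary set cover, a single super test may split an equivalence class into \emph{more than two} parts, so one cannot treat $\cup T$ as a monotone cover and must argue directly with $H_{\mathcal T}$; the crux is establishing Lemma \ref{ich} (the diminishing-returns property of $IC$ for super tests) and then reproducing the Berman–DasGupta–Kao per-unit-entropy charging scheme in the weighted setting, where the ``cost per unit of entropy removed'' replaces ``cost per element covered.'' Since Lemma \ref{ich} and \eqref{bound} are already available, the remaining work is to transcribe that charging argument verbatim, which I would relegate to Appendix \ref{apx1} as the authors indicate.
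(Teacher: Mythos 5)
Your overall plan coincides with the paper's: both hinge on Lemma \ref{ich} and on a Berman--DasGupta--Kao-style accounting of entropy decrease, and your running-time analysis (precompute all distances in $O(n^3)$, then $O(n)$ per candidate super test, $n-1$ candidates, at most $n$ iterations) is the same as the paper's. Where you differ is in the analysis you actually sketch: you run the global potential argument ($w_j\le \mathrm{OPT}\,(h_j-h_{j+1})/h_j$, telescoping against $\int dx/x$, last step handled by $H>0\Rightarrow H\ge 1$), which correctly yields total weight at most $(\ln H_\emptyset+1)\,\mathrm{OPT}$ with $H_\emptyset=\log_2 n!<n\log_2 n$, hence the stated $\ln n+\ln\log_2 n+1$ ratio. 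The paper instead uses a per-element charging scheme: each step's weight $w(T)$ is split among the optimal super tests in proportion to how much of their residual information content $h_i=IC(T_i^*,\mathcal T_{i-1})$ is consumed, and an induction (using Lemma \ref{0t1}) shows each $T_i^*$ is charged at most $(1+\ln IC(T_i^*,\emptyset))\,w(T_i^*)$. That finer bookkeeping is exactly what produces the sharper form $\ln\bigl(\max_{T\in U_x}IC(T,\emptyset)\bigr)+1$ in the theorem statement, which your global argument does not give.

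The one genuine flaw is your shortcut for recovering that sharper form: ``the first greedy pick drops the entropy to at most $H_\emptyset-\max_T IC(T,\emptyset)$, re-run the estimate from there.'' This does not work. The residual entropy after one step need not be bounded by $\max_T IC(T,\emptyset)$ (it is typically still of order $\log_2 n!$, which can exceed $\max_T IC(T,\emptyset)$ by a factor of roughly $n$), and even if it were, the first pick's weight would add an extra $\mathrm{OPT}$ term, giving $\ln(\cdot)+2$ rather than $\ln(\cdot)+1$. To get the per-test bound you must charge each $T_i^*$ separately against its own initial value $h_i^*=IC(T_i^*,\emptyset)$, as the paper does. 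Since you explicitly say you would fall back on transcribing the weighted BDK charging verbatim, your proposal is acceptable as a route to the $\ln n+\ln\log_2 n+1$ bound, but the sharper expression in the theorem requires the paper's charging argument, not your re-running trick.
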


Suppose that given the MWSTS   problem on $(V,U_x)$, Algorithm \ref{alg1}  outputs a super test set $\mathcal T_x$. By (\ref{eqv}), 
  Running Algorithm \ref{alg1} for $n$ times, we obtain $n$ {doubly resolving  sets} $S_x$, $x\in V$ of $G$, from which we select the one, say $S_v$, that has the minimum weight,  
i.e. $w(S_v)=\min\{w(S_x):x\in V\}$.
\begin{theorem}
The {MWDRS} problem can be approximated in $O(n^4)$ time within a ratio \[\displaystyle\ln\left( \max_{u,v\in V}IC(\{u,v\},\emptyset)\right)+1\le \ln n+\ln\log_2n+1=(1+o(1))\ln n.\]
\end{theorem}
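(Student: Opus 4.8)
The plan is to lift the single-center guarantee of Theorem~\ref{th:performance} to a guarantee for the whole procedure, using Corollary~\ref{cor:bound} together with the fact that Algorithm~\ref{alg1} is run with every possible center $x\in V$ and the cheapest resulting doubly resolving set is returned.

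First I would fix a minimum weighted DRS $S^*$ of $G$ and choose an arbitrary vertex $\alpha\in S^*$ (possible since $S^*\neq\emptyset$). Consider the run of Algorithm~\ref{alg1} on the MWSTS instance $(V,U_\alpha)$; let $\mathcal T_\alpha$ be its output and $S_\alpha:=\cup_{T\in\mathcal T_\alpha}T$. The while-loop of Algorithm~\ref{alg1} halts only when $H_{\mathcal T_\alpha}=0$, so by~(\ref{eqv}) the set $S_\alpha$ is a DRS of $G$, i.e.\ a feasible solution. By Corollary~\ref{cor:bound}, the optimal weight of the MWSTS instance $(V,U_\alpha)$ is at most $w(S^*)-w(\alpha)$, so Theorem~\ref{th:performance} yields
\[
w(\mathcal T_\alpha)\ \le\ \Big(\ln\big(\max\nolimits_{T\in U_\alpha}IC(T,\emptyset)\big)+1\Big)\big(w(S^*)-w(\alpha)\big)\ \le\ \rho\,\big(w(S^*)-w(\alpha)\big),
\]
where $\rho:=\ln\big(\max_{u,v\in V}IC(\{u,v\},\emptyset)\big)+1$ and the last inequality holds because $U_\alpha\subseteq\{\{u,v\}:u,v\in V\}$. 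Since each super test $\{\alpha,v\}\in U_\alpha$ has weight $w(v)$ and the members of $\mathcal T_\alpha$ are distinct pairs through $\alpha$, the vertex set $S_\alpha$ equals $\{\alpha\}$ together with the ``partners'' of $\alpha$ used by $\mathcal T_\alpha$, and hence $w(S_\alpha)=w(\alpha)+w(\mathcal T_\alpha)$.

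Next I would convert this into the claimed ratio. Using $\rho\ge1$ and nonnegativity of weights,
\[
w(S_\alpha)=w(\alpha)+w(\mathcal T_\alpha)\ \le\ w(\alpha)+\rho\big(w(S^*)-w(\alpha)\big)=\rho\,w(S^*)-(\rho-1)w(\alpha)\ \le\ \rho\,w(S^*).
\]
As the returned set $S_v$ has minimum weight among $\{S_x:x\in V\}$ and $\alpha\in V$, we conclude $w(S_v)\le w(S_\alpha)\le\rho\,w(S^*)$. Finally, (\ref{bound}) gives $IC(\{u,v\},\emptyset)<n\log_2n$ for all $u,v\in V$, whence $\rho<\ln n+\ln\log_2 n+1$; since $\ln\log_2 n+1=o(\ln n)$, the ratio is $(1+o(1))\ln n$. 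For the running time, one invocation of Algorithm~\ref{alg1} costs $O(n^3)$ by Theorem~\ref{th:performance}; performing one invocation per vertex plus $O(n^2)$ time to evaluate and compare the $n$ candidate weights gives an $O(n^4)$ bound overall.

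None of the steps is deep, so I do not anticipate a real obstacle; the two points that deserve a line of care are the identity $w(S_\alpha)=w(\alpha)+w(\mathcal T_\alpha)$ (no vertex is double counted, because $U_\alpha$ consists of distinct pairs each avoiding $\alpha$, and $\alpha$ itself is added exactly once) and the elementary estimate $w(\alpha)+\rho(w(S^*)-w(\alpha))\le\rho\,w(S^*)$, which genuinely uses $\rho\ge1$ and would fail without it when $w(\alpha)>0$.
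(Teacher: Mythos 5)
Your proposal is correct and follows essentially the same route as the paper: fix $\alpha\in S^*$, combine Corollary~\ref{cor:bound} with the per-instance guarantee of Theorem~\ref{th:performance}, use $w(S_\alpha)=w(\alpha)+w(\mathcal T_\alpha)$ and the minimality of $S_v$, and account for $n$ runs of the $O(n^3)$ algorithm. Your extra care about the identity $w(S_\alpha)=w(\alpha)+w(\mathcal T_\alpha)$ and the use of $\rho\ge 1$ only makes explicit what the paper's chain of inequalities does implicitly.
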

\begin{proof}
Let $S^*$ be an optimal solution to the MWDRS problem. It suffices to show $w(S_v)/w(S^*)\le (1+o(1))\ln n$. Take $\alpha\in S^*$, and let $\mathcal T^*_{\alpha}$ be an optimal solution to the MWSTS problem on $(V,U_{\alpha})$. It follows from {the} choice of $S_v$, Theorem \ref{th:performance} and Corollary \ref{cor:bound} that $w(S_v)\le w(S_{\alpha})=w(\alpha)+w(\mathcal T_{\alpha})\le w(\alpha)+(\ln n+\ln\log_2n+1)  w(\mathcal T_{\alpha}^*)<(\ln n+\ln\log_2n+1) w(S^*)$.  
\end{proof}

Our algorithm and analysis show that the {algorithm} 
of   \cite{bdk05} can be extended to solve the weighted test set problem, where each test has a nonnegative weight, by changing the selection criterion to be maximizing $IC(T,\mathcal T)$ divided by the weight of $T$. A similar extension applied to the algorithm of Hauptmann et al.~\cite{hsv12} gives a $(1+o(1))\ln n$-approximate solution to the {weighted metric dimension} problem.

\section{Exact algorithms}\label{sec:exact}
{Let $k\ge0$ be a constant. A connected graph is called a {{\em $k$-edge-augmented tree}} if the removal of  {at most $k$ edges} from the graph leaves a spanning tree. Trees and cycles are 0-edge- and {1-edge-augmented} trees, respectively. We design efficient algorithms for solving the MWDRS problem {exactly} on $k$-edge-augmented trees.}
 Our algorithms run in linear time for $k=0,1$, and 
in $O(n^{12k})$ time for $k\ge2$.

{A graph is called a general wheel if it is formed from a cycle by adding a vertex and joining it to some (not necessarily all) vertices on the cycle. We solve the MWDRS problem
on general wheels   in cubic time by dynamic programming.}



\subsection{$k$-edge-augmented trees} Let $G=(V,E)$ be a $k$-edge-augmented tree, and let $L$  be the set of leaves (degree 1 vertices) in $G$. For simplicity, we often use $d(u,v)$ instead of $d_G(u,v)$ to denote the distance between vertices $u,v\in V$ in the underlying graph $G$ of the MWDRS problem.

\subsubsection{Trees: the case of $k=0$.}
When $k=0$, graph $G=(V,E)$ is a tree.  {There is a fundamental difference between DRS and RS of $G$ in terms of minimal sets. In general, $G$ may have multiple minimal RSs and even multiple minimum weighted  RSs. Nevertheless, in any case $G$}   has only one {minimal} DRS, {which consists of all its leaves}.  In particular, we have ${\tt dr}(G)=|L|$.
\begin{lemma}\label{tree}
$L$ is the unique minimal DRS of $G$.
\end{lemma}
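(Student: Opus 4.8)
The plan is to prove the two assertions of Lemma \ref{tree} separately: (i) $L$ is a DRS of $G$, and (ii) every minimal DRS of $G$ equals $L$. For (i), I would fix two distinct vertices $u,v\in V$ and exhibit two leaves that doubly resolve them. Recall that in a tree there is a unique path between any two vertices. I expect the key structural fact to be: for any vertex $z$ on the unique $u$--$v$ path, one has $d(u,z)+d(z,v)=d(u,v)$, and moving along a path changes distances to a fixed vertex by $\pm1$ per step. A natural choice is to take a leaf $x$ that ``lies beyond $u$'' (i.e., $u$ is on the $x$--$v$ path) — then $d(x,v)-d(x,u)=d(u,v)>0$ — and a leaf $y$ that lies beyond $v$, giving $d(y,u)-d(y,v)=d(u,v)>0$; I would then check $d(u,x)-d(u,y)\ne d(v,x)-d(v,y)$ by a short computation using these two identities, treating the (possibly degenerate) cases where $u$ or $v$ is itself a leaf. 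The cleanest way is probably: root the path at $u$, pick any leaf $x$ in the component of $G-e$ containing $u$ where $e$ is the first edge of the $u$--$v$ path (if $u$ is a leaf, take $x=u$ is not allowed since we need distinct vertices, so instead compare via a leaf on the $v$-side and one other leaf); I will need to handle small cases, but the inequality should reduce to something like $d(u,v)\ne -d(u,v)$ or $2d(u,x')\ne 0$.

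For (ii), the first step is to show $L$ is contained in every DRS, hence every minimal DRS contains $L$; combined with (i) and minimality this forces equality. To see $L\subseteq S$ for any DRS $S$, I would take a leaf $\ell\in L$ with unique neighbor $p$, and suppose for contradiction $\ell\notin S$. Then every $s\in S$ satisfies $d(s,\ell)=d(s,p)+1$ (the path from any vertex other than $\ell$ to $\ell$ must pass through $p$). Now consider the pair $\{\ell,p\}$: for any $x,y\in S$ we get $d(\ell,x)-d(\ell,y)=(d(p,x)+1)-(d(p,y)+1)=d(p,x)-d(p,y)$, so no pair from $S$ doubly resolves $\{\ell,p\}$, contradicting that $S$ is a DRS. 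Therefore $\ell\in S$, and since $\ell$ was arbitrary, $L\subseteq S$.

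Putting the pieces together: by (i), $L$ is a DRS; by the argument in (ii), any DRS contains $L$; hence $L$ is the unique minimal DRS (any minimal DRS contains $L$ and is contained in the DRS $L$, so equals $L$), and in particular ${\tt dr}(G)=|L|$ when weights are unit. The main obstacle I anticipate is part (i): verifying that two suitably chosen leaves doubly resolve an arbitrary pair $\{u,v\}$ requires a careful case analysis depending on the relative position of $u$, $v$, and the chosen leaves on their connecting paths in the tree, and one must be sure the two chosen leaves are distinct from each other and that the selected pair genuinely witnesses the strict inequality in all configurations (including when $u$ or $v$ already is a leaf). Part (ii) is comparatively routine once the ``distance through the neighbor'' observation is in hand.
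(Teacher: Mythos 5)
Your proposal is correct and follows essentially the same route as the paper: leaves are forced into every DRS via the neighbor-distance identity $d(\ell,x)=d(p,x)+1$, and an arbitrary pair $\{u,v\}$ is doubly resolved by a leaf beyond $u$ and a leaf beyond $v$ (i.e., leaves whose connecting path passes through both), where the computation reduces to a difference of $2d(u,v)\ne 0$. The degenerate case that worries you (when $u$ or $v$ is itself a leaf, so the resolving leaf coincides with it) is present in the paper's proof too and is unproblematic, since the paper elsewhere explicitly uses resolving pairs that overlap the resolved pair.
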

\begin{proof}
For any two vertices $u,v\in V$, there exist leaves $l_1,l_2\in L$ such that  the path between $l_1$ and $l_2$ goes through $u$ and $v$. It is easy to see that 
$d(u,l_1)-d(u,l_2)\neq d(v,l_1)-d(v,l_2)$. So $L$ is a DRS. On the other hand, consider any leaf $l\in L$ and its neighbor $p\in V$. Since $d(l,v)-d(p,v)=1$ for any $v\in V-\{l\}$, we see that each DRS of $G$ contains $l$, and thus $L$. The conclusion follows.
\end{proof}

\subsubsection{Cycles: a special case of  $k=1$.}\label{cyc}
Let $G=v_1v_2\cdots v_nv_1$ be a cycle, where  $V=\{v_1,v_2,\ldots,v_n\}$. Suppose without loss of generality that $w(v_p)=\min_{i=1}^nw(v_i)$, {where $p:=\lceil n/2\rceil$.} 
{It was known  that any pair of vertices whose distance is not exactly $n/2$ is a minimal RS of $G$, and vice versa \cite{elw12}.}   {As the next lemma shows, the characterization of   DRS turns out to be more complex.}
{Each nonempty subset  $S$ of $V$ cuts $G$ into} 
a set $\mathcal P_S$ of edge-disjoint paths such that they are internally disjoint from $S$ and their union is $G$.

\begin{lemma}\label{n/2}
{Given a cycle $G=(V,E)$, let $S$ be a {nonempty} subset of  $V$.  Then $S$ is a DRS of $G$ if and only if no path in $\mathcal P_S$ has length longer than $\lceil n/2\rceil$  and at least one path in $\mathcal P_S$ has length shorter than $n/2$.\qed}
\end{lemma}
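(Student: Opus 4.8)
The plan is to characterize, for a cycle $G=v_1v_2\cdots v_nv_1$ and a nonempty $S\subseteq V$, exactly when a pair $\{u,v\}$ is \emph{not} doubly resolved by $S$, and then show that the stated two conditions on $\mathcal P_S$ are equivalent to the absence of such a bad pair. First I would set up coordinates: for $u,v\in V$, the quantity $d(u,s)-d(v,s)$ as $s$ ranges over $V$ takes only a limited set of values because on a cycle $d(u,s)=\min(|u-s|, n-|u-s|)$. The key observation is that $d(u,s)-d(v,s)$ is constant on each ``side'' relative to $u$ and $v$: if we delete $u$ and $v$, the cycle splits into two arcs $A_1, A_2$, and along the arc $A_i$ the difference $d(u,s)-d(v,s)$ is determined by which arc $s$ lies in and by whether a shortest $u$--$s$ (or $v$--$s$) path wraps around — so generically it takes one value on (most of) $A_1$ and the negative of that value on (most of) $A_2$, with possible exceptional behaviour only when $d(u,v)=n/2$ or near the ``antipodal'' region. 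So $\{u,v\}$ is doubly resolved by $S$ iff $S$ meets both arcs $A_1$ and $A_2$ in points where the difference function takes distinct values; $\{u,v\}$ is \emph{not} doubly resolved iff $S$ is entirely contained in one arc, or $S$ sits in a region where the difference is globally constant.

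Next I would translate ``$S$ lies inside one of the two arcs determined by $u,v$'' into the language of $\mathcal P_S$: this happens precisely when $u$ and $v$ both lie on a single path $P\in\mathcal P_S$ (the complement arc containing neither $S$-vertex strictly between them is exactly such a path, possibly together with a second exceptional configuration). Working out the distances when $u,v$ lie on one path $P\in\mathcal P_S$: write $\ell$ for the length of $P$ and place $u,v$ at internal positions; then for every $s\in S$ one reaches $s$ from both $u$ and $v$ by leaving $P$ through one of its two endpoints (since $S$ is disjoint from the interior of $P$), and a short computation shows $\{u,v\}$ fails to be doubly resolved exactly when $\ell\ge \lceil n/2\rceil$ — intuitively, the two endpoints of $P$ are ``too far apart'' for the detour distances to differ. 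Conversely the second failure mode — $S$ contained in a region of constant difference when $u,v$ are \emph{not} on a common path of $\mathcal P_S$ — I would show occurs iff \emph{every} path of $\mathcal P_S$ has length exactly $n/2$ (forcing $n$ even, $|S|=2$, and the two $S$-vertices antipodal), which is exactly the negation of ``at least one path has length shorter than $n/2$''. I expect the bookkeeping here to be the main obstacle: one must carefully handle the wrap-around in the cycle metric and the boundary case $d(u,v)=n/2$, and verify that no \emph{other} bad pair exists when both stated conditions hold.

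Assembling the two directions: for the ``only if'' direction, assume $S$ is a DRS; if some $P\in\mathcal P_S$ had length $>\lceil n/2\rceil$, pick $u,v$ interior to $P$ realizing the failure above, contradiction; if every path had length $\ge n/2$, then (since the lengths sum to $n$ and there are $|\mathcal P_S|=|S|\ge 1$ of them — here I would note $|S|\ge 2$ is forced) we get exactly two paths each of length $n/2$, and the antipodal pair of ``midpoints'' of these two paths is not doubly resolved, contradiction. For the ``if'' direction, assume no path is longer than $\lceil n/2\rceil$ and some path is shorter than $n/2$; take an arbitrary pair $\{u,v\}$ and split into cases according to whether $u,v$ lie on a common path of $\mathcal P_S$ or not, and in each case exhibit two vertices of $S$ doubly resolving them, using the length bound in the first case and the ``some short path'' hypothesis in the second. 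Finally I would record the small-case check $|S|=1$ (a single vertex is never a DRS of a cycle with $n\ge 2$, consistent with the lemma since then $\mathcal P_S$ is a single path of length $n>\lceil n/2\rceil$), completing the proof.
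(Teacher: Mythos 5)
There is a genuine gap at the heart of your plan: the claimed ``short computation'' --- that for $u,v$ placed at internal positions of a path $P\in\mathcal P_S$ of length $\ell$, the pair fails to be doubly resolved exactly when $\ell\ge\lceil n/2\rceil$ --- is false in both directions, and precisely at the threshold the lemma is about. First, a path of length exactly $\lceil n/2\rceil$ creates no unresolved pair among its vertices: its two endpoints (which lie in $S$) already doubly resolve every pair on it. Your criterion would therefore wrongly reject, e.g., a two-element $S$ on an odd cycle whose two paths have lengths $(n+1)/2$ and $(n-1)/2$, which the lemma (and the remark after it that ${\tt dr}(G)=2$ for odd $n$) asserts is a DRS; so your characterization contradicts the statement you are trying to prove. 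Second, in the minimal violating case $\ell=\lceil n/2\rceil+1$ there need not exist any failing pair with \emph{both} vertices interior to $P$: take $n=8$ and $P=p\,w_1w_2w_3w_4\,q$ of length $5$ (complement arc $q\,x_1x_2\,p$). Every pair of interior vertices $w_i,w_j$ is doubly resolved by $\{p,q\}$, yet $S$ is not a DRS because $d(s,w_1)-d(s,p)=1$ for every $s\in S$. Hence your ``only if'' step ``pick $u,v$ interior to $P$ realizing the failure'' cannot be carried out; the witness must in general include an endpoint of $P$, i.e.\ a vertex of $S$ itself. This is exactly what the paper's proof does (it takes the long path's end $v_1\in S$ and its neighbour $v_2$ and shows $d(s,v_2)-d(s,v_1)=1$ for all $s\in S$), which is why its computation is tight at ``longer than $\lceil n/2\rceil$''.

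A secondary weakness: in the ``if'' direction you dismiss the case where $u,v$ do not lie on a common path of $\mathcal P_S$ via the ``constant-difference region'' picture, but that is where the real case analysis sits. The paper splits on how many vertices of $S$ lie on a shortest $u$--$v$ path; the delicate subcase is exactly one such vertex $v_i$, where the endpoints $v_i,v_j$ of the path of $\mathcal P_S$ containing $u$ may fail, and one must show this forces $n=d(u,v_i)+d(u,v_j)+d(v,v_i)+d(v,v_j)$, hence $n$ even, and then invoke a third vertex of $S$ whose existence comes from the hypothesis that some path is shorter than $n/2$. Your identification of the antipodal/mirror-pair failure when all paths have length $n/2$ is correct, but without the corrected threshold computation, an endpoint-included witness in the ``only if'' direction, and the above subcase in the ``if'' direction, the proof as proposed does not close.
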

An instant corollary  reads:
The size of a minimal DRS of cycle $G$ is $2$ or $3$ when $n$ is odd, and is $3$ when $n$ is even; In particular, ${\tt dr}(G)=2$ when $n$ is odd, and ${\tt dr}(G)=3$ when $n$ is even. {These properties together with the next one lead to our algorithm for solving the MWDRS problem on cycles.}
\begin{corollary}\label{cor:vp}
{If some minimum weighted DRS has cardinality $3$, then there exists a minimum weighted DRS  of $G$ that contains vertex $v_p$.\qed}
\end{corollary}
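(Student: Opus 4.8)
The plan is to start from an arbitrary minimum weighted DRS $S$ of $G$ with $|S|=3$, say $S=\{v_a,v_b,v_c\}$ listed in cyclic order, and transform it into another minimum weighted DRS that contains $v_p$ without increasing the weight. By Lemma \ref{n/2}, the three arcs of $\mathcal P_S$ determined by $v_a,v_b,v_c$ all have length at most $\lceil n/2\rceil$, and at least one of them has length strictly less than $n/2$. The key observation is that the three arc lengths sum to $n$, so at least two of the arcs have length at most $\lfloor n/2\rfloor < \lceil n/2\rceil+1$; in fact, since the longest arc has length at most $\lceil n/2\rceil$, the sum of the other two arcs is at least $n-\lceil n/2\rceil=\lfloor n/2\rfloor$, which already forces considerable slack. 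The idea is to use this slack to slide one endpoint of $S$ to the position $v_p$.

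First I would handle the easy case: if $v_p\in S$ there is nothing to prove, so assume $v_p\notin S$. Then $v_p$ lies in the interior of exactly one arc $P\in\mathcal P_S$, whose endpoints are two of $v_a,v_b,v_c$; call them $v_i$ and $v_j$. The plan is to replace one of $v_i,v_j$ by $v_p$, i.e. consider $S':=(S\setminus\{v_i\})\cup\{v_p\}$ or $S'':=(S\setminus\{v_j\})\cup\{v_p\}$. Since $w(v_p)=\min_{t}w(v_t)$, we have $w(S')\le w(S)$ and $w(S'')\le w(S)$, so it suffices to show that at least one of $S',S''$ is again a DRS of $G$ — then it is automatically a minimum weighted DRS, and it contains $v_p$, completing the proof. (One should also note $S'$ and $S''$ genuinely have cardinality $3$, since $v_p\ne v_i,v_j$ and $v_p$ differs from the third vertex of $S$ as well, the third vertex not being an endpoint of $P$.)

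The main work is verifying the DRS condition for $S'$ or $S''$ via Lemma \ref{n/2}. Moving the endpoint from $v_i$ (resp. $v_j$) to $v_p$ splits the arc $P$ into its two sub-arcs $P_1$ (from $v_i$ to $v_p$) and $P_2$ (from $v_p$ to $v_j$), and merges one of them with the neighboring old arc on the other side of $v_i$ (resp. $v_j$). So in $\mathcal P_{S'}$ the arcs are: $P_1$, and $P_2$ merged with the old arc $Q_j$ incident to $v_j$ on the far side; symmetrically for $\mathcal P_{S''}$. I would argue that we can choose which endpoint to move so that no merged arc exceeds $\lceil n/2\rceil$: the two candidate merged arcs are $P_2\cup Q_j$ and $P_1\cup Q_i$, and since $|P_1|+|P_2|=|P|\le\lceil n/2\rceil$ and $|Q_i|,|Q_j|\le\lceil n/2\rceil$ with $|P|+|Q_i|+|Q_j|=n$, a short case analysis (splitting on the parity of $n$ and on which of $|Q_i|,|Q_j|$ is larger, using that $v_p=v_{\lceil n/2\rceil}$ is the "antipode" of $v_1$ so $P$ is the arc straddling the midpoint) shows one of the two merges stays within the bound $\lceil n/2\rceil$; the "some arc shorter than $n/2$" requirement is then inherited because $P_1$ or $P_2$ is a short arc, or one of the untouched arcs already was. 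The step I expect to be the main obstacle is precisely this case analysis: pinning down that the position $v_p=v_{\lceil n/2\rceil}$ (chosen as a minimum-weight vertex but, by the stated choice, also the near-antipode of $v_1$) sits in the right arc and that the parity bookkeeping for even versus odd $n$ works out in all configurations — this is where Lemma \ref{n/2}'s asymmetric bounds ($\le\lceil n/2\rceil$ on one side, $<n/2$ on the other) must be handled carefully rather than routinely.
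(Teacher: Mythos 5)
Your approach is the same as the paper's: since $w(v_p)$ is minimum, swap $v_p$ for one of the two endpoints of the arc of $\mathcal P_S$ containing it and invoke Lemma \ref{n/2}; the paper's own proof is exactly this exchange argument. The case analysis you flag as the main obstacle actually closes in one line: the two candidate merged arcs $P_1\cup Q_i$ and $P_2\cup Q_j$ have lengths summing to $n$, so the shorter one has length at most $\lfloor n/2\rfloor\le\lceil n/2\rceil$, while the retained sub-arc ($P_2$ or $P_1$, respectively) has length at most $|P|-1\le\lceil n/2\rceil-1<n/2$, so the ``some arc shorter than $n/2$'' requirement holds automatically and no parity split is needed. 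One caution: your appeal to $v_p=v_{\lceil n/2\rceil}$ being the ``antipode'' of $v_1$, so that $P$ ``straddles the midpoint,'' is a red herring---the index $p$ is merely the label given to a minimum-weight vertex after relabeling, its position relative to $S$ is arbitrary, and the argument neither uses nor needs it; only the minimality of $w(v_p)$ matters.
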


\medskip
\hrule
\vspace{-1.5mm}\begin{algorithm}\label{alg2} {Finding minimum weighted DRS $S$ in cycle $G$.}
\vspace{1mm}\hrule
{\small
\begin{enumerate}
\vspace{-1mm}  \item $\omega\leftarrow w(v_1)$, $i[1]\leftarrow 1$, $j\leftarrow 1$, $W\leftarrow w(V)$
  \item {\bf for} $h=1$ {\bf to} $p$ {\bf do}
  \item \hspace{3mm}{\bf if} $w(v_h)<\omega$ {\bf then} $j\leftarrow j+1$, $i[j]\leftarrow h$, $\omega\leftarrow w(v_h)$
  \item {\bf end-for}
\item {\bf if} $j>1$ {\bf then} $k\leftarrow j$ {\bf  else} $k\leftarrow2$, $i[k]\leftarrow p$
\item {\bf if} $n$ is odd  {\bf then} $S\leftarrow \arg\min_{i=1}^nw(\{v_i,v_{i+p-1}\})$, $W\leftarrow w(S)$
\item {\bf for} $j=1$ {{\bf to}} $k-1$ {\bf do}
\item \hspace{3mm} {let $u_j$ be a vertex in $ \{v_h: i[j]+p\!\le\! h\!\le\! i[j+1]+p\}$ with \mbox{$w(u_j)=\min_{h=i[j]+p}^{i[j+1]+p}w(v_h)$}} 
\item \hspace{3mm} {\bf if} $w(v_p)+w(v_{i[j]})+w(u_j) <W$ {\bf then} $S\leftarrow \{v_p,v_{i[j]},u_j\}$, {$W\leftarrow w(S)$}
\item {\bf end-for}
\end{enumerate}
}
\vspace{-1.5mm}\hrule
\end{algorithm}
Note that $V_j:=\{v_h: i[j]+p\le h\le i[j+1]+p\}$, $j=1,\ldots,k-1$ induce $k-1$  internally disjoint paths in $G$. {It is thus clear that}  Algorithm \ref{alg2} runs in $O(n)$ time. The vertices indices  $1=i[1]< i[2]<\cdots< i[k]=p$ found by the algorithm satisfy  $w(v_{i[j]})=\min_{h=i[j]}^{i[j+1]-1}w(v_h)=\min_{h=1}^{i[j+1]-1}w(v_h)$ for every $j=1,\ldots,k-1$ and $w(v_{i[k]})=w(v_p)=\min_{h=1}^{p}w(v_h)$. Moreover, either $w(v_1)=w(v_p)$ and $k=2$, or $w(v_{i[j]})> w(v_{i[j+1]})$ for every $j=1,\ldots,k-1$. {These facts together with the properties mentioned above verify the correctness of the algorithm.}

\begin{theorem}\label{th:cycle}
{Algorithm \ref{alg2} finds in  $O(n)$  time a minimum weighted DRS of cycle $G$.}\qed
\end{theorem}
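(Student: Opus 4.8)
\textbf{Proof proposal for Theorem~\ref{th:cycle}.}
The plan is to verify that Algorithm~\ref{alg2} correctly finds a minimum weighted DRS of the cycle $G=v_1v_2\cdots v_nv_1$, using Lemma~\ref{n/2} and Corollary~\ref{cor:vp} together with the structural facts already collected after the algorithm. First I would recall from Lemma~\ref{n/2} (via the instant corollary) that a minimal DRS of $G$ has size $2$ only when $n$ is odd, in which case it must be a pair $\{v_i,v_{i+p-1}\}$ at distance $p-1=\lceil n/2\rceil-1<n/2$ (the only way two vertices can cut $G$ into two paths, one of length $\le\lceil n/2\rceil$ and one of length $<n/2$); otherwise a minimum weighted DRS has size $3$. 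Thus an optimal solution is either such a ``distance-$(p-1)$ pair'' (possible only for odd $n$) or a triple. Step~6 of the algorithm examines, in $O(n)$ time, all $n$ candidate pairs when $n$ is odd and records the lightest; so it suffices to show the triple case is handled optimally by Steps~7--10.

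For the triple case, Corollary~\ref{cor:vp} lets us assume the optimal DRS contains $v_p$, the globally lightest vertex. So we must choose the two remaining vertices $\{a,b\}$ to minimize $w(a)+w(b)$ subject to $\{v_p,a,b\}$ being a DRS. By Lemma~\ref{n/2}, $\{v_p,a,b\}$ is a DRS iff the three arcs it cuts $G$ into all have length $\le\lceil n/2\rceil$ and at least one has length $<n/2$. I would argue that, given $v_p$ is on the cycle, the binding constraint is essentially that neither $a$ nor $b$ may be ``too far'' around from $v_p$ in a way that leaves an arc longer than $\lceil n/2\rceil$; concretely, writing the cycle so $v_p$ splits it into the two half-arcs $v_p v_{p+1}\cdots$ and $v_p v_{p-1}\cdots$, one shows the feasible placements of $(a,b)$ are exactly those where $a$ lies in some admissible window on one side and $b$ in the ``opposite'' window, and the union of the admissible windows for the third vertex, as the first is slid from $v_1$ toward $v_p$, is captured by the index set $V_j=\{v_h: i[j]+p\le h\le i[j+1]+p\}$. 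Here the indices $1=i[1]<i[2]<\cdots<i[k]=p$ are the successive prefix-minima of $w$ on $v_1,\dots,v_p$, so that $v_{i[j]}$ is the cheapest admissible ``first vertex'' whose feasible window for the ``second vertex'' is $V_j$; pairing $v_{i[j]}$ with the cheapest vertex $u_j$ of $V_j$ (Step~8) gives, over $j=1,\dots,k-1$, a finite list that provably contains an optimal triple. The monotonicity $w(v_{i[j]})>w(v_{i[j+1]})$ (or $k=2$ with $w(v_1)=w(v_p)$), already noted before the theorem, is what guarantees we need only the prefix-minima and not every index in $\{1,\dots,p\}$; and feasibility of each $\{v_p,v_{i[j]},u_j\}$ follows because $V_j$ is, by construction, the set of third-vertex positions keeping every arc of length $\le\lceil n/2\rceil$.

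The running time is immediate: Steps~2--4 and Step~6 each scan $O(n)$ vertices, and Steps~7--10 loop $k-1\le p=O(n)$ times with Step~8 doing an $O(n)$ scan of $V_j$ — but since the $V_j$ are internally disjoint paves of the ``far half'' of the cycle, the total work in Step~8 over all $j$ is $O(n)$, so the algorithm is $O(n)$ overall, as claimed in the remark preceding the theorem.

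I expect the main obstacle to be the precise characterization of the feasible third-vertex window $V_j$ from Lemma~\ref{n/2}: one must carefully translate the ``no arc longer than $\lceil n/2\rceil$, some arc shorter than $n/2$'' condition into the stated index ranges $i[j]+p\le h\le i[j+1]+p$ (taken mod $n$), handle the parity of $n$ in the strict-versus-weak inequalities, and confirm that sliding the first vertex only over the prefix-minima indices $i[1],\dots,i[k]$ loses no optimal solution. Once this combinatorial bookkeeping is pinned down, optimality of the enumerated list and the $O(n)$ bound are routine.
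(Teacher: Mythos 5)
Your plan follows the same route as the paper: dispose of the size-two solutions via Step 6, use Corollary~\ref{cor:vp} to restrict the three-vertex case to triples containing $v_p$, and argue that one of the enumerated triples $\{v_p,v_{i[j]},u_j\}$, $j=1,\ldots,k-1$, must be optimal. The gap is that this last completeness claim---the entire mathematical content of the theorem---is exactly what you defer as ``combinatorial bookkeeping,'' and the justification you sketch for it is wrong as stated. You write that $V_j=\{v_h: i[j]+p\le h\le i[j+1]+p\}$ is \emph{the} set of third-vertex positions keeping every arc of length at most $\lceil n/2\rceil$ when the other two observers are $v_{i[j]}$ and $v_p$. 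It is only a subset: by Lemma~\ref{n/2}, with $v_{i[j]}$ and $v_p$ fixed, the feasible positions for the third vertex $v_r$ are all $r$ with $n+i[j]-p\le r\le n$ (the constraint $r-p\le p$ is vacuous, and the arc of length $p-i[j]\le p-1$ automatically supplies the ``shorter than $n/2$'' path), a range that extends far beyond $i[j+1]+p$. So feasibility of each candidate $\{v_p,v_{i[j]},u_j\}$ is fine, but completeness of the candidate list cannot be deduced from identifying $V_j$ with the feasible window; as your argument stands, an optimal triple whose third vertex lies beyond $i[j+1]+p$ would simply escape the enumeration.

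What closes this gap in the paper is an exchange argument on an optimal triple $S=\{v_\ell,v_p,v_r\}$ with $\ell<p<r$ (assuming no size-two optimum). If $\ell=i[j]$ and $r\ge i[j+1]+p+1$, then Lemma~\ref{n/2} shows $\{v_{i[j+1]},v_p,v_r\}$ is still a DRS, and it is strictly cheaper since $w(v_{i[j+1]})<w(v_{i[j]})$---a contradiction; hence $r\in V_j$ (the boundary case $r=i[j]+p-1$ reduces to a distance-$(p-1)$ pair already covered by Step 6), and then $w(u_j)\le w(v_r)$ makes $\{v_{i[j]},v_p,u_j\}$ optimal. If instead $i[j]<\ell<i[j+1]$, one uses $w(v_\ell)\ge w(v_{i[j]})\ge w(v_{i[j+1]})$ and splits on whether $r\ge i[j+1]+p$ (exchange to $\{v_{i[j+1]},v_p,v_r\}$ and then to $\{v_{i[j+1]},v_p,u_{j+1}\}$) or $\ell+p\le r\le i[j+1]+p-1$ (exchange to $\{v_{i[j]},v_p,u_j\}$, again via $w(u_j)\le w(v_r)$). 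None of this appears in your write-up; ``sliding the first vertex over the prefix-minima'' is not self-justifying, it is precisely what these exchanges prove. Your treatment of the pair case and the $O(n)$ running-time accounting (disjointness of the $V_j$) are correct and match the paper.
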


\subsubsection{The case of general $k$.}
{Our approach resembles  at a high level the one used by Epstein et al. \cite{elw12}.} {However, double resolvablity imposes more strict restrictions, and  requires extra care to overcome technical difficulties.}
Let $G_b=(V_b,E_b)$ be the graph obtained from $G=(V,E)$ by repeatedly deleting leaves.  {We call $G_b$ the {\em base graph} of $G$.} We reduce the MWDRS problem on $G$ to the MWDRS problem {on} $G_b$ (see Lemma \ref{lem:base}). The latter problem can be solved in polynomial time by exhaustive enumeration, {since, \bluecomment{as proved   in the sequel}, every minimal DRS of $G_b$ has cardinality at most $12(k-1)$ for $k\ge 2$}  

 Clearly, $G_b$ is connected and has minimum degree at least 2. A vertex in $V_b$ is called a {\em root} if in $G$ it is adjacent to some vertex in $V\setminus V_b$. Let $R$ denote the set of  roots. Clearly, $R\cap L=\emptyset$.  In $G_b$, we change the weights of all roots to zero, while the weights of other vertices remain the same as in $G$.
\begin{lemma}\label{lem:base}
Suppose that $S_b$ is a minimum weighted DRS of $G_b$. Then $(S_b\setminus R)\cup L$ is a minimum weighted DRS of $G$.\qed
\end{lemma}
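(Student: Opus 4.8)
The plan is to establish two directions via weight‑preserving (or weight‑non‑increasing) transfers between DRSs of $G$ and DRSs of $G_b$. Throughout, recall that in $G_b$ all roots have weight $0$, so for any vertex set $T$ we have $w(T\setminus R)=w(T)$ when $w$ is the weight on $G_b$; and that leaves of $G$ are pairwise ``forced'' into every DRS of $G$ by the same argument as in Lemma \ref{tree} (a leaf $\ell$ and its unique neighbor $p$ satisfy $d(\ell,v)-d(p,v)=1$ for all $v\neq\ell$, so $\ell$ lies in every DRS of $G$). Hence $L\subseteq S$ for every DRS $S$ of $G$, and in particular an optimal DRS of $G$ has the form $L\cup S'$ with $S'\cap L=\emptyset$.

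First I would prove the ``easy'' inclusion: if $S_b$ is any DRS of $G_b$, then $(S_b\setminus R)\cup L$ is a DRS of $G$ of weight (in $G$) at most $w(S_b)$ (weight in $G_b$). The weight bound is immediate since $w\bigl((S_b\setminus R)\cup L\bigr)=w(S_b\setminus R)+w(L)\le w(S_b)+w(L)$ — wait, this needs care: $w(L)$ is not counted in $w(S_b)$. The right statement is that $(S_b\setminus R)\cup L$ is a valid DRS of $G$, and separately that the optimal DRS of $G$ has weight $\ge w(S_b^{*})+w(L)$ where $S_b^{*}$ is optimal for $G_b$; combining gives equality. So the structure is: (a) show $\mathrm{OPT}(G)\ge \mathrm{OPT}(G_b)+w(L)$; (b) show $(S_b^{*}\setminus R)\cup L$ is a DRS of $G$; (c) compute its $G$‑weight as $w(S_b^{*}\setminus R)+w(L)=w(S_b^{*})+w(L)=\mathrm{OPT}(G_b)+w(L)$, matching the lower bound in (a), hence it is optimal. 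For (b), the key geometric fact is that every vertex $v\in V\setminus V_b$ hangs in a tree‑pendant attached at a unique root $r(v)\in R$, and $d_G(v,z)=d_G(v,r(v))+d_G(r(v),z)$ for every $z\in V_b$; so distances from base vertices to pendant vertices are ``refracted'' through roots. Using this together with the tree argument of Lemma \ref{tree} inside each pendant (covered by the leaves in $L$) and the assumption that $S_b$ doubly resolves all of $V_b$ (covered, after substituting a root $r$ by the leaves below it, by the fact that a leaf below $r$ plus another base element reproduces any double‑resolution that $r$ provided, up to the additive constant $d(\text{leaf},r)$ which cancels in differences), one checks all pairs $\{x,y\}\subseteq V$: both in a pendant, both in $V_b$, or one in each.

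Then I would prove the reverse inequality $\mathrm{OPT}(G_b)\le \mathrm{OPT}(G)-w(L)$: starting from an optimal DRS $S=L\cup S'$ of $G$ with $S'\cap L=\emptyset$, I would build a DRS of $G_b$ of weight $\le w(S')$. Each $s\in S'$ either already lies in $V_b$, or lies in a pendant with root $r(s)$; in the latter case replace $s$ by $r(s)$ (legal since $r(s)$ has weight $0$ in $G_b$, so this only decreases weight). Let $S_b$ be the resulting subset of $V_b$. I must verify $S_b$ doubly resolves $V_b$: given $x,y\in V_b$, pick $s_1,s_2\in S=L\cup S'$ doubly resolving them in $G$; if $s_i\in L$ (or in a pendant), its contribution $d_G(x,s_i)-d_G(y,s_i)$ equals $d_G(x,r)-d_G(y,r)$ for the relevant root $r$ (the additive $d(s_i,r)$ cancels), and $r\in S_b$; so $S_b$ inherits the double resolution. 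Combining the two inequalities yields $\mathrm{OPT}(G_b)=\mathrm{OPT}(G)-w(L)$ and, with (b)–(c) above, that $(S_b^{*}\setminus R)\cup L$ attains it.

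The main obstacle I anticipate is the bookkeeping in direction (b) when a root $r$ is itself needed by $S_b$ for double‑resolving a pair $\{x,y\}$ in $V_b$, but $r$ gets deleted in passing to $(S_b\setminus R)\cup L$: one must argue that some leaf $\ell$ in the pendant below $r$ (which lies in $L$, hence in the new set) can play $r$'s role. This works because $d_G(\ell,z)=d_G(\ell,r)+d_G(r,z)$ for all $z\in V_b$, so $d_G(x,\ell)-d_G(y,\ell)=d_G(x,r)-d_G(y,r)$ and $\ell$ doubly resolves exactly the same base pairs as $r$; the only subtlety is that if $r$ had no pendant at all it would not be a root, so such an $\ell$ always exists. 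A secondary subtlety is pairs $\{x,y\}$ with $x$ a pendant vertex and $y\in V_b$ or $y$ in a different pendant — these are handled by the leaves via the intra‑pendant tree argument (Lemma \ref{tree}) combined with the refraction identity, and I would write this case out carefully since it is where DRS behaves less simply than RS.
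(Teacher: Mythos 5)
Your plan has the same skeleton as the paper's proof (leaves are forced into every DRS of $G$, pendant vertices refract distances through their roots, the zero weight of roots in $G_b$ lets pendant observers be absorbed into $R$, and optimality follows from a two-sided weight comparison), but two steps are not actually carried out and one of them is wrong as written. In the direction $\mathrm{OPT}(G_b)\le \mathrm{OPT}(G)-w(L)$ you replace only the vertices of $S'$ by their roots, yet when you check that the projected set doubly resolves a base pair you invoke ``and $r\in S_b$'' also when the resolver $s_i$ is a leaf of $L$. The root of that leaf's pendant tree need not lie in your $S_b$ (nothing of $S'$ may sit in that pendant), so for base pairs whose only resolvers in $S$ involve leaves the inheritance argument breaks. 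The repair is immediate and is exactly what the paper does: throw all of $R$ into the projected set (the paper's $R\cup S_b'$), which is free because roots have weight $0$ in $G_b$; but as stated your verification has a hole.

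The second gap is in feasibility of $(S_b\setminus R)\cup L$, precisely in the case you defer: $u\in V_b\setminus R$ and $v$ in a pendant tree $T_r$ (or two vertices in different pendants). ``Leaves plus the refraction identity'' is not by itself enough: two leaves inside $T_r$, or two leaves outside $T_r$, may fail to doubly resolve $\{u,v\}$ (the latter pair resolves $\{u,v\}$ only if it already doubly resolves $\{u,r\}$). What does work is the leaf $\ell$ below $v$, giving $d(u,\ell)-d(v,\ell)=d(u,r)+d(v,r)$, paired with \emph{any} observer $s$ outside $T_r$, which gives $d(u,s)-d(v,s)\le d(u,r)-d(v,r)<d(u,r)+d(v,r)$; so the real missing step is to exhibit a member of $(S_b\setminus R)\cup L$ outside $T_r$. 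That requires an argument: either a small case split ($|R|\ge 2$ yields a leaf in another pendant, while $|R|=1$ forces $S_b\setminus R\ne\emptyset$ since $|S_b|\ge 2$), or the paper's device of using the already-established double resolution of $\{u,r\}$ to extract $s'$ with $d(u,s')<d(u,r)+d(r,s')$, which forces $s'\notin T_r$. Until one of these is written down, the central case of your step (b) is a promissory note rather than a proof; with those two repairs the argument goes through and coincides in substance with the paper's.
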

 {Therefore,} for solving the MWDRS problem on a weighted $k$-edge-augmented tree $G$, we only need to find a minimum weighted DRS of base graph $G_b$ with the weights of all roots modified to be $0$.

 For 1-edge-augmented tree $G$, its base graph $G_b$ is a cycle,   whose minimum weighted DRS can be found in $O(n)$ time  {(recall Algorithm \ref{alg2})}. Combining this with  {Lemmas \ref{tree} and \ref{lem:base},} we have the following {linear time solvability}.
\begin{theorem}
There is an $O(n)$ time exact algorithm for solving the MWDRS problem on $k$-edge-augmented trees, for $k=0,1$, {including trees and cycles.}
\end{theorem}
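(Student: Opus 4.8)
The plan is to combine the three ingredients already established in the excerpt: Lemma~\ref{tree}, which identifies the leaf set $L$ as the unique minimal DRS of a tree; Lemma~\ref{lem:base}, which reduces the MWDRS problem on a $k$-edge-augmented tree $G$ to the MWDRS problem on its base graph $G_b$ with all root weights zeroed out (the solution on $G$ being $(S_b\setminus R)\cup L$); and Theorem~\ref{th:cycle}, which solves the MWDRS problem on a cycle in $O(n)$ time. So the only real work is to check that, for $k=0$ and $k=1$, all of these pieces run in linear time and fit together.

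First I would dispose of $k=0$. Here $G$ is a tree, and by Lemma~\ref{tree} the unique minimal DRS is $L$, the set of all leaves; since every DRS must contain $L$ and $L$ itself is a DRS, $L$ is automatically the minimum weighted DRS regardless of the weights. Computing $L$ just requires one pass over the vertices to record degrees, which is $O(n+|E|)=O(n)$ for a tree. So in the tree case the algorithm is: output $L$.

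Next I would handle $k=1$, i.e.\ $G$ is a 1-edge-augmented tree. By definition the removal of at most one edge leaves a spanning tree, so $G$ is either already a tree (covered above) or a tree plus one extra edge, meaning $G$ has exactly one cycle. Its base graph $G_b$, obtained by iteratively deleting degree-one vertices, is precisely that unique cycle (it is connected with minimum degree $\ge 2$, and a graph with a single cycle whose min degree is $\ge2$ is a cycle). The iterated leaf-deletion can be carried out in $O(n)$ time by maintaining a queue of current leaves, decrementing neighbor degrees as we delete. Having identified $G_b$ as a cycle, I would zero out the weights of the roots $R$ (vertices of $G_b$ adjacent in $G$ to deleted vertices), which is again an $O(n)$ scan, and then invoke Algorithm~\ref{alg2} to obtain a minimum weighted DRS $S_b$ of $G_b$ in $O(n)$ time by Theorem~\ref{th:cycle}. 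Finally, by Lemma~\ref{lem:base}, $(S_b\setminus R)\cup L$ is a minimum weighted DRS of $G$; forming this set is $O(n)$. Summing the linear-time stages gives the claimed $O(n)$ bound.

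I do not anticipate a genuine obstacle here, since the theorem is essentially a bookkeeping corollary of results already proved; the one point that warrants a sentence of care is the structural claim that the base graph of a 1-edge-augmented tree is exactly a cycle (as opposed to, say, containing a cycle with pendant structure) — this follows because repeated leaf-deletion terminates at a minimum-degree-$\ge 2$ graph, and the only such subgraph of a graph with cyclomatic number one is a single cycle. The other point to state explicitly is that each preprocessing step (degree computation, iterated leaf removal via a leaf-queue, root identification, weight modification, and assembling $(S_b\setminus R)\cup L$) is linear, so that the overall running time is dominated by Algorithm~\ref{alg2}'s $O(n)$ bound. With those remarks in place the proof is complete.
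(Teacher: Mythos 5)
Your proposal is correct and follows essentially the same route as the paper: handle $k=0$ via Lemma~\ref{tree} (the leaf set is the unique minimal, hence minimum weighted, DRS), and handle $k=1$ by observing that the base graph $G_b$ is a cycle, applying Algorithm~\ref{alg2} (Theorem~\ref{th:cycle}) with root weights zeroed, and lifting back via Lemma~\ref{lem:base}. Your additional remarks on the linear-time implementation of leaf-stripping, root identification, and assembling $(S_b\setminus R)\cup L$ only make explicit what the paper leaves implicit.
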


In the remaining discussion for $k$-edge-augmented tree, we assume $k\ge 2$. A vertex is called a {\em branching vertex} of a graph if it has degree at least 3 in the graph. Recall that every vertex of
the base graph  $G_b=(V_b,E_b)$ has degree  at least $2$.
{It can be shown that (see Lemmas \ref{lem:path} and \ref{lem:four} in Appendix \ref{apx2})
  \begin{itemize}
  \item In $O(|E_b|)=O(n^2)$ time, $G_b$ can be decomposed into at most $ 3k-3$ {edge-disjoint paths whose ends are branching vertices of $G_b$ and internal vertices have degree 2 in $G_b$.}
\item For any minimal DRS set $S$ of $G_b$, and any path $P$ in the above path decomposition of $G_b$,    at most four vertices of $S$ are contained in $P$.
\end{itemize}
It follows that} every minimal DRS of $G_b$ contains at most $12(k-1)$ vertices. Our algorithm for finding the minimum weighted DRS of $G_b$ {examines}   all {possible}  subsets   of $V_b$ with cardinality at most $12(k-1)$ {by taking at most four vertices from each path in the path decomposition of $G_b$}; among these sets, the algorithm selects a DRS of $G_b$ {with} minimum weight. Have a table that stores the distances between each pair of vertices in $G_b$, {it takes $O(n^2)$ time to  test the double resolvability of a set.} Recalling Lemma \ref{lem:base}, we obtain the following strong polynomial time solvability for the general $k$-edge-augmented trees.
\begin{theorem}\label{th:ktree}
The {MWDRS} problem on $k$-edge-augmented trees can be solved in $O(n^{12k})$ time. \qed
\end{theorem}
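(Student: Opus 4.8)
The plan is to reduce the MWDRS problem on the $k$-edge-augmented tree $G$ to the MWDRS problem on its base graph $G_b=(V_b,E_b)$ (with the weight of every root reset to $0$) via Lemma~\ref{lem:base}, and then to solve the latter by an enumeration whose size is controlled by the structural facts already recorded. First I would build $G_b$ by repeatedly deleting degree-$1$ vertices, recording along the way the root set $R$ and the leaf set $L$ of $G$ and resetting $w(r)\leftarrow 0$ for all $r\in R$; this takes linear time. Then I would precompute the full distance table $\{d_{G_b}(u,v):u,v\in V_b\}$ by a BFS from each vertex, which costs $O(n\cdot|E_b|)=O(n^2)$ since $|E_b|\le n-1+k$ and $k$ is constant.

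Next I would use the decomposition of $G_b$ into at most $3k-3$ edge-disjoint paths whose ends are branching vertices and whose internal vertices have degree $2$, together with the bound that a minimal DRS of $G_b$ contains at most four vertices of each such path (Lemmas~\ref{lem:path} and~\ref{lem:four}). The algorithm enumerates every vertex subset of $V_b$ obtained by picking at most four vertices from each path of the decomposition; the number of such subsets is $O\big(n^{4(3k-3)}\big)=O(n^{12k-12})$. For each candidate $S$ it tests whether $S$ is a DRS of $G_b$: for each of the $O(n^2)$ pairs $\{x,y\}\subseteq V_b$ and each (constantly many) pair $\{u,v\}\subseteq S$ it checks $d_{G_b}(u,x)-d_{G_b}(u,y)\ne d_{G_b}(v,x)-d_{G_b}(v,y)$, so the test runs in $O(n^2)$ time using the distance table. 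The algorithm retains the candidate DRS $S_b$ of smallest weight. Over all candidates this is $O(n^{12k-12})\cdot O(n^2)=O(n^{12k-10})$, and adding the $O(n^2)$ preprocessing keeps the total within $O(n^{12k})$; finally it outputs $(S_b\setminus R)\cup L$.

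For correctness it suffices to show that $S_b$ has the minimum weight among all DRS of $G_b$, for then Lemma~\ref{lem:base} makes $(S_b\setminus R)\cup L$ a minimum weighted DRS of $G$. Starting from any minimum weighted DRS of $G_b$ and repeatedly deleting a vertex while double resolvability is preserved produces a minimal DRS of $G_b$; since weights are nonnegative, its weight is still minimum. By Lemmas~\ref{lem:path} and~\ref{lem:four} this minimal DRS contains at most four vertices of each path of the decomposition, hence it is one of the enumerated candidates, so the retained $S_b$ is indeed of minimum weight.

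The technical heart of the argument is not the enumeration but the structural facts it rests on, above all the claim that a minimal DRS meets each path of the decomposition in at most four vertices: one must show that a DRS with five or more vertices on a single such path is never minimal, by exhibiting a vertex of $S$ whose removal still doubly resolves every pair, which needs a careful analysis of distances along the path and of how the path attaches to the rest of $G_b$ through its two branching endpoints. Granting that bound and the $\le 3k-3$ estimate on the number of paths — both already established in Appendix~\ref{apx2} — Theorem~\ref{th:ktree} follows exactly as sketched above.
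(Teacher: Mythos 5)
Your proposal is correct and follows essentially the same route as the paper: reduce to the base graph $G_b$ via Lemma~\ref{lem:base}, invoke the path decomposition into at most $3k-3$ paths (Lemma~\ref{lem:path}) and the at-most-four-vertices-per-path bound for minimal DRSs (Lemma~\ref{lem:four}), enumerate the $O(n^{12k-12})$ candidate sets, and test each in $O(n^2)$ time with a precomputed distance table. Your added remark that pruning a minimum weighted DRS to a minimal one preserves minimum weight (by nonnegativity) cleanly justifies why the enumeration suffices, which the paper leaves implicit.
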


\subsection{Wheels}
A  general {\em wheel}  $G=(V,E)$ on $n$ \bluecomment{$(\ge6)$} vertices  $v_1,v_2,\ldots,v_n$  is formed by the {\em hub} vertex $v_n$ and a cycle {$C=(V_c,E_c)$} over the vertices  {$v_1,v_2,\ldots,v_{n-1}$}, called {\em rim} vertices, {where the hub is adjacent to some (not necessarily all) rim vertices}. {We develop dynamic programming algorithm to solve the MWDRS problem on general wheels in $O(n^3)$ time.}

 We start  with complete wheels whose DRS has a very nice characterization that \bluecomment{is} related to the consecutive one property. A general wheel is {\em complete} if its hub is adjacent to every rim vertex.
\begin{equation}\label{1or2}
\text{{The distance in $G$ between any two vertices in $G$ is} either $1$ or $2$.}
 \end{equation}
\begin{lemma}\label{cw}
Given a complete wheel $G=(V,E)$, let $S$ be a proper nonempty subset of $V$. Then $S$ is a doubly resolving set  if and only if $S\cap V(C)$ is a  dominating set of $C$ and   any pair of rim  vertices outside $S$ has at least two neighbors in {$S\cap V(C)$}. 
\end{lemma}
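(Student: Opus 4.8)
The plan is to prove both directions by exploiting the distance structure \eqref{1or2}: in a complete wheel every distance is $1$ or $2$, so for any vertices $u,v$ and any vertex $x$ we have $d_G(u,x)-d_G(v,x)\in\{-1,0,1\}$, and a pair $\{x,y\}$ doubly resolves $\{u,v\}$ exactly when $d_G(u,x)-d_G(v,x)\ne d_G(u,y)-d_G(v,y)$. First I would reduce the condition ``$S$ doubly resolves $V$'' to checking only pairs $\{u,v\}$ that are \emph{not} already resolved in a simple way; in particular the hub $v_n$ is at distance $1$ from every rim vertex, so the only pairs that can cause trouble are pairs of rim vertices, and — using Lemma~\ref{lem:star} with an element $s\in S$ — one may further restrict attention to pairs $\{s,v\}$ with $s\in S$. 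So the heart of the argument is: analyze when a pair of rim vertices $u,v$ fails to be doubly resolved by $S$, in terms of how $S\cap V(C)$ sits on the cycle $C$.

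The key computation is to identify, for a fixed pair of rim vertices $u,v$, which vertices $x$ satisfy $d_G(u,x)\ne d_G(v,x)$ (call these the \emph{single-resolvers} of $\{u,v\}$). On a complete wheel the hub never single-resolves two rim vertices, and a rim vertex $x$ single-resolves $\{u,v\}$ iff $x$ is adjacent (on $C$) to exactly one of $u,v$, or $x\in\{u,v\}$. From this I would derive: $\{x,y\}$ doubly resolves $\{u,v\}$ iff the value $d_G(u,x)-d_G(v,x)$ differs from $d_G(u,y)-d_G(v,y)$; since each such value lies in $\{-1,0,1\}$, a set $S$ doubly resolves $\{u,v\}$ iff the multiset of values $\{\,d_G(u,x)-d_G(v,x):x\in S\,\}$ is not constant. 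I would then case-split on whether $u,v\in S$, exactly one of them is in $S$, or neither is in $S$, and translate ``the values are not all equal'' into the two stated combinatorial conditions. The domination condition $S\cap V(C)$ dominating $C$ handles, roughly, pairs where at least one of $u,v$ lies outside $S$ and its neighbours must provide resolvers; the ``two neighbours in $S\cap V(C)$'' condition handles the hardest case, namely a pair of rim vertices $u,v$ both outside $S$, where a single common neighbour gives equal differences and hence fails to resolve.

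More precisely, for the forward direction I would assume $S$ is a DRS and argue the contrapositive of each of the two conditions: if $S\cap V(C)$ is not dominating, some rim vertex $v\notin S$ has no neighbour in $S\cap V(C)$, and then $S$ cannot distinguish $v$ from the hub $v_n$ (both are at distance $1$ from every element of $S$, using completeness), contradicting that $S$ is a DRS; if some pair of rim vertices $u,v\notin S$ has at most one common neighbour in $S\cap V(C)$, I would show $d_G(u,x)-d_G(v,x)$ takes the same value for all $x\in S$ — here one uses that $x\ne u,v$ forces the difference to depend only on adjacency to $u$ versus $v$, and ``at most one common neighbour, each only resolving in the same direction'' must be checked carefully. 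For the converse, assuming the two conditions I would take an arbitrary pair and exhibit a doubly-resolving pair inside $S$: pairs involving the hub are resolved via domination (some $x\in S\cap V(C)$ is adjacent to the hub's ``partner'' but at distance $2$ from the hub is impossible, so I actually resolve hub-vs-rim via a rim vertex adjacent to the rim vertex but not equal to the hub), and pairs of rim vertices are resolved either because one lies in $S$ or, if both lie outside, because they have two distinct neighbours in $S\cap V(C)$ producing two different difference-values.

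I expect the main obstacle to be the bookkeeping in the case where both $u$ and $v$ lie outside $S$: one must show that ``two neighbours in $S\cap V(C)$'' is exactly the right threshold, i.e.\ that two common/near neighbours always yield two distinct values of $d_G(u,x)-d_G(v,x)$ while one does not, and this requires distinguishing sub-cases according to whether $u$ and $v$ are adjacent on $C$, at cycle-distance $2$, or farther apart (which changes which rim vertices are single-resolvers and in which direction). A related subtlety is correctly handling small configurations and the requirement $n\ge 6$, and making sure the hub is never needed as one of the resolving pair but also never obstructs — the clean way is to observe up front that on a complete wheel the hub contributes the constant value $0$ to every difference $d_G(u,\cdot)-d_G(v,\cdot)$ for rim $u,v$, so it is useless for resolving rim pairs and the whole problem lives on the cycle $C$ with the single extra ``phantom vertex'' (the hub) that can only be resolved via domination. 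Once that reduction is made explicit, the remaining steps are short case checks.
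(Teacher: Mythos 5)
Your overall plan coincides with the paper's proof: exploit the fact that every distance in a complete wheel is $1$ or $2$, record for each pair $\{u,v\}$ the values $d_G(u,x)-d_G(v,x)\in\{-1,0,1\}$ over $x\in S$, prove the forward direction by exhibiting a constant-difference pair when a condition fails, and prove the converse by a case analysis on which of $u,v$ lie in $S$ or equal the hub. However, one central step as you state it would fail. You negate the second condition as ``the pair $u,v\notin S$ has at most one \emph{common} neighbour in $S\cap V(C)$'' and claim this makes $d_G(u,x)-d_G(v,x)$ constant on $S$. That is the wrong reading of the condition and the claim is false: two rim vertices far apart on $C$ have no common neighbour at all, yet any vertex of $S\cap V(C)$ adjacent to exactly one of them gives difference $\pm 1$ while the remaining vertices (and the hub) give $0$, so such a pair is doubly resolved. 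The condition is that at least two vertices of $S\cap V(C)$ are adjacent to $u$ or to $v$; its failure mode, as used in the paper, is ``$u$ and $v$ are both uniquely dominated by the same vertex of $S\cap V(C)$.'' Moreover, even with the correct reading, ``at most one neighbour of the pair'' alone does not force constancy: if that unique neighbour is adjacent to only one of $u,v$, the pair is resolved, and non-DRS-ness must instead come from the other vertex being undominated. So, like the paper, you must treat the failure of the second condition \emph{together with} the domination condition (then the unique neighbour is adjacent to both, and every $x\in S$ yields difference $0$), or explicitly route the contradiction through the undominated vertex paired with the hub.

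Two smaller slips: when domination fails, the undominated rim vertex $v$ is not ``at distance $1$ from every element of $S$'' --- it is at distance $2$ from every rim member of $S$; what your framework actually gives (and what the paper computes) is $d_G(v,x)-d_G(v_n,x)=1$ for all $x\in S$, hence $\{v,v_n\}$ is unresolved. And in the converse, for a hub-versus-rim pair $\{v_n,y\}$ with $y\notin S$ you name only one resolver (a rim vertex of $S$ adjacent to $y$, giving difference $0$); you also need a second vertex of $S$ giving a different value, i.e.\ a member of $S$ at distance $2$ from $y$ (or the hub itself), whose existence has to be argued from domination of the rim cycle --- a point the paper also passes over quickly (``take $z'$ not dominating $y$''). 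With the condition read correctly and the forward case split organized as ``not dominating'' / ``dominating but some pair uniquely dominated by one vertex,'' your argument becomes essentially the paper's.
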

\begin{proof}
If $S\cap V(C)$ is not a   dominating set of $C$, then there exists a vertex   $v_i \in V(C)- S$  such that $v_i$ is not adjacent to any vertex of $S\cap V(C)$. In  this case, $S$ cannot doubly resolve $\{v_n,v_i\}$ because for any $s_1,s_2\in S$, $d(s_1,v_n)-d(s_1,v_i)=-1=d(s_2,v_n)-d(s_2,v_i)$.

If $S\cap V(C)$ is a  dominating set of $C$ but there exist two cycle vertices  {$v_i,v_j \in V(C)-S$ such that $v_i,v_j$} are uniquely dominated by the same cycle vertex $v\in S$, then for any two vertices $s_1,s_2\in S$, $d(s_1,v_i)-d(s_1,v_j)=0=d(s_2,v_i)-d(s_2,v_j)$, saying that $S$ is not a doubly resolving set.

 Suppose that $S$ satisfies the condition stated in the lemma. We prove that $S$ can resolve every pair of vertices $x,y$ in $G$. When one of $x$ and $y$, say $x$, is a rim vertex in $S$,  
since $n\ge 6$, there exists  another rim vertex $z\in S-\{x\}$
that is not adjacent to  $x$. It follows from (\ref{1or2}) that $\{x,y\}$ is resolved by $\{x,z\}$ as $d(x,x)-d(x,z)=-2<-1\le d(y,x)-d(y,z)$. When both $x$ and $y$ are rim vertices outside $S$, there are two rim vertices $x'$ and $y'$ in $ S$ dominating $x$ and $y$, respectively. It follows that $\{x,y\}$ is resolved by $\{x',y'\}$. When one of $x$ and $y$, say $x$ is the hub, we only need consider the case of $y$ is a rim vertex outside $S$. Take rim vertices $z,z'$ from $S$ such that $z$ dominates $y$ and $z'$ does not dominate $y$. It follows that $\{x,y\}$ is resolved by $\{z,z'\}$ as $d(x,z)-d(x,z')=0<-1=d(y,z)-d(y,z')$.
\end{proof}

{The characterization in Lemma \ref{cw} can be rephrased as follows: A subset $S\subseteq V$ is a DRS of $G$ if and only if every set of three consecutive vertices on $C$ contains at least one vertex of $S$, and every set of five consecutive vertices on $C$ contains at least two vertices of $S$. This enables us to formulate the MWDRS problem on a complete wheel as an integer programming with {consecutive 1's and circular 1's constraints}, which can be solved in {$O(n^3\log^2n)$} time by Hochbaum and Levin's algorithm \cite{hl06}.} {(To the best of our knowledge, there is no such a concise way to formulate the metric dimension problem on complete wheels  as an integer programming with consecutive one matrix.)} {Moreover, it is not hard to see from the characterization that linear time efficiency can be achieved by dynamic programming approach. Furthermore, we elaborate on the idea to solve the MWDRS problem on more complex general wheels.}

\begin{theorem}\label{general}
The {MWDRS} problem on complete wheels can be solved in $O(n)$ time. The {MWDRS} problem on general wheels can be solved in  $O(n^3)$ time.\qed
\end{theorem}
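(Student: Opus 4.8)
For \emph{complete wheels} I would start from the reformulation of Lemma~\ref{cw} stated just above the theorem: a nonempty $S\subseteq V$ is a DRS of $G$ if and only if every three consecutive rim vertices on $C$ contain a vertex of $S$ and every five consecutive rim vertices contain at least two vertices of $S$. Both conditions constrain only $S\cap V(C)$, so the hub is irrelevant to double resolvability; moreover the three‑window condition forces $S\cap V(C)\ne\emptyset$, so $S\setminus\{v_n\}$ is still a DRS, and with nonnegative weights there is an optimal DRS avoiding the hub. Hence it suffices to compute a minimum‑weight $T\subseteq V(C)$ on the $(n-1)$‑cycle $C$ that meets every window of three consecutive rim vertices and meets every window of five consecutive rim vertices at least twice. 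I would solve this covering problem by dynamic programming around $C$: the state after processing $v_i$ records $i$ together with the membership pattern of $v_{i-3},\dots,v_i$ in $T$ (a $16$‑valued flag), which is exactly what is needed to verify the width‑five and width‑three constraints when the next vertex is added. To handle the cyclic wrap‑around I would fix the membership pattern of $v_1,\dots,v_4$ (constantly many choices), run the forward recursion, and at the end check the window constraints that straddle the seam $v_{n-1}v_1$. Each pass costs $O(n)$ and there are $O(1)$ passes, giving the claimed $O(n)$ bound. (Alternatively, these are exactly consecutive‑ and circular‑$1$'s constraints, which is the route via Hochbaum–Levin's algorithm mentioned in the text, but that is slower than $O(n)$.)

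For \emph{general wheels} the same argument does not apply verbatim: when the hub is joined to only some rim vertices, a shortest path between two rim vertices may bend through the hub, so rim distances can be large and a single resolving pair of $S$ may doubly resolve pairs that lie far apart on $C$; consequently the DRS condition is no longer a conjunction of bounded‑width window constraints. The plan is to first tabulate all pairwise distances in $O(n^2)$ time — using $d_G(v_i,v_n)=1+d_C(v_i,A)$ for the spoke set $A$ and $d_G(v_i,v_j)=\min\{d_C(v_i,v_j),\,d_G(v_i,v_n)+d_G(v_j,v_n)\}$ for rim vertices — and then to enumerate a polynomially bounded amount of ``global'' information about an optimal solution $S$: whether $v_n\in S$, and a constant number of ``anchor'' rim vertices of $S$ (for example the $S$‑vertices that are extreme on the longest spoke‑free arc of $C$, or a pair of $S$‑vertices whose shortest connection uses the hub). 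This gives $O(n^2)$ guesses. Conditioned on the anchors, I would argue that every pair $\{x,y\}$ of vertices of $G$ that still needs resolving is resolved either by the anchors themselves or by two vertices of $S$ lying within a bounded arc of $C$; feasibility then becomes checkable locally, so the remaining membership decisions on the rim can be made by a left‑to‑right dynamic program over $C$ with constant‑size state (a bounded membership window plus the relevant distance bookkeeping), running in $O(n)$ time per guess. Multiplying the $O(n^2)$ guesses by the $O(n)$ dynamic program yields the $O(n^3)$ bound, and Lemma~\ref{lem:base}–type reductions are not needed here since the wheel is already ``base''‑like.

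The main obstacle is this last locality claim: proving that after fixing $O(1)$ anchors the double‑resolving requirement becomes \emph{local}, so that a constant‑width DP state suffices and only $O(n^2)$ anchor configurations need to be tried. This requires a careful case analysis of when shortest paths bend through the hub, organized by the positions of $x$, $y$ and of the candidate resolving pair relative to $A$ and to the anchors, showing that any pair not already resolved by the anchors is resolved within a bounded window. The complete‑wheel analysis — where $A=V(C)$ collapses all distances to $1$ or $2$ and the windows have widths $3$ and $5$ — is precisely the clean special case of this argument, which is why it is treated first.
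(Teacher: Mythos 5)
Your complete--wheel argument is fine and is essentially the paper's: the reformulation of Lemma~\ref{cw} as ``one observer in every three consecutive rim vertices, two in every five'' makes the hub irrelevant, and a circular DP with a bounded membership window (fixing the seam pattern) gives $O(n)$. No issue there.

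For general wheels, however, there is a genuine gap, and you have named it yourself: the ``locality claim'' you defer is not a technical detail but the entire content of the theorem. The paper's proof consists precisely of establishing a structural characterization (Theorem~\ref{nwhl} in Appendix~\ref{apx:wheel}): defining $u$ \emph{close} to $v$ when $d(u,v)<d(u,n)+d(n,v)$, it shows $S$ is a DRS iff (i) every rim vertex is close to $S$, (ii) the \emph{minimal bad pair} of each rim observer (two equidistant close vertices with equal hub-distance) is covered by a neighbouring observer, and (iii) $S$ is \emph{left-} and \emph{right-continuous} at each rim observer. Proving this requires exactly the case analysis of shortest paths bending through the hub that you postpone, including a nontrivial ``only if'' direction (conditions (ii) and (iii) are forced by double resolvability, not just sufficient); without it, your claim that fixing $O(1)$ anchors makes feasibility locally checkable is unsupported, and it is not clear what the anchors would even certify. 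Moreover, your proposed constant-width DP state is unlikely to work as stated: with sparse connectors, the closeness paths $P_s$ and hence the gaps between consecutive observers can span $\Theta(n)$ rim vertices, so a bounded membership window cannot encode the relevant constraints. The paper's DP instead indexes states by the position of the last observer $s$ together with two flags (functions $F,F'$ over $\{\text{left},\text{right}\}$ recording how bad pairs at the boundary are to be covered), precomputes for each $s$ the quantities $l_s,r_s,\Delta_l(s),\Delta_r(s),\Delta(s)$, and takes transition minima over all earlier observers $s'$ lying in sets $\alpha(s),\beta(s),\gamma(s)$ defined by inequalities among these quantities; with an outer enumeration of the first observer inside the shortest closeness path $P_a$, this gives $O(n)\cdot O(n^2)=O(n^3)$ --- a different accounting from your ``$O(n^2)$ guesses $\times$ $O(n)$ DP,'' and one backed by the characterization your proposal lacks.
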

\section{Conclusion}\label{sec:conclude}
{In this paper, we have established  $\Theta(\ln n)$ approximability of the {MDRS and MWDRS problems} on general graphs. There is still a gap of {$1+\ln\log_2n$} hidden in the big theta (see Theorems \ref{th:lower} and \ref{th:performance}). It deserves good research efforts to obtain even tighter upper bounds for the approximability. The $k$-edge-augmented trees, general wheels and cographs are known graph classes on which the {weighted metric dimension  problem} is polynomial time solvable. In this paper, we have extended the polynomial time solvability to the {MWDRS} problem for the first two graph classes. It would be interesting to see whether the problem on cographs {and} other graphs also admits efficient algorithms.}

\bibliography{source_location}

\appendix
\section*{Appendix}
\section{Equivalence}\label{apx:appl}
\bluecomment{We prove the equivalence between location sets and the doubly resolving sets in any graph $G=(V,E)$.}
\begin{observation}\bluecomment{Let $S\subseteq V$. Then $S$ fails to locate the diffusion source in $G$ in  some instance if and only if there exist distinct vertices $u,v\in V$ such that $S$ cannot distinguish between the case of $u$ being the source and that of $v$ being the source, i.e., $d_G(u,x)-d_G(u,y)=d_G(v,x)-d_G(v,y)$ for any $x,y\in S$; equivalently, $S$ is not a DRS of $G$.}
\end{observation}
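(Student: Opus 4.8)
The plan is to first pin down the diffusion model, then translate ``$S$ fails to locate the source'' into a statement purely about distance differences, and finally observe that this statement is literally the negation of ``$S$ is a DRS''. An \emph{instance} of the source location problem is a pair $(s,t_0)$, where $s\in V$ is the (unknown) source and $t_0$ is the (unknown) origination time; under the simultaneous-spreading dynamics the information reaches a vertex $x$ along a shortest $s$--$x$ path, so an observer at $x$ records $\tau(x)=t_0+d_G(s,x)$. The evidence collected from $S$ is thus the vector $\bigl(t_0+d_G(s,x)\bigr)_{x\in S}$.

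First I would unwind the failure condition. For an instance $(u,t_0)$, the set $S$ does not determine the source iff some instance $(v,t_0')$ with $v\neq u$ yields the same evidence, i.e. $t_0+d_G(u,x)=t_0'+d_G(v,x)$ for all $x\in S$; equivalently $d_G(u,x)-d_G(v,x)=t_0'-t_0$ is a constant (in $x$) on $S$. Conversely, whenever $x\mapsto d_G(u,x)-d_G(v,x)$ is constant on $S$ with value $c$, taking $t_0'=t_0+c$ (after a harmless global shift of the time axis so that both origination times are nonnegative) produces a competing instance with identical evidence. Hence \emph{some} instance exhibits failure of $S$ if and only if there are distinct $u,v\in V$ for which $d_G(u,x)-d_G(v,x)$ does not depend on $x\in S$; evaluating at two points $x,y$ and subtracting, this is exactly the displayed condition $d_G(u,x)-d_G(u,y)=d_G(v,x)-d_G(v,y)$ for all $x,y\in S$ (the case $x=y$ being vacuous).

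It remains to match this with the DRS definition. By definition $S$ is a DRS iff every pair of distinct vertices $\{u,v\}$ is doubly resolved by some pair $\{x,y\}\subseteq S$, i.e. iff for every such $\{u,v\}$ there are $x,y\in S$ with $d_G(u,x)-d_G(u,y)\neq d_G(v,x)-d_G(v,y)$ (reading the relation through its defining formula also when $\{x,y\}$ meets $\{u,v\}$, which changes nothing). Negating the quantifiers, $S$ is \emph{not} a DRS iff there exist distinct $u,v$ with $d_G(u,x)-d_G(u,y)=d_G(v,x)-d_G(v,y)$ for all $x,y\in S$, precisely the condition obtained above; chaining the two equivalences proves the observation. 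I do not expect a real obstacle here: the only points needing a line of care are the bookkeeping of the unknown origination time (ensuring an arbitrary integer offset $c$ is realizable, which is immediate once a global time shift is allowed) and the mild degeneracy when $u$ or $v$ happens to lie in $S$, which is absorbed by reading the ``doubly resolves'' relation through its formula.
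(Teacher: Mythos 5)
Your proposal is correct and follows essentially the same route as the paper's own proof: both encode the unknown origination time as a constant offset $c$, show that $S$ fails on some instance exactly when $x\mapsto d_G(u,x)-d_G(v,x)$ is constant on $S$ (realized by shifting the start time by $c$), and note that in the DRS case the time differences $t_x-t_y$, which are independent of the origination time, distinguish $u$ from $v$. Your extra remarks on the $x=y$ and $\{x,y\}\cap\{u,v\}\neq\emptyset$ degeneracies are harmless refinements, not a different argument.
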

\begin{proof} \bluecomment{Consider distinct $u,v\in V$ such that $u$ or $v$ is the information source.  Let $t_x$ denote the time when $x\in S$ receives the information.}

\bluecomment{Suppose that $d_G(u,x)-d_G(u,y)=d_G(v,x)-d_G(v,y)$ for any $x,y\in S$. Hence we have constant $c$ such that $d_G(u,x)-d_G(v,x)=c$ for any $x\in S$. It follows that the value of  $t_x$ when $u$ initiates the information at time $0$  and that when $v$ intimates the information at time $c$ are the same, saying that $S$ fails to be a location set.}

\bluecomment{Suppose $S$ is a DRS of $G$. So there exist $x',y'\in S$ such that $d_G(u,x')-d_G(u,y')\ne d_G(v,x')-d_G(v,y')$.   If $u$ is the source, then $t_{x'}-t_{y'}=d_G(u,x')-d_G(u,y')$; otherwise $v$ is the source, and  $t_{x'}-t_{y'}=d_G(v,x')-d_G(v,y')$. By checking the value  $t_{x'}-t_{y'}$ we can determine whether $u$ or $v$ is the source.}
\end{proof}
\section{Inapproximability}
Lemma \ref{lem:np1} states that there exists a polynomial time  transformation  that {transfers graph $G=(V,E)$ 
to graph $G'=(V',E')$ 
such that}   ${\tt dr}(G')\le {\tt ds}(G)+\lceil\log_2 n\rceil+3$.
\begin{proof}
 {Suppose $V=\{v_1,\ldots,v_n\}$, and let $d=\lceil\log_2 n\rceil$. Construct graph $G'=(V',E') $, as Hauptmann et al. \cite{hsv12} do, in the following way (refer to Fig. \ref{f1}):} For every $i=1,\ldots,n$,  $V'$ contains a  pair of vertices $\{v_i^0, v_i^1\}$ and $2(d+3)$   vertices $u_1^k,\cdots,u_{d+1}^k, u_a^k, u_b^k$ $(k=1,2$). {Moreover,} $V'$ contains an additional vertex $c$ which is connected to {all other  vertices}.  The edge set $E'$ is {defined as follows}: $ v_i^1v_j^1\in E'$ if and only if $v_iv_j\in E$ for $i,j=1,\ldots,n$;  $u_k^1 u_k^0\in E'$ for  all $k=1,2,\ldots,d+1,b$;  vertices $u_a^1$ and $u_a^0$ are both adjacent to all $2n$ vertices $v_i^0$, $v_i^1$, $i=1,\ldots,n$; vertices $v_j^1$ and $v_j^0$ are adjacent to $u_k^1$ and $u_k^0$ (resp. neither $u_k^1$ nor $u_k^0$)  if   the binary representation of $j$ has a $1$ (resp. 0) on the $k$-th position.  Suppose without loss of generality that $\{v_1,\ldots,v_{{\tt ds}(G)}\}$ is a minimum dominating set of $G$. In the following, we show that $S:=\{u_1^1,\cdots,u_{d+1}^1, u_a^1,u_b^1\}\cup\{v^1_1,\ldots,v^1_{{\tt ds}(G)}\}$, the RS given by Hauptmann et al. \cite{hsv12}, is actually a DRS of $G'$, which proves the lemma.

 \begin{figure}[h]
\centerline{  \includegraphics[width=8cm]{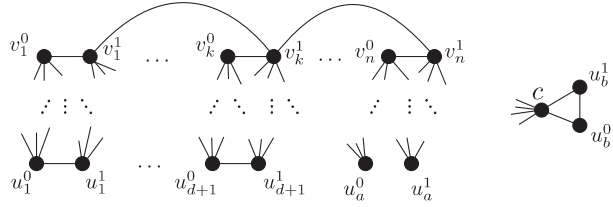}}
    \caption{{The construction of} graph $G'$}\label{f1}
\end{figure}

 {Take   arbitrary distinct $u,v\in V'$. It suffices to show that $\{u,v\}$ is doubly resolved by $S$. In view of presence of vertex $c$, the distance between any pair of vertices in $G$ is at most 2. If at least one of $u$ and $v$, say $u$, belongs to $S$, then $\{u,v\}$ is doubly resolved by $\{u,u_b^1\}$ or $\{u,u_a^1\}$} whichever has size 2. Assuming by symmetry $|\{u,u_b^1\}|=2$, we have $d_{G'}(u,u_b^1)-d_{G'}(u,u)=2$ and $d_{G'}(v,u_b^1)-d_{G'}(v,u)\le 1$. So we may assume $u\not\in S$ and $v\not\in S$.
{In case of $c\in\{u,v\}$, the pair $\{u,v\}$ is doubly resolved by $S$ as shown by the following:}
\begin{itemize}
\item The pairs $\{c,v_i^j\}$, $i\in\{1,\ldots,n\},j\in\{0,1\}$ are doubly resolved by $\{u_a^1,u_b^1\}$ as $d_{G'}(c,u_b^1)-d_{G'}(c,u_a^1)=0$ and $d_{G'}(v_i^j,u_b^1)-d_{G'}(v_i^j,u_a^1)=1$.
    \item The pairs $\{c,u_k^0\}$, $k\in\{1,\ldots,d+1\}$ are doubly resolved by $\{u_b^1,u_k^1\}$ since $d_{G'}(c,u_b^1)-d_{G'}(c,u_k^1)=0$ and $ d_{G'}(u_k^0,u_b^1)-d_{G'}(u_k^0,u_k^1)=1$.
\item The pair $\{c,u_b^0\}$ is doubly resolved by $\{u_a^1,u_b^1\}$ since  $d_{G'}(c,u_a^1)-d_{G'}(c,u_b^1)=0$ and $d_{G'}(u_b^0,u_a^1)-d_{G'}(u_b^0,u_b^1)=1$.
\item  The pair  $\{c,u_a^0\}$ is doubly resolved by $\{u_b^1,v_1^1\}$ since $d_{G'}(c,u_b^1)-d_{G'}(c,v_1^1)=0$ and $d_{G'}(u_a^0,u_b^1)-d_{G'}(u_a^0,v_1^1)=1$.
\end{itemize}
Now we may assume $c\not\in\{u,v\}$. In case of $u_k^0\in\{u,v\}$ for some $k\in\{1,\ldots,d+1,a,b\}$, the pair $\{u,v\}$ is doubly resolved by $S$ as shown by the following:
\begin{itemize}
\item The pairs   $\{u_k^0,v_i^j\}$, $i\!\in\!\{1,\ldots,n\},j\!\in\!\{0,1\}$ are doubly resolved by \mbox{$\{u_a^1,u_b^1\}$  as}  $d_{G'}(u_k^0,u_a^1)-d_{G'}(u_k^0,u_b^1)\!=\!2-d_{G'}(u_k^0,u_b^1)\ge0$, \mbox{$d_{G'}(v_i^j,u_a^1)-d_{G'}(v_i^j,u_b^1)=-1$.}
\item {The pair $\{u_k^0,u_{\ell}^0\}$ with $\ell\in\{1,\ldots,d+1,a,b\}-\{k\}$ is {doubly} resolved by $\{u_a^1,u_p^1\}$, where $p\in\{k,\ell\}-\{a\}$, since   $d_{G'}(u_p^0,u_p^1)-d_{G'}(u_p^0,u_a^1)=-1$ and $d_{G'}(u_{q}^0,u_p^1)-d_{G'}(u_{q}^0,u_a^1)=2-d_{G'}(u_{q}^0,u_a^1)\ge0$, where $\{p,q\}=\{k,\ell\}$.}
\end{itemize}
{We are left with the case where $\{u,v\}= \{v_i^j,v_{i'}^{j'}\}$ for some $i,i'\in\{1,\ldots,n\}$ and $j,j'\in\{0,1\}$. When
 $i\ne i'$, there {exists} $k\in\{1,2,\ldots,d+1\}$ such that the  binary representations  of $i$ and $i'$ {differ} at the $k$-th position (and possibly other positions). {The definition of $G'$ implies} $\{d_{G'}(v_i^j,u_k^1), d_{G'}(v_{i'}^{j'},u_k^1)\}=\{1,2\}$. It follows from
 $d_{G'}(v_i^j,u_a^1)=1= d_{G'}(v_{i'}^{j'},u_a^1)$ that  $\{v_i^j,v_{i'}^{j'}\}$ is doubly resolved by $\{u_k^1,u_a^1\}\subseteq S$. When $i=i'$, since $\{v_1,\ldots,v_{{\tt ds}(G)}\}$ is a dominating set of $G$,} there exists $h\in\{1,\ldots,{\tt ds}(G)\}$ such that $d_G(v_h,v_i)\le1$. Therefore $\{u,v\}=\{v_i^0,v_i^1\}$ is doubly resolved by $\{v^1_h,u_a^1\}$ because {$d_{G'}(v_i^0,v^1_h)-d_{G'}(v^1_i,v_h^1)=1$ and   $d_{G'}(v_i^0,u^1_a)-d_{G'}(v^1_i,u_a^1)=0$}.
\end{proof}

\section{Approximability}\label{apx1}

Lemma \ref{ich} states that
 {\em $IC(T,\mathcal T_0)\ge IC(T,\mathcal T_1)$ for any sets $\mathcal T_0$ and $\mathcal T_1$ of super tests with \mbox{$\mathcal T_0\subseteq\mathcal T_1$.}}
\begin{proof}
It has been proved in  
\cite{bdk05}  that  $IC(T,\mathcal T_0)\ge IC(T,\mathcal T_1)$ if $T$ partitions each equivalence class of $\equiv^{\mathcal T}$ into at most two equivalence classes. Suppose that the equivalence classes of $\equiv^{\mathcal T}$ are $E_1,\ldots,E_m$,   $T$ partitions each $E_i$ into $k_i$ equivalence classes, and {$k:=\max_{i=1}^mk_i-1\ge 2$. We may consider $T$   as $k$ successive tests $T_1, \ldots, T_{k}$,} each of which partitions {an} equivalence class into at most two equivalence classes. For example, assume that $E_i$ is partitioned into {$E_{i1},\ldots,E_{ik_i}$ by $T$. We consider $T_j$ ($1\le j\le k_i-1$)   partitioning $\cup_{h=j}^kE_{ih}$ into $E_{ij}$ and $\cup_{h=j+1}^{k_i}E_{ih}$, and $T_j$ ($k_i\le j\le k-1$) leaving $E_{i1},\ldots,E_{ik_i}$ unchanged.  Using {the result in} 
\cite{bdk05} we have $IC(T,\mathcal T_0)=H_{\mathcal T_0}-H_{\mathcal T_0\cup T}=(H_{\mathcal T_0}-H_{\mathcal T_0\cup T_1})+(H_{\mathcal T_0\cup T_1}-H_{\mathcal T_0\cup T_1\cup T_2})+\cdots+(H_{\mathcal T_0\cup T_1\cup \cdots\cup T_{k-1}}-H_{T_0\cup T_1\cup \cdots\cup T_{k}})\ge (H_{\mathcal T_1}-H_{\mathcal T_1\cup T_1})+(H_{\mathcal T_1\cup T_1}-H_{\mathcal T_1\cup T_1\cup T_2})+\cdots+(H_{\mathcal T_1\cup T_1\cup \cdots\cup T_{k-1}}-H_{\mathcal T_1\cup T_1\cup \cdots\cup T_k})=H_{\mathcal T_1}-H_{\mathcal T_1\cup T_1\cup \cdots\cup T_k}=H_{\mathcal T_1}-H_{\mathcal T_1\cup T}=IC(T,\mathcal T_1)$, as desired.}
\end{proof}

{To prove the approximation ratio of Algorithm \ref{alg1} in Section \ref{sec:apx} for finding minimum weighted set of super sets, we need the following lemma from \cite{bdk05}.}

\begin{lemma}[\cite{bdk05}]\label{0t1} If $IC(T,\mathcal T)>0$ then $IC(T,\mathcal T)\ge 1$.\qed
 \end{lemma}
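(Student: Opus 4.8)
The plan is to unfold the definition of the entropy and show that $IC(T,\mathcal T)$ is always the base‑$2$ logarithm of a \emph{positive integer}, so it cannot take a value strictly between $0$ and $1$. First I would recall that if $E_1,\ldots,E_m$ denote the equivalence classes of $\equiv^{\mathcal T}$, then adding the super test $T$ only refines this partition: each $E_i$ is split into nonempty classes $E_{i1},\ldots,E_{ik_i}$ of $\equiv^{\mathcal T\cup T}$ with $\sum_{j=1}^{k_i}|E_{ij}|=|E_i|$, where $k_i=1$ precisely when $T$ does not split $E_i$. Substituting into $H_{\mathcal T}=\log_2\!\big(\prod_i|E_i|!\big)$ gives
\[
IC(T,\mathcal T)=H_{\mathcal T}-H_{\mathcal T\cup T}=\sum_{i=1}^m\log_2\frac{|E_i|!}{\prod_{j=1}^{k_i}|E_{ij}|!}=\sum_{i=1}^m\log_2\binom{|E_i|}{|E_{i1}|,\ldots,|E_{ik_i}|},
\]
a sum of logarithms of multinomial coefficients.

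Next I would invoke that every multinomial coefficient is a positive integer, so each summand is at least $\log_2 1=0$; this already re-proves $IC(T,\mathcal T)\ge 0$. Now suppose $IC(T,\mathcal T)>0$. Then some summand is strictly positive, i.e. some $E_i$ is split into $k_i\ge 2$ nonempty parts, whence $|E_i|\ge 2$ and
\[
\binom{|E_i|}{|E_{i1}|,\ldots,|E_{ik_i}|}=\binom{|E_i|}{|E_{i1}|}\binom{|E_i|-|E_{i1}|}{|E_{i2}|,\ldots,|E_{ik_i}|}\ge\binom{|E_i|}{|E_{i1}|}\ge |E_i|\ge 2,
\]
so that summand is at least $\log_2 2=1$. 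Since the remaining summands are nonnegative, $IC(T,\mathcal T)\ge 1$, as claimed. Equivalently, one can write $IC(T,\mathcal T)=\log_2 N$ with $N:=\prod_i\binom{|E_i|}{|E_{i1}|,\ldots,|E_{ik_i}|}$ a positive integer, and note $N=1$ iff $IC(T,\mathcal T)=0$, while $N\ne 1$ forces $N\ge 2$, i.e. $IC(T,\mathcal T)\ge 1$.

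I do not expect a genuine obstacle: this is a clean integrality property of partition entropies, and the super-test generality (allowing a class to break into more than two parts) is handled automatically by using multinomial rather than binomial coefficients. The only point needing a line of care is the elementary bound $\binom{a}{b}\ge 2$ for $1\le b\le a-1$, which is immediate from the unimodality of binomial coefficients.
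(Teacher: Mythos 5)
Your proof is correct. Note, however, that the paper does not actually prove Lemma \ref{0t1}: it is quoted from Berman et al.\ \cite{bdk05}, where the statement is established in the test set setting (each test splits an equivalence class into at most two parts), and the paper extends only the companion monotonicity statement (Lemma \ref{ich}) to super tests explicitly, relying on the citation for Lemma \ref{0t1}. Your argument is therefore a self-contained alternative, and it handles the super-test generality directly: writing
\[
IC(T,\mathcal T)=\log_2\prod_{i=1}^m\binom{|E_i|}{|E_{i1}|,\ldots,|E_{ik_i}|}
\]
exhibits $2^{IC(T,\mathcal T)}$ as a positive integer, so $IC(T,\mathcal T)$ cannot lie strictly between $0$ and $1$, with no need to decompose a many-way split into successive binary splits as the paper does for Lemma \ref{ich}. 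What the citation buys is brevity; what your multinomial computation buys is that the integrality is visibly independent of how many parts $T$ creates (exactly the generality the MWSTS setting needs), and it re-derives $IC(T,\mathcal T)\ge 0$ along the way. One cosmetic remark: the displayed chain of binomial inequalities is more than is needed --- once some $k_i\ge 2$ with nonempty parts, the corresponding multinomial coefficient is an integer strictly greater than $1$, hence at least $2$, which already gives the bound.
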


\noindent Theorem \ref{th:performance} states that {\em Algorithm \ref{alg1} is an $O(n^3)$ time   algorithm for {the MWSTS   problem} on $(V,U_x)$ with {approximation} ratio   \[\displaystyle\ln\left( \max_{T\in U_x}IC(T,\emptyset)\right)\!+\!1\!\le\!\ln n\!+\!\ln\log_2 n\!+\!1.\]}
\begin{proof}   Suppose that an optimum solution of {the MWSTS problem on} $(V,U_x)$ is $\mathcal T^*=\{T_1^*,\ldots,T_k^*\}$. {The weight of the solution} is $\sum_{i=1}^kw(T_i^*)$. During the execution of Algorithm \ref{alg1}, for a current partial test set $\mathcal T$, let {$\mathcal T_i:=\mathcal T+T_1^*+\cdots+T_i^*$} (accordingly, $\mathcal T_0=\mathcal T$) and {$h_i:=IC(T_i^*,\mathcal T_{i-1})$.} Notice that $\sum_{i=1}^kh_i=\sum_{i=1}^k(H_{\mathcal T_{i-1}}-H_{\mathcal T{i-1}+T_i^*})=H_{\mathcal T}-H_{\mathcal T+\mathcal T^*}=H_{\mathcal T}$, since {$0\le H_{\mathcal T+\mathcal T^*}\le H_{\mathcal T^*}=0$.} Let $h_i^*<n\log_2n$ denote the initial value of $h_i$, i.e. the value of $h_i$ with $\mathcal T=\emptyset$.

During the $j$-th iteration of the while loop, Algorithm \ref{alg1} selects a test $T$ (with, say, $IC(T,\mathcal T)=\Delta_j$) and changes $\mathcal T$ {to} $\mathcal T+T$. As a result, {entropy} $H_{\mathcal T}$ drops by $\Delta_j$ and $h_i$ drops by some {$\delta_{i,j}$ such that} $\sum_{i=1}^k\delta_{i,j}=\Delta_j$. This iteration adds $w(T)$ to the solution {weight}. We distribute   {$w(T)$} among the elements of $\mathcal T^*$ by charging {each} $T_i^*$ with $w(T)\cdot\delta_{i,j}/\Delta_j$. Recall {from Lemma \ref{ich} that} $h_i=IC(T_i^*,\mathcal T_{i-1})\le IC(T_i^*,\mathcal T)$. {By the choice of $T$, we have $\frac{\Delta_j}{w(T)}=\frac{IC(T,\mathcal T)}{w(T)}\ge\frac{IC(T_i^*,\mathcal T)}{w(T_i^*)}$.} 
Therefore reducing the current $h_i$ by $\delta_{i,j}$ is associated with a charge that is at most {$w(T_i^*)\cdot\delta_{i,j}/IC(T_i^*,\mathcal T)\le w(T_i^*)\cdot\delta_{i,j}/h_i$.}

 Let $w(h)$ be the supremum of possible sums of charges that some $T_i^*$ may receive starting from the time when $h_i=h>0$. By induction on the number of such positive charges we  show $w(h)\le(1+\ln h)\cdot w(T_i^*)$. If this number is $1$, then $h>0$ and hence $\ln h\ge 0$ (by Lemma \ref{0t1}), while the charge is at most $w(T_i^*)$. In the inductive step, we consider the situation, {starting with $h_i=h$, where} 
 $T_i^*$ 
 receives a single charge {at most} $w(T_i^*)\cdot\delta/h$, $h_i$  is reduced to {$h-\delta>0$} and afterwards. Because of $h-\delta>0$, Lemma \ref{0t1} {gives} $h-\delta\ge1$. By induction assumption, $T_i^*$ receives at most $w(h-\delta)$ charges, and $w(h)\le w(h-\delta)+ \frac{\delta}{h}w(T_i^*)\le (1+\ln(h-\delta)+\frac{\delta}{h})w(T_i^*)$. {Now $w(h)\le(1+\ln h)\cdot w(T_i^*)$ follows from $1+\ln(h-\delta)+\frac{\delta}{h}<(1+\int_{1}^{h-\delta}\frac{dx}{x}+\int_{h-\delta}^h\frac{dx}{x})=1+\int_1^h\frac{dx}{x}=1+\ln h $.}
{Recalling $h_i^*<n\log_2n$, we have $w(h_i^*)<(1+\ln n+\ln\log_2n)\cdot w(T_i^*)$.} 
 This proves our result on the approximation ratio.

 To implement Algorithm \ref{alg1}, we first compute a table   storing the distances between each pair of vertices in $G$, which takes $O(n^3)$ time. Having this table, for any given 
$T\in U_x-\mathcal T$, it is easy to compute in $O(n)$ time the equivalence classes of $\equiv^{\mathcal T\cup T}$ based the equivalence classes of $\equiv^{\mathcal T}$. Besides, notice that $|U_x|=n-1$ and the while loop is executed at most $n$ times. So the algorithm runs in $O(n^3)$ time. 
 \end{proof}

\section{Cycles}\label{apx:cycle}
Let $G=v_1v_2\cdots v_nv_1$ be a cycle, {with $w(v_p)=\min_{i=1}^nw(v_i)$,} where $p:=\lceil n/2\rceil$. 

\medskip\noindent Lemma \ref{n/2} states that {\em
Let $S$ be a {nonempty} subset of  $V$.  Then $S$ is a DRS of $G$ if and only if no path in $\mathcal P_S$ has length longer than $\lceil n/2\rceil$  and at least one path in $\mathcal P_S$ has length shorter than $n/2$. }
\begin{proof}
If there exists one path in $\mathcal P_S$ is longer than $\lceil n/2\rceil$, suppose without loss of generality that this path is $v_1 v_2 \cdots v_{\lceil n/2\rceil+1}$. So from the definition of $\mathcal P_S$ we know, except the two end vertices, there is no other vertex of $S$ on the path. Then we say $v_1$ and $v_2$ can not be doubly resolved by set $S$, because for any vertex $s\in S$, $d(s,v_2)-d(s,v_1)=1$.

If there is no path in $\mathcal P_S$ has length longer than $\lceil n/2\rceil$ and no path in $\mathcal P_S$ has length shorter than $n/2$, then it means $n$ is an even number and $|S|=2$. Suppose without loss of generality, $S=\{v_1, v_{1+n/2}\}$. Then $S$ can not doubly resolve the pair of vertices $\{v_i, v_{n-i+2}\}$ for $i\neq 1, 1+n/2$. So $S$ is not a DRS.

If no path in $\mathcal P_S$ has length longer than $\lceil n/2\rceil$ and at least one path in $\mathcal P_S$ has length shorter than $n/2$, we show that for any two vertices, they can be doubly resolved by $S$. For any two vertices $u,v\in V$, if there is no vertex of $S$ on the shortest path from $u$ to $v$, then it must be the case that $u,v$ are both on one path of $\mathcal P_S$. Since no path in $\mathcal P_S$ has length longer than $\lceil n/2\rceil$, $u,v$ can be doubly resolved by the two end vertices of that path. If there are two or more vertices of $S$ on the shortest path from $u$ to $v$, say $v_i, v_j$, then $(d(u,v_i)-d(u,v_j))\cdot(d(v,v_i)-d(v,v_j))<0$, which implies that $u,v$ are doubly resolved by $\{v_i,v_j\}$. If there is only one vertex $v_i\in S$ on the shortest path from $u$ to $v$, then there is another vertex $v_j\in S$ such that $u$ is on one path of $\mathcal P_S$, whose two end vertices are $v_i,v_j$. If $\{v_i,v_j\}$ can doubly resolve $u,v$, then we are done; Otherwise, $d(u,v_i)-d(u,v_j)=d(v,v_i)-d(v,v_j)$, then it must be the case $n=d(u,v_i)+d(u,v_j)+d(v,v_i)+d(v,v_j)$, so $n$ is an even number and there exists a third vertex $v_{j'}\in S(j\neq j')$, s.t. $d(u,v_i)-d(v,v_i)\neq d(u,v_{j'})-d(v,v_{j'})$. So $S$ can still doubly resolve $\{u,v\}$.
\end{proof}
\noindent Corollary \ref{cor:vp} states that {\em If some minimum weighted DRS has cardinality $3$, then there exists a minimum weighted DRS {of $G$ that contains vertex $v_p$.}}

\begin{proof}
Suppose without loss of generality $\{v_i,v_j,v_k\}$ is a minimum weighted {DRS}, where $i<j<p<k$. Then from Lemma \ref{n/2}, {either $\{v_i,v_p,v_k\}$ or $\{v_i,v_j,v_p\}$} is a DRS. Since the weight $w(v_p)$ of $v_p$ is minimum among all vertices, $\{v_i,v_p,v_k\}$ or $\{v_i,v_j,v_p\}$ is also a minimum weighted DRS.
\end{proof}

\noindent Theorem \ref{th:cycle} states that
{\em Algorithm \ref{alg2} finds in  $O(n)$  time a minimum weighted DRS of cycle $G$.}


\begin{proof}It suffices to show that at least one of $\{v_i,v_{i+p-1}\}$, $i=1,2,\ldots,p$, and $\{ v_{i[j]},v_p,u_j\}$, $j=1,2,\ldots,k-1$ is a minimum weighted DRS of $G$. Notice from Lemma \ref{n/2} that  the doubly resolving sets of $G$ that have size two are exactly  $\{v_i,v_{i+p-1}\}$, $i=1,2,\ldots,p$,  {and all of $\{ v_{i[j]},v_p,u_j\}$, $j=1,2,\ldots,k-1$ are doubly resolving sets of $G$.} 
{It remains to} consider the case of three vertices. Suppose that $G$ has no minimum weighted DRS of size two, and {by Corollary \ref{cor:vp} that} $S=\{v_\ell,v_p,v_r\}$ is a minimum weighted DRS. By Lemma \ref{n/2} we may assume {$1\le \ell< p<1+p\le r\le n$}. 

If    $\ell=i[j]$ for some $1\le j<k$,  then Lemma \ref{n/2} implies that $r\ge i[j]+p-1$ and every $\{v_\ell,v_p,v_h\}$ with $v_h\in V_j$ is a DRS of $G$. We may assume that {$n\ge r\ge i[j+1]+p+1 $ as otherwise either $\{v_\ell,v_{i[j]+p-1}\}$ (when $r=i[j]+p-1$) or $\{v_\ell,v_p,u_j\}$ (when $r\in V_j$) would be a minimum weighted DRS of $G$. It follows  that} 
$\{ v_{i[j+1]},v_p,v_r\}$ is a DRS (by Lemma \ref{n/2}), {and thus a DRS of weight smaller than $w(S)$ as   $w(v_{i[j+1]})< w(v_{i[j]})$, a contradiction.} 

If   $i[j]<\ell<i[{j+1}]$ for some  $1\le j\le k-1$, then $w(v_{\ell})\ge w(v_{i[j]})\ge w(v_{i[j+1]})$.   {Similar to the above, we have}  $ r\ge \ell+p$. If $r\ge i[j+1]+p$, then $\{v_{i[j+1]},v_p,v_r\}$ is a DRS of $G$. The minimality of $S$ enforces that $w(v_{i[j+1]})=w(v_\ell)$ and $\{v_{i[j+1]},v_p,v_r\}$ is a minimum weighted DRS. As argued in the preceding paragraph, this set must be  $\{v_{i[j+1]},v_p,u_{j+1}\}$. Now we are left with the case $\ell+p\le r\le i[j+1]+p-1$. 
Recall that $u_j$ is {a} vertex on the path {$v_{i[j]+p}v_{i[j]+p+1}\cdots v_{i[j+1]+p}$} that has minimum weight, saying $w(u_j)\le w(v_r)$. It follows that {$\{v_{i[j]},v_p,u_j\}$ is a DRS with weight at most that of $S$. Thus}  
$\{v_{i[j]},v_p,u_j\}$  is also a minimum weighted DRS of $G$.
\end{proof}

\section{General $k$-edge-augmented trees}\label{apx2}
Observe that $G[R\cup (V\setminus V_b)]$ is a forest, where every component is a tree rooted at some unique root in $R$. Let $T_r$ denote the tree (component) rooted at  $r\in R$. In $G_b$, we change the weights of all roots to zero, while the weights of other vertices remain the same as in $G$. Next we present the proof of Lemma \ref{lem:base}. 

\medskip\noindent
  Lemma \ref{lem:base} states that {\em
Suppose that $S_b$ is a minimum weighted DRS of $G_b$. Then $S= (S_b\setminus R)\cup L$ is a minimum weighted DRS of $G$.}
\begin{proof}
Using the argument in the proof of Lemma \ref{tree}, we can easily prove $L$ is contained in every 
DRS of $G$.  Next we prove that $S$ is a   DRS of $G$.

Consider any two vertices   $u,v\in R\cup(V\setminus V_b)$. If $|L|\ge 2$, then $u,v$ must be on some shortest path between two leaves of $G$, 
which implies that 
$u,v$ are doubly resolved by these two leaves in $L$.
 If $|L|=1$, then $u,v$ are on the path from the only leaf, denoted as $l$, to the only root, denoted as $r$. Take 
 $s\in S_b\setminus \{r\}$. Since $d(l,u)-d(l,v)\neq d(r,u)-d(r,v)=d(s,u)-d(s,r)-(d(s,v)-d(s,r))=d(s,u)-d(s,v)$, it follows that $u,v$ are {doubly resolved} by $\{l,s\}$  ($\subseteq S=(S_b\setminus R)\cup L$). 

Consider any $u,v\in V_b$. Since $S_b$ is a DRS of $G_b$ and $d_G(u,v)=d_{G_b}(u,v)$, we see that $u,v$ are doubly resolved by $S_b$ in $G$.  If $R\cap S_b=\emptyset$, then $u,v$ are doubly resolved by $S$. Otherwise, we take 
 $r\in R\cap S_b$ and $l_r\in L\cap V(T_r)$. Since  $d(l_r,u)- d(l_r,v)=d(r,u)-d(r,v)$, it follows that $u,v$ are also doubly resolved by $S$. 

 Consider any $u\in V_b\setminus R$ and $v \in V\setminus V_b$. Suppose that $v$ is a vertex of tree $T_r$, where $r\in R$.
 There exists a leaf  $l\in L\cap V(T_r)$ such that $d(r,l)=d(r,v)+d(v,l)$. Since $r,u\in V_b$, we have shown in the above that $r$ and $u$ are {doubly resolved} by $S$. It follows that there exists $s'\in S$ 
 such that $d(u,s')<d(u,r)+d(r,s')$, as otherwise  $d(u,s')-d(r,s')=d(u,r)$ for any $s'\in S$ implies a contraction. From the inequality, we deduce that $s'$ is not a vertex of tree $T_r$, saying $d(v,s')=d(v,r)+d(r,s')$. It follows that 
\[d(u,s')-d(v,s')<d(u,r)+d(r,s')-(d(v,r)+d(r,s'))=d(u,r)-d(v,r).\]
On the other hand,
\begin{eqnarray*}
d(u,l)-d(v,l)&=&d(u,r)+d(r,l)-d(v,l)\\
&=&d(u,r)+d(r,v)+d(v,l)-d(v,l)=d(u,r)+d(v,r).\end{eqnarray*}
 Hence $d(u,s')-d(v,s')<d(u,l)-d(v,l)$, saying that $u,v$ are doubly resolved by $\{s',l\}\subseteq S$. We have shown that any pair of vertices $\{u,v\}$ in $G$ can be doubly resolved by $S$. Thus $S$ is indeed a DRS of $G$.

 Now we prove the optimality of $S$. 
 Suppose on the contrary that $S'=K\cup S_b'$  is a DRS of $G$ and its weight is smaller than that of $S$, where $K\subseteq V-V_b$ and  $S_b'\subseteq V_b$. As mentioned at the beginning of  the proof,
 $L\subseteq K$, which implies that the weight of $S_b'$ is smaller than that of $S_b$. For any $w\in K$ and any two vertices $u,v\in V_b$, assume $w\in T_r$ for some $r\in R$. We have $d(u,w)-d(v,w)=d(u,r)-d(v,r)$. It follows that $R\cup S_b'$ is a DRS of   $G_b $. Note that the weights of vertices in $R$ are $0$ in $G_b$, so the weight of $R\cup S_b'$ is  smaller than that of $S_b$. It is a contradiction to the minimality of $S_b$ in $G_b$.
\end{proof}

Recall that every vertex of
  $G_b=(V_b,E_b)$ has degree  at least $2$.    {The following structural property of the base graph has been stated in \cite{elw12} with a partial proof. We give a full proof for completeness.}
\begin{lemma}\label{lem:path}
The base graph $G_b$ is decomposed into $q\leq 3k-3$ edge disjoint paths,   each of which is a minimal path in $G_b$ with both ends being branching vertices  {of $G_b$}.
\end{lemma}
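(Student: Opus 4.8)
The plan is to bound the number of paths in the decomposition by controlling the number of branching vertices of $G_b$ and using a degree-sum (handshaking) argument. First I would record the key structural fact about $G_b$: since $G_b$ is obtained from the $k$-edge-augmented tree $G$ by repeatedly deleting leaves, $G_b$ has at most $k$ more edges than a spanning tree of itself, i.e. $|E_b|\le |V_b|-1+k$. Equivalently, the cyclomatic number of $G_b$ is at most $k$. Next, decompose $G_b$ as follows: mark all branching vertices (degree $\ge 3$), and take the maximal paths whose internal vertices all have degree exactly $2$ in $G_b$; since every vertex of $G_b$ has degree at least $2$, each such maximal path has both endpoints branching (a vertex of degree $2$ cannot be an endpoint of a maximal degree-$2$ path unless the whole component is a cycle, which I would handle separately as a degenerate case). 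These paths are edge-disjoint and their union is $G_b$, so I just need to count them.

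For the counting step, let $B$ be the set of branching vertices and let $q$ be the number of paths. Contract each maximal degree-$2$ path to a single edge; this yields a multigraph $H$ on vertex set $B$ with exactly $q$ edges, and $H$ has the same cyclomatic number as $G_b$, namely at most $k$. Since $H$ is connected (as $G_b$ is), $q = |E(H)| \le |B| - 1 + k$. Now I bound $|B|$: by handshaking in $G_b$, $\sum_{v\in V_b}\deg_{G_b}(v) = 2|E_b| \le 2(|V_b|-1+k)$, and since every vertex has degree $\ge 2$ with branching vertices contributing $\ge 3$, we get $2|V_b| + |B| \le \sum_v \deg_{G_b}(v) \le 2|V_b| - 2 + 2k$, hence $|B| \le 2k-2 = 2(k-1)$. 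Combining, $q \le |B| - 1 + k \le (2k-2) - 1 + k = 3k - 3$, as claimed. (The case $k\le 1$, where $G_b$ is empty or a cycle, is vacuous or trivial for this lemma, which is invoked only for $k\ge 2$.)

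The one genuine subtlety — and what I expect to be the main obstacle — is the bookkeeping when $G_b$ has multi-edges or loops in the contracted multigraph $H$: two branching vertices joined by several internally-degree-$2$ paths, or a path from a branching vertex back to itself. The cyclomatic-number bound $|E(H)| \le |V(H)| - 1 + k$ remains valid for connected multigraphs (it is just $|E| - |V| + 1 = $ cyclomatic number $\le k$), so the argument goes through, but I would state the decomposition carefully enough that these degenerate configurations are explicitly allowed (a ``path'' with both ends at the same branching vertex should be permitted, or such cycles peeled off first). Also I should double-check the edge case where $G_b$ itself is a single cycle with no branching vertex: then there is a single path which is in fact a cycle through an arbitrarily chosen vertex, and $q=1 \le 3k-3$ holds since $k\ge 2$; I would note this explicitly rather than let the handshaking bound $|B|\ge 0$ do the work silently. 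Everything else is routine.
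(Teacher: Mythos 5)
Your proof is correct and takes essentially the same route as the paper: the same decomposition of $G_b$ into maximal paths with degree-$2$ interiors and branching endpoints, with the bound ultimately coming from the handshake count $\sum_{v\in V_b}\left(d_{G_b}(v)-2\right)=2\left(|E_b|-|V_b|\right)\le 2k-2$ together with $d_{G_b}(v)\ge 3$ at branching vertices. The paper packages the final count as $2q=\sum_{\text{branching }v}d_{G_b}(v)\le 3(2k-2)$ instead of your contraction argument ($q\le |B|-1+k$ and $|B|\le 2k-2$), but the two computations are interchangeable, and your handling of the degenerate cases (loops/multi-edges in the contracted multigraph, $G_b$ a cycle) matches the paper's implicit restriction to $k\ge 2$.
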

\begin{proof}Observe that $G_b$ is also a $k$-edge-augmented tree $G$. Therefore $|E_b|=|V_b|-1+k$, $\sum_{v\in V_b}d_{G_b}(v)=2|V_b|+2k-2$ and $\sum_{v\in V_b}(d_{G_b}(v)-2)=2k-2$. It follows that $\sum_{v\in V_b,d_{G_b}(v)\ge3}d_{G_b}(v)\le3(2k-2)$. Note that the degree sum of branching vertices is exactly $2q$. Thus $2q\le3(2k-2)$ proves the result. \end{proof}

In the metric dimension problem, for any minimal RS $S'$, and any path $P$ in the above path decomposition of $G_b$, it was shown in \cite{elw12} that  the number of vertices in $S'$ that are  {``associated''} with $P$ is at most six. Next, we prove an analogue for  minimal doubly resolving sets. 
\begin{lemma}\label{lem:four}
Let $S$ be a minimal {DRS of $G_b$}, and let $P$ be a path in the path decomposition of $G_b$ {stated in Lemma \ref{lem:path}}. Then {$P$ contains at most four vertices from $S$.}
\end{lemma}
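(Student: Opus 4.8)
The plan is to argue by contradiction: assume that some minimal DRS $S$ of $G_b$ contains at least five vertices lying on a single path $P$ of the decomposition from Lemma~\ref{lem:path}, and show that one of these five vertices can be removed from $S$ without destroying double resolvability, contradicting minimality. Write $P=p_0p_1\cdots p_m$ where $p_0$ and $p_m$ are the branching ends and all internal vertices have degree $2$ in $G_b$. The five vertices of $S$ on $P$ split $P$ into consecutive blocks; I want to identify an ``interior'' one, call it $z$, flanked on $P$ by two other vertices $a,b\in S\cap P$ with $a$ on the $p_0$-side of $z$ and $b$ on the $p_m$-side, and such that $z$ is not itself a branching end (having five vertices of $S$ on $P$ guarantees at least three that are strictly internal to $P$, hence have degree $2$ in $G_b$, so such a $z$ with flankers in $S$ exists). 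The claim will be that $S\setminus\{z\}$ is still a DRS.

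The key step is the following observation about how distances to an internal degree-$2$ vertex $z$ of $P$ behave. Since $z$ has degree $2$, every shortest path from an arbitrary vertex $x\notin P^{\circ}$ (the interior of $P$) to $z$ must enter $P$ through one of its two ends $p_0$ or $p_m$; consequently, for such $x$, $d_{G_b}(x,z)=\min\{d(x,p_0)+d(p_0,z),\,d(x,p_m)+d(p_m,z)\}$, and the ``$\pm$-pattern'' $d(u,z)-d(v,z)$ is controlled by the patterns $d(u,p_0)-d(v,p_0)$ and $d(u,p_m)-d(v,p_m)$ and the positions of $u,v$ relative to $z$ along $P$. First I would isolate the relevant case analysis: for any pair $\{u,v\}$ doubly resolved in $S$ \emph{only} by pairs involving $z$, both partners of every such resolving pair must be chosen so that replacing $z$ by $a$ (if $u,v$ lie ``toward'' $p_0$ from $z$ or are off $P$) or by $b$ (symmetric) still yields a strict inequality of differences. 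The heart of the matter is that $a$ sees the $p_0$-side ``the same way'' $z$ does up to a constant shift equal to $d(a,z)$, and $b$ likewise for the $p_m$-side; so whenever $z$ resolves a pair against some $s\in S$, either $a$ or $b$ resolves the same pair against the same $s$, because the constant shift cancels in the double-resolving difference unless the geometry forces the minimizing direction to flip — and that flip can be absorbed using the third flanking relationship.

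I expect the main obstacle to be the bookkeeping in the case where $u$ and/or $v$ lie strictly inside $P$ between $a$ and $b$: there the distance from an external vertex can realize its minimum through either end, and the sign of $d(\cdot,z)-d(\cdot,s)$ genuinely depends on which end is closer, so a naive substitution $z\mapsto a$ or $z\mapsto b$ need not preserve the inequality for every $s$. The fix I anticipate is to use the fact that $S$ contains \emph{both} $a$ and $b$ (and the outer two vertices on $P$): whichever end of $P$ a pair ``sees'' shortest, the nearer of $a,b$ plays the role of $z$ for that pair, while double resolvability of pairs genuinely straddling $z$ can be recovered from the pair $\{a,b\}$ itself, since $a$ and $b$ lie on opposite sides of $z$ along $P$ and hence $d(\cdot,a)-d(\cdot,b)$ already separates vertices according to their side of $z$. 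I would organize the proof as: (i) set up $P$, the five points, and choose $z,a,b$; (ii) prove the min-over-two-ends distance formula and the constant-shift lemma for $a$ versus $z$ on the $p_0$-side and $b$ versus $z$ on the $p_m$-side; (iii) take an arbitrary pair $\{x,y\}$, assume for contradiction it is doubly resolved in $S$ only via $z$, and derive a contradiction by exhibiting a $z$-free resolving pair among $\{a,b\}$ and the remaining elements of $S$; (iv) conclude $S\setminus\{z\}$ is a DRS, contradicting minimality, so $|S\cap P|\le 4$.
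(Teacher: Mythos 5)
Your overall strategy (contradict minimality by deleting a vertex of $S$ on $P$) is the right one, and your degree-two observation $d(x,z)=\min\{d(x,p_0)+d(p_0,z),\,d(x,p_m)+d(p_m,z)\}$ for $x$ outside the interior of $P$ is correct. But there is a genuine gap, and it sits exactly where you anticipated trouble. First, your choice of $z$ is essentially arbitrary: the only place you use ``five'' is to guarantee that some vertex of $S\cap P$ is internal and flanked by two others. That condition is already met whenever $|S\cap P|\ge 3$ (the middle vertex in the order along $P$ is internal and flanked), so if your substitution argument worked for an arbitrary flanked internal $z$, it would prove that every minimal DRS has at most two vertices per path --- a much stronger statement than the lemma, and one that fails in near-cycle configurations (e.g.\ a theta-like base graph with one very long path, where symmetric pairs on that path force several interior observers). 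So the removable vertex must be chosen with care, and the proof has to exploit that choice. The paper does this by keeping, besides the two outermost vertices $s=v_0$ and $t=v_l$ of $S$ on $P$, the two vertices $v_{i_0},v_{j_0}$ of $S$ closest to the midpoint ($i_0\le l/2\le j_0$), and deleting everything else; the extremal definition of $i_0,j_0$ is used repeatedly in its case analysis.

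Second, the key step you propose --- ``whenever $z$ resolves a pair against some $s$, either $a$ or $b$ resolves the same pair against the same $s$,'' and for pairs straddling $z$ inside $P$, ``$d(\cdot,a)-d(\cdot,b)$ already separates vertices according to their side of $z$'' --- is precisely the step that fails. For $u,v$ in the interior of $P$ on opposite sides of $z$, shortest paths to $a$, $b$, or to vertices off $P$ may wrap around through either end of $P$, and in symmetric (``bad-pair''-like) configurations one gets $d(u,a)-d(v,a)=d(u,b)-d(v,b)$, exactly as antipodal pairs on an even cycle are not resolved by two antipodal observers. Handling these wrap-around cases is the real content of the proof: the paper's Cases 2 and 3 deal with pairs having one or two vertices strictly between $s$ and $t$, and the contradictions there are derived from the positions of $i_0,j_0$ relative to $l/2$ together with an analysis of which end ($s$ or $t$) a relevant shortest path uses, showing that otherwise $S$ itself would fail to doubly resolve some pair (e.g.\ $\{v_j,s\}$ or $\{v_i,v_j\}$). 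Your sketch waves at this with ``the flip can be absorbed using the third flanking relationship,'' but no mechanism is given, and no mechanism of the simple substitution type can work, since the statement it would prove (deletability of an arbitrary flanked interior observer) is false. To repair the proposal you would need to (i) fix the deleted vertex to be one lying strictly between $s$ and the midpoint-nearest observer on its side (equivalently, keep $s,t,v_{i_0},v_{j_0}$ and delete the rest), and (ii) carry out the wrap-around case analysis for pairs inside $P$, which is where the four retained vertices, rather than just two flankers, are genuinely needed.
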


\begin{proof}
 Assume on the contrary that there are more than four vertices of $S$ on path $P$.  {Note that for any two vertices in $G_b$, the distance $d_{G_b}(u,v)$ between them is $d(u,v)$ the same as that in $G$.} We suppose the two outmost vertices of $S$ on $P$ are $s, t$ and the subpath of $P$ from $s$ to $t$ is  $v_0v_1\cdots v_l$ with $v_0=s$ and $v_l=t$.   {So $d_P(v_i,v_j)=|j-i|$ for any $i,j\in\{0.1,\ldots,l\}$.}
 Define \begin{center}
 $i_0:=\max \{i:v_i\in S$ and  $0\le i\le l/2\}$, $ j_0:=\min \{i:v_i\in S$ and $ l/2\le i\le l\}$.
 \end{center}
 Obviously,  $i_0\le l/2\le  j_0$,  and  {by assumption} $T:=(S-V(P))\cup\{s,v_{i_0},v_{j_0},t\}$ is a proper set of $S$. The minimality of $S$ says that $T$ can not doubly resolve some pair $R$ of vertices of $G_b$. Write $U:=\{v_i:0<i<l\}$. We distinguish among three cases depending on the value of $|R\cap U|$, which is $0$ or $1$ or $2$. 

 \medskip
\noindent {\em Case 1.} $|R\cap U|=0$.
Suppose that $R=\{u,v\}$, and $\{s_1,s_2\}$ doubly resolves $R$ with $s_1,s_2\in S$ and $s_1\not\in T$.  {If there exists $s_h\in\{s_1,s_2\}- T $ such that by switching $u$ and $v$ if necessary we have   $d(u,s_h)=d(u,s)+d_P(s,s_h)$, $d(v,s_h)<d(v,s)+d_P(s,s_h)$   and $d(v,s_h)=d(v,t)+d_P(t,s_h)$, $d(u,s_h)<d(u,t)+d_P(t,s_h)$, then these four (in)equalities imply $d(u,s)-d(v,s)<d(u,s_h)-d(v,s_h)<d(u,t)-d(v,t)$, saying $R$ is doubly resolved by $\{s,t\}$. The contradiction shows that for any $s_h\in\{s_1,s_2\}- T$, there exists $r_h\in\{s,t\}$ such that $d(u,s_h)=d(u,r_h)+d_P(r_h,s_h)$ and $d(v,s_h)=d(v,r_h)+d_P(r_h,s_h)$, giving
\[d(u,s_h)-d(v,s_h)=d(u,r_h)-d(v,r_h)\text{ for every }s_h\in\{s_1,s_2\}- T.
\]
Since $R$ is doubly resolved by $\{s_1,s_2\}$, it is easy to see from the above equation that $R$ is doubly resolved by $\{r_1,r_2\}$ if $s_2\not\in T$ and by $\{r_1,s_2\}$ otherwise, a contradiction. }

\medskip
\noindent{\em Case 2.} $|R\cap U|=1$. Suppose $R=\{u,v_j\}$ with  $u\in V-U$ {and $v_j\in U$. Note that $1\le j\le l-1$.} 
In case of $j<i_0$,    note from $i\le l/2$ that  {$ d_P(v_j,v_{i_0})=i_0-j<l/2$. If $d(u,v_{i_0})=d(u,s)+d_P(s,v_{i_0})=d(u,s)+i_0$}, then $R$ is doubly resolved by 
$\{s,v_{i_0}\}$ since  {$d(u,v_{i_0})-d(u,s) =i_0>d_P(v_j,v_{i_0})\ge d(v_j,v_{i_0})-d(v_j,s)$}; otherwise {$d(u,v_{i_0})=d(u,t)+d_P(t,v_{i_0})=d(u,t)+l-i_0$}, and $R$ is 
doubly resolved by $\{v_{i_0},t\}$ since  {$d(u,v_{i_0})-d(u,t)=l-i_0\ge l/2>d_P(v_j,v_{i_0})\ge d(v_j,v_{i_0})-d(v_j,t)$}. Similarly, in case of $j_0<j$, we obtain the contradiction that $R$ is doubly resolved by either $\{t,v_{j_0}\}$ or $\{v_{j_0},s\}$.

In case of  $0<i_0\le j\le j_0<l$,  {let $x\in\{s,t\} $ and $y\in\{v_{i_0},v_{j_0}\}$ satisfy $d(u,x)=\min\{d(u,s),d(u,t)\}$ and $d(v_j,y)=\min\{d(v_j,v_{i_0}),d(v_j,v_{j_0})\}$. Observe that $\min\{d(u,s),d(u,t)\}<\min\{d(u,v_{i_0}),d(u,v_{j_0})\}$ and $ \min\{d(v_j,v_{i_0}),d(v_j,v_{j_0})\}$ $ <\min\{d(v_j,s),d(v_j,t)\}$.}
It follows that $d(u,x)-d(u,y)<0<d(v_jx)-d(v_j,y)$,  contradicting the fact that $\{x,y\}\subseteq\{s,t,v_{i_0},v_{j_0}\}\subseteq T$ does not doubly resolve $R$. 
 Notice from  $|S\cap V(P)|>4$ that $\{i_0,j_0\}\ne\{0,l\}$. It remains to consider the case of $0=i_0< j\le j_0<l$ and that of $0<i_0\le j<j_0=l$. By symmetry, it suffices to consider $0=i_0< j\le j_0<l$.

  If  {$d(v_j,v_{j_0})=d(v_j, t)+d_P(t,v_{j_0})$},  {then from the definitions of $i_0,j_0$ it is easy to see that} 
  $d(v_j,v)-d(s,v)=j$ for every $v\in S$, saying that $S$ does not doubly resolve $\{v_j,s\}$. The contradiction enforces
   {\[ d(v_j, t)+d_P(t,v_{j_0})>d(v_j,v_{j_0}).\]} If  {$d(u,v_{j_0})=d(u,s)+d_P(s,v_{j_0})$, then $d(u,v_{j_0})-d(u,s)=d_P(s,v_{j_0})>d_P(v_j,v_{j_0})$ $\ge d(v_j,v_{j_0})-d(v_j,s) $} shows that $R$ is doubly resolved by $\{v_{j_0},s\}$, a contradiction. Therefore we have  {$d(u,v_{j_0})=d(u,t)+d_P(t,v_{j_0})$}.  Since $\{v_{j_0},t\}$ does not doubly resolve $R=\{u,v_j\}$, we have  {$d_P(t,v_{j_0})=d(v_j,v_{j_0})-d(v_j,t)$}, a contraction to  {$d(v_j, t)+d_P(t,v_{j_0})>d(v_j,v_{j_0})$}.  

\medskip
\noindent {{\em Case 3.} $|R\cap U|=2$. Suppose $R=\{v_i,v_j\}$ with $0< i<j< l$. If $j\le i_0$,} 
then $R$  is doubly resolved by $\{s,v_{i_0}\}$ since  {$i_0\le l/2$ implies $d(s,v_i)-d(s,v_j)=d_P(s,v_i)-d_P(s,v_j)=i-j$ and $d(v_{i_0},v_i)-d(v_{i_0},v_j)=d_P(v_{i_0},v_i)-d_P(v_{i_0},v_j)=j-i$.} {Similarly, if $i\ge j_0$, then $R$ would be doubly resolved by $\{t,v_{j_0}\}$. Therefore
\[j>i_0\text{ and }i<j_0.\]}
If $i<i_0<j\le j_0$,  {then the shortest path between $v_j$ and $s$ in $G_b$ contains either $v_{i_0}$ or $t$. In the former case, $R$ is} 
doubly resolved by $\{s,v_{i_0}\}$ as  {$d(v_i,s)-d(v_i,v_{i_0})=d_P(v_i,s)-d_P(v_i,v_{i_0})=i-(i_0-i)<i_0$ and $ d(v_j,s)-d(v_j,v_{i_0})=d_P(v_j,s)-d_P(v_j,v_{i_0})=i_0$};  {in the latter case, $R$ is} 
 doubly resolved by $\{s,t\}$ because $d(v_i,s)-d(v_i,t)<0$ while $d(v_j,s)-d(v_j,t)>0$.   {Similarly, if $i_0\le i\le j_0<j$, then $R$ would be doubly resolved by $\{t,v_{j_0}\}$ or $\{t,s\}$. Hence it must be the case that
 \[\text{either }i_0\le i<j\le j_0\text{ or }i<i_0\le j_0<j.\]}
If $i<i_0\le j_0<j$, then $R$ is doubly resolved by $\{s,t\}$ because   {$d(v_i,s)=d_P(v_i,s)\le l/2$ and $d(v_j,t)=d_P(v_j,t)\le l/2$   imply} $d(v_i,s)-d(v_i,t)<0<d(v_j,s)-d(v_j,t)$. Now we are   left with the case of $i_0\le i<j\le j_0$.

 Since $\{v_{i_0},v_{j_0}\}$ does not doubly resolve $R$, there exist   {$x\in R=\{v_i,v_j\}$} and $y\in \{v_{i_0},v_{j_0}\}$ such that some shortest path, denoted as $Q$,  between $x$ and $y$ goes through $s$ and $t$.  If   {$x=v_i$ and $y=v_{i_0}$}, then path $Q$ goes through  $v_iv_{i+1}\cdots v_j\cdots $ $v_{j_0}v_{j_0+1}\cdots v_{l-1}v_l $; it follows from the definitions of $i_0$ and $j_0$ that all shortest paths between $v_i$ and vertices in $S$ go through $v_j$, implying $d(v_i,v)-d(v_j,v)=j-i$ for every $v\in S$, a contradiction to  resolvability of $S$. Therefore $\{x,y\}\ne\{v_i,v_{i_0}\}$, and similarly $\{x,y\}\ne\{v_j,v_{j_0}\}$. This particularly says that $G_b$ has only one shortest path between $v_i$ and $v_{i_0}$ (resp. $v_j$ and $v_{j_0}$), which is the one contained in $P$. Thus
 \[d(v_i,s)+d(s,v_{i_0})>d(v_i,v_{i_0}) \text{ and }d(v_j ,t)+d(t,v_{j_0})>d(v_j,v_{j_0}).\]
 In view that $\{x,y\}$ is either $\{v_i,v_{j_0}\}$ or $\{v_j,v_{i_0}\}$, we assume by symmetry that $x=v_i$ and $y=v_{j_0}$. From  {the shortest path} $Q=v_iv_{i-1}\cdots v_0\cdots v_lv_{l-1}\ldots v_{j_0}$ we deduce that $d(v_i,v_{j_0})-d(v_i,t)=d(t,v_{j_0})$.   Since $\{v_{j_0},t\}$ does not doubly resolve $R=\{v_i,v_j\}$, we have  $d(t,v_{j_0})=d(v_j,v_{j_0})-d(v_j,t)$. However this shows a contradiction to $d(v_j ,t)+d(t,v_{j_0})>d(t,v_{j_0})$.  

 \medskip
 The  {contradictions} derived in the above cases prove the lemma.
 \end{proof}

A corollary of Lemmas  \ref{lem:base},   \ref{lem:path} and Lemma \ref{lem:four} gives $|L|\le {\tt dr}(G)\le |L|+12(k-1)$ for $k$-edge-augmented tree $G$. On the other hand, it was shown in \cite{elw12} that ${\tt md}(G_b)\le18(k-1)$. Let graph $G'$ be obtained from $G_b$ by attaching a pendant edge to each vertex. Then $G'$ has exactly $|V_b|$ leaves, and   ${\tt dr}(G')\ge|V_b|$. Using Lemma 3 of \cite{elw12}, one has {${\tt md}(G')\le18(k-1)$,} and ${\tt dr}(G')/{\tt md}(G')=\Omega(|V_b|)$ can be arbitrarily large.
\section{Wheels} \label{apx:wheel}
A   {\em wheel}  $G=(V,E)$ on $n$ vertices $1,2,\ldots,n$ is formed by the {\em hub} vertex $n$ and a cycle {$C=(V_c,E_c)$} over the vertices $1,2,\ldots,n-1$, called {\em rim} vertices, {where the hub is adjacent to some (not necessarily all) rim vertices}. The neighbors of hub $n$ are called \emph{connectors}. {The dynamic programming approach has been proved successful in finding minimum weighted resolving sets on wheels \cite{elw12}. In this section, we show that the approach also works for DRS: a minimum weighted DRS on a wheel can be found in cubic time; the computing time improves to be linear when the wheel is complete.}

 For simplicity, in this  {section} we consider $G=(V,E)$ a (complete or uncomplete) wheel with at least 13 connectors, and {design exact algorithms for solving the MWDRS problem on  $G$ in cubic time. Observe that wheels with at most 12 connectors are $k$-edge-augmented trees with $k\le 12$ in which the MWDRS  problem is solvable in strongly polynomial time (see Theorem \ref{th:ktree}).}

The clockwise order of rim vertices along $C$ is $1,2,\ldots,n-1,1$. Given any rim vertices $u$ and $v$ (possibly $u=v$), we use $C[u,v]$ to denote the clockwise path of $C$ from $u$ to $v$.  {In case of $u=v$, path $C[u,v]$ consists of a single vertex $u$.} Define $C(u,v):=C[u,v]-\{u,v\}$.    Let $u$ and $v$ (possibly $u=v$) be rim vertices. We say that $u$ and $v$ are {\em close} to each other if $d(u,v)<d(u,n)+d(n,v)$.  We say that rim vertex $u$ is close to vertex subset $S$ if there is $s\in S$ such that $u$ is close to $s$.

Suppose that rim vertices $u$ and $v$ are close to each other. We say that $u$ is close to $v$ {\em from the left}, equivalently $v$ is close to $u$ {\em from the right}, if $C[u,v]$ has length $d(u,v)$. {Clearly, a rim vertex is close to itself from both the left and the right.} Observe that   {\em any two vertices on $C[u,v]$ are closed to each other}.
   {This fact and the following observation are frequently used in our discussion   implicitly or explicitly.}
\begin{Observation}\label{ob1}
 Let $s\in V_c $. The set of vertices which are close to $s$ induces a path $P_s$ of $C$ containing $s$. Moreover, suppose $P_s=C[u,v]$ is the clockwise path from $u$ to $v$. Then $C[u,s]$
(resp. $C[s,v]$) contains 
at most two connectors. \qed
\end{Observation}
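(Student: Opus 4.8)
The plan is to analyse the two one-sided closeness ranges around $s$ and then glue them; the connector bound of the ``moreover'' part is in fact extracted along the way, as it is precisely what guarantees that $P_{s}$ is a genuine path and not all of $C$. Let $R$ be the set of rim vertices close to $s$. For $w\in R$, the defining inequality $d(s,w)<d(s,n)+d(n,w)$ forces $d(s,w)$ to equal the length of the \emph{shorter} of the two arcs of $C$ joining $s$ and $w$ (a path from $s$ to $w$ either misses the hub, and then runs inside $C$, or passes through it), so in particular $d(s,w)\le\lfloor(n-1)/2\rfloor$. Accordingly split $R=R^{+}\cup R^{-}$, with $R^{+}=\{w\in R:|C[s,w]|=d(s,w)\}$ and $R^{-}=\{w\in R:|C[w,s]|=d(s,w)\}$; note $s\in R^{+}\cap R^{-}$.

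First I would show that $R^{+}$ is a clockwise sub-path of $C$ issuing from $s$, say $R^{+}=C[s,a]$ with $a$ its clockwise-most vertex. This rests on a ``downward closedness'' property: if $w\in R^{+}$ and $w'$ lies on $C[s,w]$, then by the fact (recorded just before the statement) that the vertices of a close arc are pairwise close, $w'$ is close to $s$; and since $|C[s,w']|\le|C[s,w]|=d(s,w)\le\lfloor(n-1)/2\rfloor$, the clockwise arc $C[s,w']$ is the short one, so it realizes $d(s,w')$ and $w'\in R^{+}$. Hence the arcs $C[s,w]$, $w\in R^{+}$, are nested and their union is $C[s,a]$; symmetrically $R^{-}=C[b,s]$ for its counterclockwise-most vertex $b$.

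Next --- and this is the technical core --- I would bound the connectors on $C[s,a]$ (the argument for $C[b,s]$ being its mirror image). If $C[s,a]$ carried at least two connectors, let $c_{1},c_{2}$ be its clockwise-first and clockwise-last ones; routing from $s$ to the hub via $c_{1}$ and from the hub to $a$ via $c_{2}$ gives $d(s,n)\le|C[s,c_{1}]|+1$ and $d(n,a)\le|C[c_{2},a]|+1$, hence $d(s,n)+d(n,a)\le|C[s,c_{1}]|+|C[c_{2},a]|+2=|C[s,a]|-|C[c_{1},c_{2}]|+2$. But $a\in R^{+}\subseteq R$ is close to $s$, so $|C[s,a]|=d(s,a)<d(s,n)+d(n,a)$; comparing the two bounds gives $|C[c_{1},c_{2}]|<2$, i.e.\ $c_{1}$ and $c_{2}$ coincide or are adjacent on $C$. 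Since every connector of $C[s,a]$ lies on the subarc between $c_{1}$ and $c_{2}$, there are at most two of them.

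Finally I would glue the pieces. By the previous step, $C[s,a]\cup C[b,s]$ contains at most four connectors, whereas $G$ has at least $13$, so $R\neq V_{c}$; since moreover $|C[s,a]|,|C[b,s]|\le\lfloor(n-1)/2\rfloor$, the two arcs can meet only at $s$ (otherwise their union would exhaust $C$), so $P_{s}:=R=C[b,s]\cup C[s,a]$ is the clockwise path from $b$ to $a$ through $s$ --- a genuine path of $C$ containing $s$. Putting $u=b$ and $v=a$ identifies $C[u,s]=R^{-}$ and $C[s,v]=R^{+}$, and the connector estimates then complete the ``moreover'' assertion. The point demanding the most care is the bookkeeping that keeps every distance-realizing arc the \emph{short} one (length $\le\lfloor(n-1)/2\rfloor$) throughout, together with the realization that the two halves of the statement are intertwined --- the connector bound must be in hand before one can even assert that $P_{s}$ is a path; beyond that the argument is elementary once $R$ has been split into $R^{\pm}$.
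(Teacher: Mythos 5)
Your proof is correct, and in fact the paper offers no argument for this Observation at all --- it is stated with a tombstone and no proof --- so your write-up fills a genuine gap rather than paralleling an existing one. The essential points all check out: since any hub-avoiding walk lives on $C$, closeness of $w$ to $s$ forces $d(s,w)$ to be the shorter arc length, at most $\lfloor (n-1)/2\rfloor$, which legitimizes the split $R=R^{+}\cup R^{-}$ and the downward-closedness giving $R^{+}=C[s,a]$, $R^{-}=C[b,s]$; the sandwich estimate $d(s,n)+d(n,a)\le |C[s,c_1]|+|C[c_2,a]|+2$ played against $|C[s,a]|=d(s,a)<d(s,n)+d(n,a)$ correctly yields $|C[c_1,c_2]|\le 1$, hence at most two connectors on each side of $s$; and invoking the standing assumption of at least $13$ connectors (made at the start of the wheels appendix) is exactly what is needed to exclude $R=V_c$, after which the two short arcs meet only at $s$ and $P_s=C[b,a]$ is a genuine path. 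One small remark: the pairwise-closeness fact for a distance-realizing arc that you cite is itself only asserted in the paper; it follows in one line from the triangle inequality (if $w'\in C[s,w]$ were not close to $s$, then $d(s,n)+d(n,w)\le d(s,n)+d(n,w')+d(w',w)=d(s,w')+d(w',w)=d(s,w)$, contradicting closeness of $s$ and $w$), so you could include that line to make your argument fully self-contained.
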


Consider $S\subset V$.  {If some rim vertex is not close to $S$, then  $d(u,s)=d(u,n)+d(n,s)$ for any $s\in S$, which implies that} $S$ can not doubly resolve $\{u,n\}$.  Consider rim vertices {$u,v,s$. If} $s$ is close to neither $u$ nor $v$, 
then $d(u,s)-d(v,s)=d(u,n)-d(v,n)$. Apparently we have the following observation.

\begin{Observation}\label{ob2}
Let   {$S$ be a DRS of $G$. The following}  hold:
\begin{mylabel}
 \item Every rim vertex  is close to $S$.
  \item For every $s\in S$, if there exist rim vertices $u,v$ close to $s$ such that
$d(s,u)-d(s,v)=d(n,u)-d(n,v)$, then there is  $s'\in S-\{s,n\}$ which is close to at least one of  $u$ and $v$
such that $d(s,u)-d(s,v)\neq d(s',u)-d(s',v)$.\qed
 \end{mylabel}
\end{Observation}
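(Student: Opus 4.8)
The plan is to derive both parts as short consequences of the two facts isolated in the paragraph immediately before the statement: (a) if a rim vertex $w$ is not close to $S$, then $d(w,s)=d(w,n)+d(n,s)$ for every $s\in S$; and (b) if a rim vertex $z$ is close to neither of two rim vertices $u,v$, then $d(u,z)-d(v,z)=d(u,n)-d(v,n)$. In both parts I argue by contradiction, using that the DRS $S$ doubly resolves every pair of vertices of $G$.

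For part (i), I would suppose that some rim vertex $w$ is not close to $S$ (in particular $w\notin S$, since a rim vertex is close to itself). By (a) and the symmetry of the distance, for each $s\in S$ we have $d(s,w)-d(s,n)=\bigl(d(w,n)+d(n,s)\bigr)-d(n,s)=d(w,n)$, which does not depend on $s$. Hence no pair $\{s_1,s_2\}\subseteq S$ can satisfy $d(s_1,w)-d(s_1,n)\neq d(s_2,w)-d(s_2,n)$, i.e.\ $S$ does not doubly resolve $\{w,n\}$, a contradiction. So every rim vertex is close to $S$, proving (i).

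For part (ii), fix $s\in S$ and distinct rim vertices $u,v$ that are close to $s$ and satisfy $d(s,u)-d(s,v)=d(n,u)-d(n,v)=:c$. Since $S$ is a DRS, some pair $\{s_1,s_2\}\subseteq S$ doubly resolves $\{u,v\}$, so $d(s_1,u)-d(s_1,v)\neq d(s_2,u)-d(s_2,v)$; consequently at least one of these two differences is distinct from $c$, and I let $s'$ denote the corresponding element of $\{s_1,s_2\}$, so that $d(s',u)-d(s',v)\neq c=d(s,u)-d(s,v)$. This inequality already forces $s'\neq s$; it also forces $s'\neq n$, because $s'=n$ would give $d(s',u)-d(s',v)=d(n,u)-d(n,v)=c$. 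Finally, if $s'$ were close to neither $u$ nor $v$, then fact (b) applied with $z=s'$ would yield $d(s',u)-d(s',v)=d(n,u)-d(n,v)=c$, again a contradiction; hence $s'$ is close to at least one of $u$ and $v$. Thus $s'\in S-\{s,n\}$ has all the properties required by (ii).

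There is really no substantial obstacle here; everything reduces to facts (a) and (b). The one place calling for a little care is the bookkeeping showing that the witness $s'$ does not accidentally coincide with $s$ or with the hub $n$ — this is handled by the observation that $s$, the hub, and every rim vertex close to neither $u$ nor $v$ all realise the same value $c$ of $d(\cdot,u)-d(\cdot,v)$, whereas $s'$ realises a value different from $c$. (One also uses implicitly that $u\neq v$, so that $\{u,v\}$ is a genuine pair requiring double resolution.)
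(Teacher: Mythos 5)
Your proof is correct and follows essentially the same route as the paper, which states the observation as an immediate consequence ("apparently") of the two facts in the preceding paragraph — that a rim vertex not close to $S$ makes $\{u,n\}$ unresolvable, and that a vertex close to neither $u$ nor $v$ gives the same difference as the hub. Your write-up simply makes explicit the bookkeeping (that the witness $s'$ differs from $s$ and from $n$) that the paper leaves implicit.
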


In our discussion, the additions and subtractions over {the vertex indices} $1,2, \ldots,n-1$ and $\infty$ are taken module $n-1$. The results of the additions and subtractions are numbers in $\{1,2,\ldots,n-1\}$ or $\infty$ or $-\infty$, where module operation on $\infty$ (resp. $-\infty$) always produces $\infty$ (resp. $-\infty$).
\begin{lemma}\label{ob3}
Suppose $S\subseteq V$ is a set  {to which every rim vertex is close. Then for any rim vertices $u,v$ (possibly $u=v$)} that close to a common vertex $s\in S$, there exists   {$s'\in S-\{s\}$ which is close to neither $u$ nor $v$.}
\end{lemma}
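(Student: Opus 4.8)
The plan is to argue by contradiction: suppose $u$ and $v$ are both close to $s\in S$, but every vertex of $S$ other than $s$ that is close to at least one of $u,v$ is in fact close to \emph{both} (or, more precisely, that no vertex of $S\setminus\{s\}$ avoids being close to both $u$ and $v$). I want to extract from this a rim vertex that witnesses failure of the hypothesis ``every rim vertex is close to $S$''. The natural candidate is a suitable rim vertex ``between'' or ``opposite'' $u$ and $v$ relative to the hub. First I would set up notation using Observation \ref{ob1}: for each $t\in S\cap V_c$, the set of rim vertices close to $t$ forms a subpath $P_t=C[a_t,b_t]$ of $C$ containing $t$, and $C[a_t,t]$, $C[t,b_t]$ each contain at most two connectors. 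In particular $P_s\supseteq\{u,v\}$, and $P_s$ is a proper subpath of $C$ (since $G$ has at least $13$ connectors, no single vertex is close to all of $C$ — otherwise the hub would be non-essential, contradicting Observation \ref{ob1}'s bound on connectors in $C[a_s,s]$ and $C[s,b_s]$).

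The key step is then to locate a rim vertex $z$ lying outside $P_s$ and, more importantly, outside $\bigcup_{t\in S\setminus\{s\}} P_t$ — which would contradict Observation \ref{ob2}(i), or just the standing hypothesis of the lemma. To do this I would use the assumed failure of the conclusion: every $s'\in S\setminus\{s\}$ is close to $u$ \emph{or} close to $v$. Combined with the path structure of each $P_{s'}$, being close to $u$ forces $P_{s'}$ to overlap $P_s$ near $u$, and likewise near $v$; so each $P_{s'}$ (for $s'$ a rim vertex) is ``anchored'' near the endpoints $a_s$ or $b_s$ of $P_s$, hence cannot reach a vertex $z$ that is diametrically far from both $u$ and $v$. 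The connectors in $S$ (there is at most one, the hub $n$, but $n$ is close to no rim vertex in the relevant sense) contribute nothing. Choosing $z$ to be, say, a rim vertex at distance close to $\lceil (n-1)/2\rceil$ from the arc $C[u,v]$ — concretely, a vertex such that every element of $S$ reaches $z$ only through the hub — gives $d(z,t)=d(z,n)+d(n,t)$ for all $t\in S$, so $z$ is not close to $S$, the desired contradiction.

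The main obstacle I anticipate is making the phrase ``anchored near the endpoints of $P_s$'' quantitatively precise and then ruling out \emph{all} of $\bigcup_{s'}P_{s'}$ covering the far side of $C$. One has to be careful: there could be many vertices $s'\in S$, each close to $u$ (say) but with $P_{s'}$ extending a long way around $C$ away from $v$, potentially covering $z$. The resolution should come from the connector-count bound in Observation \ref{ob1}: if $s'$ is close to $u$, then $u\in P_{s'}=C[a_{s'},b_{s'}]$, and $C[a_{s'},s']$, $C[s',b_{s'}]$ each contain at most two connectors, which pins down how far $P_{s'}$ can stretch in each direction from $u$ in terms of connector positions; since the wheel has many connectors spread around $C$, no such $P_{s'}$ can reach the ``opposite'' region. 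I would formalize this by fixing $z$ to be a connector (or a rim vertex adjacent in $C$ to connectors) whose clockwise and counterclockwise distances to $u$ and to $v$ along $C$ both exceed the maximal possible ``reach'' of any $P_t$ emanating from a vertex close to $u$ or $v$; the existence of such $z$ uses that $u$ and $v$ both lie in the single short arc $P_s$, so ``far from $u$ and $v$'' is essentially ``far from the short arc $P_s$'', leaving plenty of room on a cycle with $n-1\ge 13$ vertices. Once $z$ is exhibited, the contradiction with the hypothesis that every rim vertex is close to $S$ is immediate, completing the proof.
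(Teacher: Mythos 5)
Your proposal is correct and is essentially the paper's argument run in contrapositive form: the paper directly picks a connector $t$ on the arc $C(v,u)$ separated from each of $u$ and $v$ by at least five connectors (possible since $C[u,v]\subseteq P_s$ contains at most four of the at least thirteen connectors), notes that by hypothesis $t$ is close to some $s'\in S$, and that $s'$ lies in $P_t$, which the two-connectors-per-side bound of Observation \ref{ob1} makes disjoint from $P_u\cup P_v$ --- precisely the ``reach'' counting you sketch. The only polish needed is to make those connector counts explicit and to measure the separation in connectors (of which there are at least $13$), not in rim vertices.
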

\begin{proof}
By  Observation \ref{ob1}, there are at most 4 connectors contained in  $C[u,v]$. So there are at least $9$ connectors on  {$C(v,u)$, and we can take connector $t$ from $C(v,u)$ such that each of $C(v,t]$ and $C[t,u)$ contains at least 5 connectors. It follows from Observation \ref{ob1} that $P_t=C[x,y]$ is a proper subpath of $D(v,u)$, each of $C(v,x)$ and $C(y,u)$ contains at least 3 connectors, and $P_u\cup P_v$ is vertex-disjoint from $P_t$. Suppose that $t$ is close to $s'\in S$. Then $s'$ is contained in $P_t$, and thus is outside $P_u\cup P_v$, saying that $s$ is close to neither $u$ nor $v$.}  
\end{proof}

\begin{lemma}\label{dis}
Suppose two rim vertices $u,v$ are both close to rim vertex $s$ and $d(u,s)-d(v,s)=d(u,n)-d(v,n)$. If $u$ is  close to rim vertex $s' $ $(\neq s)$, then $u,v$ can be doubly resolved by $\{s,s'\}$.
\end{lemma}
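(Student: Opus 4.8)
The plan is a proof by contradiction. Suppose $\{s,s'\}$ does not doubly resolve $\{u,v\}$, i.e. $d(u,s)-d(v,s)=d(u,s')-d(v,s')$. Together with the hypothesis $d(u,s)-d(v,s)=d(u,n)-d(v,n)$ this gives $d(u,s')-d(v,s')=d(u,n)-d(v,n)$ as well, so that the ``unresolved'' identity
\[
d(u,x)-d(v,x)=d(u,n)-d(v,n)
\]
holds simultaneously for $x=s$ and for $x=s'$; I want to show this is impossible. The first, easy, step is to note that $v$ must be close to $s'$: otherwise $d(v,s')=d(v,n)+d(n,s')$, and substituting into the identity for $x=s'$ forces $d(u,s')=d(u,n)+d(n,s')$, contradicting the hypothesis that $u$ is close to $s'$. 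Hence $u,v$ are close to both $s$ and $s'$, so $d(\cdot,s)$ and $d(\cdot,s')$ restricted to $\{u,v\}$ are shortest rim-arc distances $d_C(\cdot,\cdot)$ on $C$, and $d(u,n),d(v,n)$ equal $1+\delta(u)$ and $1+\delta(v)$ respectively, where $\delta(x)$ denotes the rim distance from $x$ to a nearest connector.

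Next I would analyze the positions of $s,s'$ on the rim. Split $C$ at $u$ and $v$ into the shorter $u$--$v$ arc and its complement. For any rim vertex $x$ that is close to both $u$ and $v$ (so $x\in P_u\cap P_v$ by Observation \ref{ob1}), the quantity $d_C(u,x)-d_C(v,x)$ equals $-d_C(u,v)$ when $x$ lies on the outer arc past $u$, equals $+d_C(u,v)$ when $x$ lies on the outer arc past $v$, and, for $x$ strictly between $u$ and $v$, equals $2d_C(u,x)-d_C(u,v)$, which increases strictly (in steps of $2$) as $x$ moves from $u$ towards $v$. Using this together with $d_C(u,s)-d_C(v,s)=d(u,n)-d(v,n)=d_C(u,s')-d_C(v,s')$, I split into cases by the locations of $s$ and $s'$ relative to the short $u$--$v$ arc. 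If both lie strictly between $u$ and $v$, strict monotonicity forces $d_C(u,s)=d_C(u,s')$, hence $s=s'$. If exactly one of them, say $s$, lies strictly between while $s'$ lies on an outer arc, then $d(u,n)-d(v,n)=\pm d_C(u,v)$, and plugging this into $2d_C(u,s)-d_C(u,v)=d(u,n)-d(v,n)$ forces $s\in\{u,v\}$. If $s$ and $s'$ lie on different outer arcs, then $d(u,n)-d(v,n)$ equals both $+d_C(u,v)$ and $-d_C(u,v)$, so $u=v$. Each of these conclusions contradicts an assumption, so all of these cases are ruled out.

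The one remaining case, and the one I expect to be the real obstacle, is when $s$ and $s'$ lie on the \emph{same} outer arc --- say both past $v$ (the case ``both past $u$'' being symmetric) --- in which case the identity reduces to $d(u,n)-d(v,n)=d_C(u,v)$, i.e. $\delta(u)=d_C(u,v)+\delta(v)$. Here the plan is to locate the connectors precisely: no connector can lie strictly between $u$ and $v$ (it would be nearer to $u$ than $\delta(u)=d_C(u,v)+\delta(v)$ permits), and a connector realizing $\delta(v)$ cannot lie on the outer arc past $u$ (else $d_C(u,v)\le 0$), so it must lie on the arc past $v$, at rim distance $\delta(v)$ from $v$ and hence $\delta(u)$ from $u$. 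I would then combine this with Observation \ref{ob1} (at most two connectors between $s$ and either end of $P_s$, applied to $P_u$ and $P_v$) and the standing hypothesis that $G$ has at least $13$ connectors, to constrain the stretch of rim past $v$ that can carry vertices close to $u$ tightly enough to force $s=s'$. This connector-placement step is exactly where the hub geometry of the wheel --- rather than mere cycle geometry --- must be exploited, and it is the part I expect to require the most care; the earlier cases become routine once the monotonicity of $d_C(u,\cdot)-d_C(v,\cdot)$ has been recorded.
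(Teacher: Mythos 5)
Your opening reduction is correct and coincides with the paper's easy case: if $v$ is not close to $s'$ then $d(v,s')=d(v,n)+d(n,s')$ while $d(u,s')<d(u,n)+d(n,s')$, so the two differences cannot agree. The genuine gap is your final case, in which $s$ and $s'$ lie on the same arc beyond $v$ (or beyond $u$): there you give only a plan (``use Observation \ref{ob1} and the $\ge 13$ connectors to force $s=s'$''), and that plan cannot be carried out, because in exactly this configuration the conclusion of Lemma \ref{dis} fails. Concretely, let the rim be a cycle on $200$ vertices with hub $n$, fix a rim vertex $u$, let $v,s',s$ be the rim vertices at clockwise rim distances $10,15,20$ from $u$, and let the connectors be the $13$ rim vertices at clockwise rim distances $25,28,31,\dots,61$ from $u$. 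Then $d(u,n)=26$, $d(v,n)=16$, $d(s',n)=11$, $d(s,n)=6$, and the clockwise rim paths realize $d(u,s)=20$, $d(v,s)=10$, $d(u,s')=15$, $d(v,s')=5$, each smaller than any route through the hub; hence $u,v$ are close to both $s$ and $s'$ and the hypothesis $d(u,s)-d(v,s)=10=d(u,n)-d(v,n)$ holds, yet $d(u,s')-d(v,s')=10$ as well, so $\{s,s'\}$ does not doubly resolve $\{u,v\}$. This is precisely your ``same outer arc'' case with $s\ne s'$ (and $d(u,n)-d(v,n)$ equal to the rim distance from $u$ to $v$, your reduced identity), so no connector-counting can close it. (A smaller loose end: in your outer-arc formulas you also need to justify that the rim distance from the farther of $u,v$ to $s$ or $s'$ is realized through the nearer one; closeness alone does not immediately give this.)

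For comparison, the paper's own proof disposes of your hard case (``$v$ close to $s'$'') with a bare ``it is easy to see,'' which the example above shows is unjustified; so your instinct that this is the real obstacle is sound, but the obstacle is a missing hypothesis rather than a missing trick. The statement becomes true --- and your opposite-side computations, or the paper's not-close inequality, then suffice --- once one adds a positional assumption excluding the situation where $u$ and $v$ are close to $s$ and to $s'$ from one and the same side; this is the shape in which the lemma is actually invoked in the proof of Theorem \ref{nwhl}, where $s'\in\{s^-,s^+\}$ lies beyond $u$ (or beyond $v$) on the side away from $s$. As a proof of Lemma \ref{dis} as stated, however, your proposal is incomplete at its decisive step, and necessarily so.
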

\begin{proof}
If $v$ is close to $s'$, then  {$P_s$ and $P_{s'}$ both contain $u,v$}, and it is easy to see that $u,v$ {are doubly} resolved by $\{s,s'\}$. If $v$ is not close to $s'$, then $d(v,s')=d(v,n)+d(n,s')$. Using $d(u,s')<d(u,n)+d(n,s')$, we derive $d(u,s')-d(v,s')<d(u,n)-d(v,n)=d(u,s)-d(v,s)$, showing that $u,v$  are doubly resolved by $\{s,s'\}$.
\end{proof}

Let $S$ $(\subseteq V)$ be a set of observers. Given two distinct rim observers $s,s'\in S$, we say that $s$ and $s'$ are {\em consecutive} if $C(s',s)$ or $C(s,s')$ contains no observer. In the former  case, we say that $s'$ is {\em on the left}  of $s$ and  write $s'=s^-$; in the latter case,   we say that $s'$ is {\em on the   right}  of $s$ and  write $s'=s^+$. Thus the observers contained in $C[s^-,s^+]$ are exactly $s^-$, $s $ and $s^+$.

 Given a rim observer $s$, we call a pair of rim vertices $x,y$   a \emph{bad pair}   of $s$, if $x,y$ are both close to $s$, one from the left and one from the right,  $d(x,s)=d(y,s)$, and {$d(x,n)=d(y,n)$}.  A bad pair $x,y$ of $s$ is called {\em minimal} if $d(x,s)=d(y,s)$ is minimum 
 among all bad pairs of $s$. We say a minimal bad pair $x,y$ of $s$ is {\em covered}   by $s^-$ from the left  if $s^-$ is closed to at least one of $x$ and $y$, and {\em covered}  by $s^+$ from the right
 if $s^+$ is closed to at least one of $x$ and $y$.    We say the minimal bad pair of $s$ is {\em covered} by $S$ if it is covered by $s^-$ or $s^+$.

{Let $S$ ($\subseteq V$) be an observer set  and  $s\in S$ be a rim vertex. We say that $S$ is {\em left-continuous at $s$} if for any rim vertex $x\in C[s^-,s)$ that is close to $s$ from the left and satisfies $1=d(x,s)-d(x+1,s)=d(x,n)-d(x+1,n)$, we have $x$  close to $s^-$   from the right. We say that $S$ is {\em right-continuous at $s$} if for any rim vertex $x\in C(s, s^+]$ that is close to $s$ from the right and satisfies $1=d(x,s)-d(x-1,s)=d(x,n)-d(x-1,n)$, we have $x$  close to $s^+$   from the left.}

\begin{theorem}\label{nwhl}
An observer set $S\subseteq V$ is a {DRS of $G$} if and only if the following three conditions are satisfied.
\begin{mylabel}
\item Every rim vertex   is close to $S$.
\item {The minimal bad pair of any  vertex in $S-\{n\}$ (if any) is covered by $S$.} 
\item {$S$ is left-continuous and right-continuous at every vertex of $S-\{n\}$.}
\end{mylabel}
\end{theorem}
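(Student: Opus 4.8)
The plan is to prove the two directions separately, and in each direction to proceed by contrapositive, using the structural observations (Observations \ref{ob1}, \ref{ob2}, Lemmas \ref{ob3}, \ref{dis}) to localize the analysis to one ``window'' $C[s^-,s^+]$ around a rim observer at a time. First I would dispose of the necessity direction. If $S$ is a DRS, condition (i) is exactly Observation \ref{ob2}(i). For (ii), suppose $s\in S-\{n\}$ has a minimal bad pair $x,y$ that is covered by neither $s^-$ nor $s^+$; I would argue that the only observers close to at least one of $x,y$ are $s$ (and possibly $n$), so by Observation \ref{ob2}(ii) the pair $\{x,y\}$ must be doubly resolved by some $s'\in S-\{s,n\}$ close to $x$ or $y$, which cannot exist — contradiction. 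Here the point is that any observer close to $x$ or $y$ lies on $P_x\cup P_y\subseteq C[s^-,s^+]$ (using Observation \ref{ob1} to bound the number of connectors between consecutive observers), so not being covered really does isolate $s$. For (iii), a failure of left-continuity at $s$ gives a rim vertex $x$ close to $s$ from the left with $1=d(x,s)-d(x+1,s)=d(x,n)-d(x+1,n)$ but $x$ not close to $s^-$; I would then show $\{x,x+1\}$ is resolved only by pairs involving $s$ but $d(x,s)-d(x+1,s)=d(x,n)-d(x+1,n)$, and Lemma \ref{dis} (applied with the second vertex $x+1$, which is not close to anything on the left of $s$) shows no admissible $s'$ exists, contradicting that $S$ is a DRS.

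For the sufficiency direction, assume (i)--(iii) and take an arbitrary pair of distinct vertices $\{u,v\}$; I must exhibit a doubly resolving pair in $S$. I would split into cases by how $u,v$ relate to the hub and to the observer windows. The easiest case is when $u$ or $v$ is the hub $n$: by (i) both are close to $S$, and a short computation with distances $1$ or $2$ to connectors (as in the complete-wheel argument of Lemma \ref{cw}, but now using ``close'' in place of adjacency) produces the pair. When $u,v$ are both rim vertices, pick an observer $s$ close to $u$ (exists by (i)); if some observer $s'\ne s$ is close to exactly one of $u,v$, or to both but ``on opposite sides,'' then $\{s,s'\}$ works by the sign-of-difference argument already used in Lemma \ref{dis}. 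The delicate subcase is when every observer close to $u$ or $v$ is close to both and from the same side, forcing $d(\cdot,s)$-differences to all agree with the hub-difference: this is precisely the situation where $\{u,v\}$ looks like (a shift of) a bad pair or a continuity-failure pair of the nearest observer $s$, and conditions (ii) and (iii) are tailored to supply the extra observer $s^-$ or $s^+$ that resolves it.

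The main obstacle will be the bookkeeping in this last subcase: I need to show that the only obstructions to double resolvability of a rim pair $\{u,v\}$ are exactly the minimal bad pair of some $s$ and the continuity-failure pairs at $s$, i.e., that conditions (ii) and (iii) together handle \emph{all} unresolved rim pairs, not just the ``extremal'' ones. The key lemma I expect to need is that it suffices to check bad pairs and continuity failures at a single well-chosen observer — the one whose window $C[s^-,s^+]$ contains $u$ (and the relevant part of $v$) — because Observation \ref{ob1} guarantees that observers outside that window are too far to interfere, and minimality in the definition of ``minimal bad pair'' lets one reduce a general bad pair to the minimal one by a monotonicity argument on $d(\cdot,s)$. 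Once this localization lemma is in place, the case analysis is a finite check over the (at most a constant number of) connectors in $C[s^-,s^+]$, and the proof closes. Finally I would remark that Theorem \ref{nwhl} makes the DRS condition a local, constant-window predicate, which is exactly what a cubic-time dynamic program over the rim can verify, yielding Theorem \ref{general}.
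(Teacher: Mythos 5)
Your necessity argument is essentially the paper's: (i) is Observation \ref{ob2}(i); for (ii) you combine Observation \ref{ob2}(ii) with the interval structure of closeness (Observation \ref{ob1}) to force coverage by $s^-$ or $s^+$; for (iii) the statement you actually need is that a continuity failure at $s$ forces $d(x,s')-d(x+1,s')=1$ for \emph{every} $s'\in S$ (because every shortest $x$--$s'$ path may be taken to start with the edge from $x$ to $x+1$), so no pair of observers doubly resolves $\{x,x+1\}$; your appeal to Lemma \ref{dis} there is not the right tool, but the intended contradiction is the same.

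The genuine gap is in sufficiency. After reducing to the case where $u,v$ are both close to a common rim observer $s$ with $d(u,s)-d(v,s)=d(u,n)-d(v,n)$, you assert that every such pair is (a shift of) a bad pair of $s$ or a continuity-failure pair at $s$, so that (ii) and (iii) supply the resolving observer, and you defer this to an unproven ``localization lemma.'' That taxonomy is incomplete, and the deferred lemma is exactly where the work lies. Beyond the same-side case (settled by applying left/right-continuity to the intermediate vertex next to the inner endpoint, which makes $s^-$ or $s^+$ close to the outer endpoint so Lemma \ref{dis} finishes) and the symmetric case $d(u,s)=d(v,s)$ (settled by (ii), together with the fact that covering the \emph{minimal} bad pair forces closeness to $u$ or $v$ -- a point you wave at with a ``monotonicity argument'' but do not prove), there is a third, asymmetric case: $s\in C[u,v]$ with $d(u,s)\neq d(v,s)$ yet $d(u,s)-d(v,s)=d(u,n)-d(v,n)$. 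This pair is neither a bad pair nor a continuity-failure pair, so conditions (ii)--(iii) hand you nothing directly. The paper needs a separate argument here: either some $x\in C[u,s)$ has $d(x,n)-d(x+1,n)=1$, in which case left-continuity at $s$ makes $u$ close to $s^-$ and Lemma \ref{dis} applies, or else $d(\cdot,n)$ does not drop along $C[u,s)$, and a connector count yields $2\bigl(d(u,s)-d(v,s)\bigr)\le 1$, contradicting $d(u,s)-d(v,s)\ge 1$. Without handling this case, and without actual proofs of your localization claim and of the reduction from a general bad pair to the minimal one, the ``if'' half of the theorem is not established.
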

\begin{proof}
To see the ``only if'' part, suppose that $S$ is a {DRS} of $G$.  Observations \ref{ob2}(i) gives (i) immediately. If $\{u,v\}$ is the minimal bad pair  of some $s\in S-\{n\}$, then the condition of Observation  \ref{ob2}(ii) is satisfied with $d(s,u)-d(s,v)=d(n,u)-d(n,v)=0$, and we have one of $u$ and $v$   close to some $s'\in S-\{s,n\}$.  Using Observation \ref{ob1}, it is easy to see that one of $u$ and $v$ is close to $s^-$ or $s^+$, establishing (ii). To show {the left-continuous of $ S-\{n\}$,  suppose on the contrary that there exists $s\in S-\{n\}$ such that} some $x\in C[s^-,s)$  is close to $s$ from the left,  $1=d(x,s)-d(x+1,s)=d(x,n)-d(x+1,n)$ and $x$ is not close to $s^-$. Then for any $s'\in S$, {either $x$ is not close to $s'$ or $C[x,s']\supseteq C[x,s]$, giving} $d(x,s')=d(x,n)+d(n,s')$ or $d(x,s')=d(x,s)+d(s,s')$. Since $d(x,n)=1+d(x+1,n)$, in either case, some shortest path from $x$ to $s'$ starts with $x(x+1)$, giving $d(x,s')=1+d(x+1,s')$, {a contradiction to the fact that $S$ doubly resolves $\{x,x+1\}$.} 
Hence $S$ is left-continuous at every $s\in S-\{n\}$. The right-continuity is proved by a symmetric argument. So (iii) holds.

To prove the ``if '' part, suppose  {$S\subseteq V$} satisfies  conditions (i) -- (iii). Consider an arbitrary pair of vertices $\{u,v\}$. We prove that it can be doubly resolved by $S$. If $v=n$, then $u$ is a rim vertex, and there must exist $s_1,s_2\in S$ such that $u$ is close to $s_1$ (by (i)) but not close to $s_2$  (by Lemma \ref{ob3});  {it is easily checked that} $u,v$ are doubly resolved by $\{s_1, s_2\}$. If $u,v$ are   rim vertices and are not close to any common vertex in $S$, then by (i) there exist distinct rim vertices $s_1,s_2\in S$ such that $u$ is close to $s_1$  {but not to $s_2$} and $v$ is close to $s_2$  {but not to $s_1$, which implies $d(u,s_1)-d(v,s_1)<d(u,n)-d(v,n)<d(u,s_2)-d(v,s_2)$, and thus  $\{u,v\}$ is} doubly resolved by $s_1, s_2$. It remains to consider   $u,v$ being both close to some $s\in S-\{n\}$.  {Switching $u$ and $v$ if necessary, we may assume $C[u,v]\subseteq P_s$}.

 If $d(u,s)-d(v,s)\neq d(u,n)-d(v,n)$, then by Lemma \ref{ob3},   there exists $s'\in S-\{n\}$ such that neither $u$ nor $v$ is close to $s'$, which means $d(u,s')-d(v,s')=d(u,n)-d(v,n)$, and thus $\{u,v\}$  is  doubly resolved by $\{s,s'\}$.  {So we assume
 \begin{gather}\label{asp}
 d(u,s)-d(v,s)=d(u,n)-d(v,n),\end{gather}
and distinguish among  the following four cases.}

\medskip
\noindent {\em Case 1.}  $v\in C[u,s]$, {i.e., $u$ and $v$ are close to $s$ from the left.} Notice from (\ref{asp}) that 
$C[u,v-1]$ contains no connector. If   $C(u,s)$ contains some $s'\in S$, then   $u,v$  {are close to $s'$, and } can be doubly resolved by $\{s,s'\}$. Otherwise, {$u,v\in C(s^-,s]$}. 
Note that vertex $v-1$ on $C[u,v)\subseteq P_s$ is also close to $s$, and $1=d(v-1,s)-d(v,s)=d(v-1,n)-d(v,n)$. {Since $S$ is left-continuous at $s$ by (iii),} 
 vertex   $v-1$ is close to $s^-$  {from the right}, which implies that $u$ is  close to $s^-$. It follows from Lemma \ref{dis} that $\{u,v\}$ is doubly resolved by $\{s,s^-\}$.

\medskip
\noindent {\em Case 2.}  $u\in C[s,v]$, {i.e., $u$ and $v$ are close to $s$ from the left.}  
The proof is similar to that in Case 1, {using the right-continuity of $S-\{n\}$}.

\medskip
\noindent {\em Case 3.}   {$s\in C[u,v]$ and $d(u,s)=d(s,v)$. Note that $u$ and $v$ form a bad pair of $s$.}
From condition (ii), there exists $s'\in S-\{n\}$ which covers the minimal bad pair of $s$. It is easy to see that $s'$ is  close to $u$ or $v$. In turn Lemma \ref{dis} says that $\{u,v\}$ is doubly resolved by $\{s,s'\}$.

\medskip
\noindent {\em Case 4.}   {$s\in C[u,v]$ and $d(u,s)\ne d(s,v)$}. 
Suppose without loss of generality that $u<s<v$ and $1\le d(u,s)-d(v,s)=2s-u-v$.  {It is instant from (\ref{asp}) that $d(u,n)\ge2$ and $un\not\in E$.}
 {Recall that every vertex in $C[u,s]$ is close to $u$ and every vertex in $C[s,v]$ is close to $v$. In view of (\ref{asp}) and Lemma \ref{dis}, we only} need to consider the case of  \begin{center}
$C[u,v]\cap S=\{s\}$, i.e., $s^-,s^+\in C(v,u)$.
\end{center}

If there {exists}  a vertex $x$ on $C[u,s)\subseteq P_s$ such that $d(x,n)-d(x+1,n)=1$, then $x$ is {closed to $s$} 
from the left, and $1=d(x,s)-d(x+1,s)=d(x,n)-d(x+1,n) $. {Since $S$ is left-continuous at $s$, vertex} 
$x$ is close to $s^-$ from the right.  It follows that $u$ is  close to $s^-$. 
Thus lemma \ref{dis}  says that $\{u,v\}$ is doubly resolved by $\{s,s^-\}$.


 In the remaining proof, we assume that there does not exist such {an} $x$ on $C[u,s)$, {implying $d(x,n)\le d(x+1,n)$ for every $x\in C[u,s)$}. Particularly, $d(s-1,n)\le d(s,n)$. Then  either all vertices on $C[u,v)$ are connectors or none of them is a connector. Since $un\not\in E$,  {we see that} $C[u,s)$ contains no connector, and $d(s-1,n)=d(s-2,n)+1=\cdots =d(u,n)+s-1-u$. So $d(u,n)+s-1-u\le d(s,n)\le d(s,v)+d(v,n)=v-s+d(v,n)$, yielding $d(u,n)-d(v,n)+2s-u-v\le 1$. {It follows from (\ref{asp}) and $d(u,s)-d(v,s)=2s-u-v$ that} $2(d(u,s)-d(v,s))\le 1$, a contradiction to   $d(u,s)-d(v,s)\ge 1$.
\end{proof}

From  {Theorem \ref{nwhl},} 
we have the following corollary.
\begin{corollary}
If $S\subseteq V$ is a minimal DRS, then $n\not\in S$.\qed
\end{corollary}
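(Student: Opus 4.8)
The plan is to prove the corollary directly from the characterization in Theorem~\ref{nwhl}. Assume, for contradiction, that $n\in S$ for some minimal DRS $S$, and set $S':=S\setminus\{n\}$. I would show that $S'$ is still a DRS, which contradicts the minimality of $S$. The whole point is that the hub is invisible to conditions (i)--(iii) of Theorem~\ref{nwhl}, so $S'$ inherits all three conditions from $S$.

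First I would record that closeness is a relation defined only between rim vertices, and that no rim vertex $u$ is close to the hub, since that would require $d(u,n)<d(u,n)+d(n,n)=d(u,n)$. Hence for every rim vertex $u$, the statement ``$u$ is close to $S$'' is equivalent to ``$u$ is close to $S\cap V_c=S'$'', which handles condition (i). Next, the consecutive observers $s^-$ and $s^+$ of a rim observer $s$ are, by definition, \emph{rim} observers, so they are unchanged when the hub is deleted from $S$. Consequently the notions ``the minimal bad pair of $s$ is covered by $S$'' and ``$S$ is left-/right-continuous at $s$'' depend only on $s^-$, $s^+$, and on closeness among rim vertices, all of which are literally identical for $S$ and $S'$. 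Since $S'\setminus\{n\}=S'=S\setminus\{n\}$, conditions (ii) and (iii) thus hold for $S'$ exactly when they hold for $S$.

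With these equivalences the argument closes quickly. As $S$ is a DRS, Theorem~\ref{nwhl} gives that $S$ satisfies (i)--(iii); by the previous paragraph so does $S'$. Condition (i) applied to any rim vertex (the rim is nonempty) forces $S'\neq\emptyset$, so $S'$ is a genuine observer set, and the ``if'' direction of Theorem~\ref{nwhl} yields that $S'$ is a DRS. Since $S'\subsetneq S$, this contradicts the minimality of $S$, and therefore $n\notin S$.

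The argument involves essentially no computation; the only delicate point — and the one place where an oversight could creep in — is to confirm that the auxiliary notions used in conditions (ii) and (iii) (the consecutive observers $s^\pm$, ``covered'', ``left-/right-continuous'') never implicitly reference the hub, so that these objects are genuinely the same for $S$ and for $S\setminus\{n\}$. Once that bookkeeping is verified, the corollary is an immediate consequence of Theorem~\ref{nwhl}.
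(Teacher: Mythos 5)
Your argument is correct and matches the paper's intended reasoning: the paper states the corollary as an immediate consequence of Theorem~\ref{nwhl}, precisely because conditions (i)--(iii) only ever reference rim observers (closeness, $s^-$, $s^+$, coverage, continuity), so deleting the hub from a DRS leaves all three conditions intact and contradicts minimality. Your bookkeeping that the hub is never close to any rim vertex and never appears as a consecutive observer is exactly the verification needed, so the proposal is essentially the same proof.
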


{There are two major different features between  the characterizations of  minimal doubly resolving sets and minimal resolving sets on wheels. First,   some consecutive rim vertices are allowed to be not close to the RS. So when designing the dynamic programming algorithm for finding minimum weighted {RS}, Epstein et al. \cite{elw12} could  choose a feasible pair of start   and end vertices, without considering the stop condition. Second, to find a feasible RS, it suffices to consider the bad pair case of an observer vertex, while  {to find a feasible DRS, we additionally need to make sure that the set under consideration is left-continuous and right-continuous at every observer vertex.} 
{Both   differences make the algorithm design for the MWDRS problem  more complex and difficult.}

Next we design a strongly polynomial time   dynamic programming algorithm to find a minimum weighted DRS of $G$.  {Let $a$ be a rim vertex such that $P_a$ is shortest among all rim vertices}. Note that every minimum weighted DRS must contain some vertex from {$P_a$}. By examining all vertices in  {$P_a$},  without loss of generality we assume that vertex  {$1\in V(P_a)$} is an observer in a minimum weighted DRS.  {Our aim is to construct an observer set $S$ of minimum weight on the condition that  $1\in S$ and $S$ satisfies conditions (i)--(iv) in Theorem \ref{nwhl}. Such an $S$ must be a minimum weighted DRS of $G$ as guaranteed by Theorem \ref{nwhl}. Henceforth we assume $1\in S$.
 }

 {Before describing {our} algorithm, we introduce some notations. Let $l_1$ and $r_1$ to be the rim vertices such that $C[l_1,r_1]=P_1$ is the path induced by all vertices close to vertex $1$.  For any rim vertex $u$ with  {$1<u\le n-1$}, suppose we put  $u$ into $S$. To assure validity of conditions (ii)-(iv) of Theorem \ref{nwhl}, basically we need check every vertex in $P_u$. But by virtue of $1\in S$, we can narrow the range $P_u$ by cutting off its intersection with $P_1$. Let  {$l_u\ge1$} be the minimum index and $r_u\le n-1$ be the maximum index such that $C[l_u,r_u]$} is a subpath of $P_u$ (so all vertices on $C[l_u,r_u]$ are close to $u$). Let  {$1\le v\le n-1$}. \begin{itemize}
\vspace{1mm}\item If there exists  {$1\le j<n-2$ such that {$v-j\in C[l_v,v)$}  (so $v-j$ is close to $v$ from the left)} and $d(v-j,v)-d(v-j+1,v)=1=d(v-j,n)-d(v-j+1,n)$, then let  {$\Delta_l(v)$} denote such a minimum $j$; otherwise,  set  {$\Delta_l(v):=\infty$}.
\vspace{1mm}\item If there exists  {$1\le j<n-2$ such that vertex  {$v+j\in C(v,r_v]$ }(so $v+j$ is close to $v$ from the right)} and $d(v+j,v)-d(v+j-1,v)=1=d(v+j,n)-d(v+j-1,n)$, then let   {$\Delta_r(v)$} denote such a minimum $j$; otherwise,  set {$\Delta_r(v):=\infty$}.
\vspace{1mm}\item If there exists $1\le j<n-2$ such that $v-j$,  $v+j$ form  a bad pair contained in $C[l_v,r_v]$, then  let  {$\Delta(v)$} denote such a minimum $j$; otherwise, set  {$\Delta(v):=\infty$.}
 \end{itemize}

\begin{Observation} \label{ob4}
 {Let $S$ $(\subseteq V-\{n\})$ be a set of rim observers, and let $s\in S$. 
\begin{mylabel}
\vspace{0mm}\item   If 
$r_{s^-}\ge l_s-1$, then  every vertex of $C[s^-,s]$ is close to $\{s^-,s\}$. 
\vspace{0mm}    \item   If 
    $r_{s^-}\ge s-\Delta_l(s)$, then $S$ is left-continuous at $s$. 
\vspace{0mm}\item  If 
$l_{s^+}\le s+\Delta_r(s)$, then $S$ is right-continuous at $s$. 
\vspace{0mm}\item If  $r_{s^-}\ge s-\Delta(s)$, then $s^-$ covers the minimal bad pair of $s$ (if any) from the left. 
\vspace{0mm}\item   If $l_{s^+}\le s+\Delta(s)$, then $s^+$  covers the minimal bad pair of $s$ (if any) from the right. 
    \qed
\end{mylabel}}
\end{Observation}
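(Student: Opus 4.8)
The plan is to prove the five implications separately, in each case unfolding the relevant clause of Theorem~\ref{nwhl} and reducing it, by way of Observation~\ref{ob1}, to a length comparison between the contiguous arcs of ``close'' rim vertices hanging off $s^-$, $s$, and $s^+$. First I would record the common mechanism: for a rim observer $t$, Observation~\ref{ob1} says the rim vertices close to $t$ form a single arc $P_t$ of $C$ through $t$; after the truncation against $P_1$ this means a rim vertex is close to $t$ from the right exactly when it lies in $C[t,r_t]$ and close to $t$ from the left exactly when it lies in $C[l_t,t]$. Each part then follows by locating the ``extremal obstruction'' and checking that the hypothesis absorbs it. A useful auxiliary fact to establish up front is that, for consecutive observers $s^-,s,s^+$ and with $1\in S$, no obstruction or bad-pair member of $s$ can sit in the part of $P_s$ that was cut off against $P_1$ or that wraps past vertex~$1$ (otherwise $C(s^-,s)$ or $C(s,s^+)$ would contain the observer $1$); this guarantees that the quantities $\Delta_l(s),\Delta_r(s),\Delta(s)$ really do see every relevant vertex.

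For (i): a vertex of $C[s^-,s]$ is close to $s^-$ if it lies in $C[s^-,r_{s^-}]$ and close to $s$ if it lies in $C[l_s,s]$, and these two sub-arcs already exhaust $C[s^-,s]$ once $r_{s^-}+1\ge l_s$, i.e. under the hypothesis $r_{s^-}\ge l_s-1$. For (ii) and (iii): because $\Delta_l(s)$ is chosen minimal, $s-\Delta_l(s)$ is the vertex \emph{nearest} to $s$ among all rim vertices $x\in C[l_s,s)$ witnessing a failure of left-continuity (close to $s$ from the left, with $1=d(x,s)-d(x+1,s)=d(x,n)-d(x+1,n)$); hence every such witness lies in $C[s^-,\,s-\Delta_l(s)]$, and the hypothesis $r_{s^-}\ge s-\Delta_l(s)$ forces all of them into $C[s^-,r_{s^-}]\subseteq P_{s^-}$, so each is close to $s^-$ from the right, which is precisely left-continuity of $S$ at $s$ (the case $\Delta_l(s)=\infty$ being vacuous). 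Part (iii) is the mirror statement with $s+\Delta_r(s)$, $l_{s^+}$, $P_{s^+}$, and the right-continuity clause. For (iv) and (v): the minimality in the definition of $\Delta(s)$ makes $\{s-\Delta(s),\,s+\Delta(s)\}$ the minimal bad pair of $s$ (when $s$ has one at all), and both members lie in $C[l_s,r_s]$; then $r_{s^-}\ge s-\Delta(s)$ places $s-\Delta(s)$ in $C[s^-,r_{s^-}]\subseteq P_{s^-}$, so $s^-$ is close to a member of the minimal bad pair, which is exactly ``covered from the left'', and symmetrically $l_{s^+}\le s+\Delta(s)$ puts $s+\Delta(s)$ in $P_{s^+}$ and gives (v).

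The step I expect to be the main obstacle is not the combinatorial logic but the cyclic index bookkeeping around it: I must ensure that the truncated endpoints $l_t,r_t$ and the shifted vertices $s\pm\Delta_\bullet(s)$ genuinely lie on the non-wrapping clockwise arc between the relevant pair of consecutive observers, so that an inequality such as $r_{s^-}\ge s-\Delta(s)$ is a comparison between two vertices of one common arc and so that ``closer to $s$ than the threshold'' really does mean ``deeper into $C[s^-,s]$''. Concretely I would first prove that for consecutive observers the arc of vertices close to $s$ on the $s^-$-side of $s$ is exactly $C[\max(l_s,s^-),s]$ (and dually on the right side with $s^+$), so that every comparison in (i)--(v) reduces to a comparison of well-ordered points of a single arc, and then check the ``from the left / from the right'' orientation in each of the four distance conditions feeding $\Delta_l,\Delta_r,\Delta$; everything else is routine.
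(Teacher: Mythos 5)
The paper itself states this Observation with no written proof, treating it as immediate from the definitions of $l_u,r_u$, $\Delta_l,\Delta_r,\Delta$ and of continuity/covering, so the only question is whether your unfolding is sound. For parts (i)--(iii) it is: part (i) is the union argument with $C[s^-,r_{s^-}]\subseteq P_{s^-}$ and $C[l_s,s]\subseteq P_s$, and for (ii)--(iii) the witnesses of a continuity failure are \emph{by definition} confined to $C[s^-,s)$ resp.\ $C(s,s^+]$, arcs that contain no observer (in particular not vertex $1$) and hence cannot meet the wrapped/cut-off portion of $P_s$; this is exactly where your ``otherwise $C(s^-,s)$ would contain the observer $1$'' argument is valid, so every witness is seen by $\Delta_l(s)$ (resp.\ $\Delta_r(s)$), and minimality plus the hypothesis places it in $C[s^-,r_{s^-}]\subseteq P_{s^-}$ (resp.\ $C[l_{s^+},s^+]\subseteq P_{s^+}$), as required.

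The gap is in (iv)--(v). A bad pair of $s$ is defined over all of $P_s$, not over the stretch between consecutive observers, so a member of the minimal bad pair may lie counterclockwise of $s^-$, or even in the cut-off (wrapped) part of $P_s$, without forcing vertex $1$ into $C(s^-,s)$ or $C(s,s^+)$; your auxiliary fact is therefore unjustified for bad pairs, and with it both the identification of the minimal bad pair with $\{s-\Delta(s),s+\Delta(s)\}$ and the step ``$r_{s^-}\ge s-\Delta(s)$ places $s-\Delta(s)$ in $C[s^-,r_{s^-}]$'' (which needs $s-\Delta(s)$ to lie clockwise of $s^-$). The concrete lemma you propose to prove first, that the vertices close to $s$ on the $s^-$-side are exactly $C[\max(l_s,s^-),s]$, is false: closeness is a purely metric notion and $P_s$ may extend past $s^-$. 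The correct repair is the remark preceding Observation \ref{ob1}: any two vertices on a distance-realizing arc are close to each other. Hence if the left member $x$ of the minimal bad pair lies at or counterclockwise beyond $s^-$ (including in the wrapped region), then $s^-\in C[x,s]$ and $x$ is automatically close to $s^-$, so the pair is covered from the left with no hypothesis at all; the $\Delta(s)$-comparison is only needed when $x\in C(s^-,s)\subseteq C[l_s,s)$, and there your argument works provided the right member also lies in $C[l_s,r_s]$, so that the minimal pair really is the $\Delta(s)$-pair. The residual configuration --- left member in $C(s^-,s)$ but right member only in the wrapped part of $P_s$, so that $\Delta(s)$ never sees the minimal pair --- is not covered by your plan and must be either ruled out (using Observation \ref{ob1} and the standing assumptions on connectors and on vertex $1$) or argued separately; symmetrically for (v).
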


We define two functions $F$ and $ F'$  from  $\{1,2,\ldots, n-1\}\times\{\text{left}, \text{right}\}$ to $R^+$ as follows: The functions map $(s,\cdot)$ to the minimum  weight  of a vertex set  {$S\subseteq\{1,\ldots,s\}$} satisfying
 \begin{itemize}
\vspace{-2mm}\item[(a)] $1,s\in S$;
\vspace{-2mm}\item[(b)] every vertex $i$ with  {$1\le v\le s$} is close to $S$; 
\vspace{-2mm}\item[(c)] {$S$ is left-continuous at every vertex of $S-\{1\}$, and right-continuous at every vertex of $S-\{s\}$.} 
\vspace{-2mm}\item[(d)]   {for $F(s,\text{left})$ (resp. $F'(s,\text{left})$), the minimal bad pair of any vertex in $S$ (resp.  $S-\{1\}$)  is covered by $S$, and the minimal bad pair of $s$ (if any) is covered by $s^-$ from the left,}

\vspace{-2mm}\item[(e)]   {for $F(s,\text{right})$ (resp. $F'(s,\text{right})$), the minimal bad pair of  any vertex in $  S- \{s\}$ (resp. $S-\{1,s\}$) is covered by $S$,  and the minimal bad pair of $s$ (if any) {\em will} be covered from the right later.}
 \end{itemize}
Correspondingly, for $F(s,\cdot)$ (resp. $F'(s,\cdot)$), let $S(s, \cdot)$ (resp. $S'(s,\cdot)$) denote the minimum weighted $S$ described above.

\bigskip
\hrule
\vspace{-1mm}
\begin{algorithm}\label{alg3} Finding minimum weighted DRS of general wheels
\end{algorithm}
\vspace{-1mm}\hrule

\medskip\noindent Using  dynamic program, we compute the values of $F(s,\cdot)$ and $F'(s,\cdot)$ recursively for $s=1,2,\ldots,n-1$. The initial settings are
\begin{itemize}
\item[1.] {\em Initial Step:} \\$F(1,\text{left}):=\infty$ and $F(1,\text{right})=F'(1,\text{left})=F'(1,\text{right}):=c(1)$.
    \end{itemize}
For $s>1$,  {we need select $s'$ from $\{1,2,\ldots,s-1\}$ and $S'\in\{S(s',\text{left}),S'(s',\text{left}),$ $S(s',\text{right}),S'(s',\text{right})\}$
such that $S=S'\cup\{s\}$ is the minimum weighted set satisfying (a) -- (d). In view of Observation \ref{ob4}, we define} 
\begin{align*}
&\alpha(s):=\{s'\in\{1,2,\ldots,s-1\}:r_{s'}\ge l_s-1,\; r_{s'}\ge s-\Delta_l(s),\; s'+\Delta_r(s')\ge l_s\},\\ &\beta(s):=\{s'\in \alpha(s):r_{s'}\ge s-\Delta(s)\}\subseteq\alpha(s),\\
&\gamma(s):=\{s'\in\alpha(s):l_s\le s'+\Delta(s')\}\subseteq\alpha(s).
\end{align*}
 From Observation \ref{ob4}(i)-(iii), it is easy to see that (a), (b) and (c) are satisfied as long as we take $s'$ from $\alpha(s)$. Note that $S(s,\text{left})-\{s\}$ is either $S(s',\text{left})$ or $S(s',\text{right})$ for some $s'\le s-1$, In the former case, we only need to guarantee that the minimal bad pair of $s$ (if any) is covered by $s'$. Observation \ref{ob4}(iv) says that $s'\in \beta(s)$ suffices. In the latter case,   we are done if we additionally make sure that the minimal bad pair of $s'$ (if any) is covered by $s$. This, by Observation~\ref{ob4}(v), is equivalent to requiring $s'\in \gamma(s)$. Hence we obtain
 \begin{itemize}
\item[2.] {\em Recursive Step:} \\ $F(s,\text{left})=\min\left\{\min_{s'\in\beta(s)}F(s',\text{left}),\;\min_{s'\in\beta(s)\cap\gamma(s)}F(s',\text{right})\right\}+c(s)
$
\end{itemize}
Similarly, we have other  recursive  formulas for computing $F$ and $F'$ as follows
 \begin{itemize}
\item[3.]{\em Recursive Step:} \\ $F(s,\text{right})=\min\left\{\min\limits_{s'\in\alpha(s)}F(s',\text{left}),\;
    \min\limits_{s'\in\gamma(s)}F(s',\text{right})\right\}+c(s)$
\vspace{1mm}\item[4.] {\em Recursive Step:} \\
  $F'(s,\text{left})=\min\left\{\min\limits_{s'\in\beta(s)}F'(s',\text{left}),\;\min\limits_{s'\in\beta(s)\cap\gamma(s)}F'(s',\text{right})\right\}+c(s)$
\vspace{1mm}\item[5.] {\em Recursive Step:} \\
$F'(s,\text{right})=\left\{\begin{array}{ll}c(1)+c(s),&1\!\in\!\alpha(s)\\
\min\left\{\underset{s'\in\alpha(s)}{\min}\!F'(s',\text{left}),\;
\underset{s'\in\gamma(s)}{\min}\!F'(s',\text{right})\right\}+c(s),&1\!\not\in\! \alpha(s)
\end{array}\right.$
\end{itemize}
When the recursive formulas stop at  {$s=n-1$}, we  {are ready to}  give the weight of minimum weighted DRS assuming vertex $1$ is an observer. For vertex $1$, {note that  $ 1-\Delta_l(1)\equiv  n-\Delta_l(1)\pmod {n-1}$   and $ 1-\Delta(1)\equiv  n-\Delta(1)\pmod {n-1}$.} We define the sets of   possible   consecutive observers on the left of vertex 1 as
\begin{align*}
&\alpha(1)=\{s\in\{2,3,\ldots,n-1\}:r_s\ge l_1-1,\; r_s\ge n-\Delta_l(1),\; s+\Delta_r(s)\ge l_1\},\\
&\beta(1)=\{s\in \alpha(1):r_s\ge n-\Delta(1)\},\\
&\gamma(1)=\{s\in\alpha(1):l_1\le s+\Delta(s)\}.
\end{align*}
 {In particular, picking $s=1^-$ from $\alpha(1)$ guarantees that $S$ is left-continuous at~$1$.} {Furthermore, combining (a)--(e) and  {Observation \ref{ob4}}, we deduce from Theorem \ref{nwhl} that the minimum weighted DRS of $G$ must be one of $S(s,\text{left})$ with $s\in \alpha(1)$,  $S'(s,\text{left})$ with $s\in \beta(1)$,  $S(s,\text{right})$ with $s\in \gamma(1)$,  and $S'(s,\text{right})$ with $s\in \beta(1)\cap\gamma(1)$. These sets satisfy all the conditions in Theorem \ref{nwhl}. Hence the weight of the optimal DRS is given by
 { \begin{itemize}
\item[6.] {\em Final Step:} \\
$\min\left\{\min\limits_{s\in\alpha(1)}\!\!F(s,\text{left}), \min\limits_{s\in\beta(1)}\!\!F'(s,\text{left}), \min\limits_{s\in\gamma(1)}\!\!F(s,\text{right}), \min\limits_{s\in\beta(1)\cap\gamma(1)}\!\!F'(s,\text{right})\right\}.$

By backtracking we can get the optimal set.\end{itemize}}
\vspace{-1mm}\hrule
 }

 \medskip\medskip {{To summarize, Steps 1-6 present our dynamic programming algorithm for finding   minimum weighted DRS  in wheels. Its correctness follows from the description and corresponding analysis.}

\begin{theorem}
The {MWDRS} problem on  general wheels can be solved in  $O(n^3)$ time.\qed
\end{theorem}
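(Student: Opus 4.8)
The plan is to establish two things about Algorithm~\ref{alg3}: that the quantity produced in its Final Step equals the minimum weight of a doubly resolving set of $G$, and that all of Steps~1--6 run in $O(n^3)$ time. The correctness part rests entirely on Theorem~\ref{nwhl} and Observation~\ref{ob4}, so the work is bookkeeping rather than new structural insight; the running time is a routine accounting.

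For correctness I would first reduce to searching among minimal doubly resolving sets (a minimum weighted DRS of smallest cardinality is minimal), which by the corollary following Theorem~\ref{nwhl} omit the hub $n$. Hence every rim vertex---in particular the vertex $a$ with $P_a$ shortest---is close, by Theorem~\ref{nwhl}(i), to some \emph{rim} observer $s$ of such a set; since closeness between rim vertices is symmetric, $s\in P_a$, so every candidate optimum meets $P_a$. Enumerating the vertices of $P_a$ as the ``anchor'' observer $1$, the task becomes: among observer sets $S$ with $1\in S$ that satisfy conditions~(i)--(iii) of Theorem~\ref{nwhl}, find one of minimum weight. I would then prove by induction on $s=1,\dots,n-1$ that $F(s,\cdot)$ and $F'(s,\cdot)$ are the minimum weights of vertex sets $S\subseteq\{1,\dots,s\}$ satisfying exactly the prefix conditions (a)--(e) stated before the algorithm. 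The base case is the Initial Step. In the inductive step one writes $s^-$ for the top observer of $S$ below $s$ and reads off Observation~\ref{ob4}: parts~(i)--(iii) say precisely that $s^-\in\alpha(s)$ forces (b) to persist on $C[s^-,s]$, $S$ to be left-continuous at $s$, and $S$ to be right-continuous at $s^-$; part~(iv) says $s^-\in\beta(s)$ additionally makes $s^-$ cover the minimal bad pair of $s$ from the left; and part~(v) says $s^-\in\gamma(s)$ additionally makes $s$ cover the minimal bad pair of $s^-$ from the right. Matching these to Steps~2--5, the left/right label of $F$ records whether the bad-pair obligation at the current top observer is already met (from the left) or deferred to the next observer (to the right), and the primed/unprimed distinction records whether the obligations at the anchor $1$ are still open, to be closed cyclically by the last observer. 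The Final Step then glues the prefix to vertex $1$ by treating the chosen top observer as $1^-$ and using $\alpha(1),\beta(1),\gamma(1)$---the wrap-around analogues with $n-\Delta_l(1)$ and $n-\Delta(1)$ replacing $s-\Delta_l(s)$ and $s-\Delta(s)$---to check the remaining cyclic instances of~(i)--(iii) at $1$ and at the last observer. Together with Theorem~\ref{nwhl} this yields that the Final Step outputs the minimum weight of a DRS containing the anchor; minimizing over anchors in $P_a$ gives the global optimum, and backtracking returns an optimal set.

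For the running time, fix the anchor $1$. All pairwise distances in a wheel admit an $O(1)$ closed form, so the distance table costs $O(n^2)$; from it, $l_s,r_s,\Delta_l(s),\Delta_r(s),\Delta(s)$ are computed in $O(n)$ each, hence implicit descriptions of $\alpha(s),\beta(s),\gamma(s)$ in $O(n^2)$ total. Each of the $O(n)$ recursive steps (Steps~2--5) is a minimum over $O(n)$ candidate predecessors $s'$, so filling the tables $F,F'$ costs $O(n^2)$; the Initial and Final Steps cost $O(n)$. Repeating over the $|V(P_a)|\le n-1$ choices of anchor gives $O(n^3)$. Finally, wheels with at most $12$ connectors are $k$-edge-augmented trees with $k\le 12$ and are handled in polynomial time by Theorem~\ref{th:ktree}, so the standing assumption of at least $13$ connectors (whence $n\ge 13$) loses no generality.

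The main obstacle is the correctness bookkeeping in the inductive step and the Final Step: one must check that the two state flags (primed vs.\ unprimed, left vs.\ right) together with the nesting $\beta(s),\gamma(s)\subseteq\alpha(s)$ capture \emph{exactly} the observer configurations permitted by Theorem~\ref{nwhl}---no feasible DRS missed, no infeasible set admitted---paying particular attention to the cyclic seam through the anchor and to degenerate cases such as empty bad pairs, $\Delta$-values equal to $\infty$, and $s=1^-$. Each such case has to be matched against the corresponding clause of Observation~\ref{ob4}.
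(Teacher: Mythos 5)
Your proposal is correct and follows essentially the same route as the paper: it relies on the characterization in Theorem~\ref{nwhl} and Observation~\ref{ob4}, anchors an observer in the shortest close-path $P_a$, runs the same dynamic program over the states $F,F'$ with the left/right and primed/unprimed flags and the sets $\alpha,\beta,\gamma$, closes the cyclic seam in the Final Step, and accounts $O(n^2)$ per anchor times $O(n)$ anchors for the $O(n^3)$ bound. This matches the paper's own justification, which is precisely the algorithm description and the analysis preceding the theorem.
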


\end{document}